\documentclass[11pt,leqno,titlepage]{amsart}
\usepackage{amsmath}
\usepackage{amssymb}
\usepackage{enumerate}

\textwidth 6.3in \textheight 9.09in \topmargin -0.5in
\parskip0cm
\oddsidemargin0cm \evensidemargin0cm

\newtheorem{theorem}{Theorem}
\newtheorem{lemma}{Lemma}
\newtheorem{corollary}{Corollary}

\newtheorem{proposition}{Proposition}

\newcounter{claim_nb}[theorem]
\setcounter{claim_nb}{0}

\def\ba{\begin{array}}
\def\ea{\end{array}}
\def\beq{\begin{equation}}
\def\eeq{\end{equation}}
\def\bea{\begin{eqnarray}}
\def\eea{\end{eqnarray}}
\def\beann{\begin{eqnarray*}}
\def\eeann{\end{eqnarray*}}

\def\R{\mathbb{R}}
\def\S{\mathbb{S}}
\def\Z{\mathbb{Z}}

\def\cN{\mathcal{N}}
\def\cP{\mathcal{P}}

\begin{document}

\title[Approximately Minimizing Makespan and Flow-Time]{Worst-Case Performance Analysis
of some Approximation Algorithms for Minimizing Makespan
and Flow-Time}

\author{Peruvemba Sundaram Ravi, Levent Tun\c{c}el, Michael Huang}
\thanks{Peruvemba Sundaram Ravi ({\bf Corresponding Author}):
Operations and Decision Sciences,
School of Business and Economics, Wilfrid Laurier University,
Waterloo, Ontario N2L 3C5, Canada
(e-mail: pravi@wlu.ca, phone: (519)884-0710 ext.2674, fax: (519)884-0201)\\
Levent Tun\c{c}el: Department of Combinatorics and Optimization, Faculty
of Mathematics, University of Waterloo, Waterloo, Ontario N2L 3G1,
Canada (e-mail: ltuncel@math.uwaterloo.ca)\\
Michael Huang: (e-mail:Michael-Huang@u.northwestern.edu)}

\date{December 11, 2013}

\begin{abstract}
In 1976, Coffman and Sethi conjectured that a natural extension of LPT list scheduling to the bicriteria
scheduling problem of minimizing makespan over flowtime optimal schedules, called LD algorithm, has a simple
worst-case performance bound: $\frac{5m-2}{4m-1}$, where $m$ is the number of machines.
We study structure of potential minimal counterexamples to this conjecture
and prove that the conjecture holds for the cases (i) $n > 5m$, (ii) $m=2$, (iii) $m=3$, and (iv) $m\geq 4$, $n \leq 3m$,
where $n$ is the number of jobs.
We further conclude that to verify the conjecture, it suffices to analyze the following case:
for every $m \geq 4$, $n \in \{4m, 5m\}$.
\end{abstract}

\keywords{parallel identical machines, makespan,
total completion time, approximation algorithms for scheduling}

\maketitle

\section{Introduction}

Various performance criteria may be used to schedule $n$ independent jobs on $m$ parallel identical machines. In applications where work-in-process
inventory is very highly valued (resulting in potentially huge work-in-process inventory costs), or in applications where the total time spent in the system
must be minimized, a fundamental objective function of choice would be
the minimization of total flow time (or equivalently mean flow time).  A schedule that minimizes mean flow time is termed a \emph{flowtime-optimal schedule}. The mean flow time for a set of jobs on a set of parallel identical
machines can be readily minimized by using the SPT (Shortest Processing Time) rule. The SPT rule generates a very large number (at least $\left({m!}\right)^{\lfloor n/m \rfloor}$) of flowtime-optimal schedules.
Another fundamental objective function of choice, as a secondary objective, would be the minimization of makespan. Minimizing the makespan has been shown to be an $\cN\cP$-hard problem. Various versions of this problem have been
studied by several researchers.

Graham (1966) examines the problem of makespan minimization for a set of jobs with a partial order (precedence constraints) on a set of parallel identical
machines. Graham (1969) as well as Coffman and Sethi (1976b) develop bounds for solutions obtained by the application of the LPT (Longest Processing Time) rule to the makespan minimization problem with no precedence constraints.The
bin-packing problem may be regarded as the dual problem to the makespan minimization problem. In the bin-packing problem, a set of items of---in general---unequal
sizes $\ell(T_i)$ must be packed into a set of $m$ bins, each of a given capacity $C$. The objective
of the bin-packing problem is to minimize the number of bins $m$ used for the packing.In the FFD (First Fit Decreasing) algorithm (Johnson (1973) and Johnson et al. (1974), the bins are assigned indices $1$ through $m$. The largest
remaining item is assigned to the lowest-indexed bin into which it can fit. The FFD solution is the smallest value of $m$ for which the algorithm is able to pack the entire set of items into m bins. Garey and Johnson (1981) and
Coffman, Garey and Johnson (1983) have proposed various alternatives to the
FFD algorithm for the bin-packing problem. Further, Coffman, Garey, and Johnson (1978) propose an algorithm for makespan minimization, the Multifit algorithm, that is based on the FFD algorithm for the bin-packing problem, and
obtain a bound for the performance of this algorithm. de la Vega and Lueker (1981) propose a linear-time approximation scheme for the bin packing
problem.  Friesen (1984),
Yue, Kellerer and Yu(1988), Yue (1990), and Cao(1995) also propose bounds for the performance of the Multifit algorithm. Dosa (2000 and 2001) proposes generalized versions of the LPT and Multifit methods. Chang and Hwang (1999)
extend the Multifit algorithm to a situation in which different processors have different starting times. Hochbaum and Shmoys (1987) propose a PTAS (Polynomial Time Approximation Scheme) for the makespan minimization problem for
parallel identical machines. Ho and Wong (1995) propose an $O(2^n)$ algorithm to find the optimal solution for a two-machine version of this problem.

The problem of selecting the schedule with the smallest makespan among the class of all flowtime-optimal schedules is known to be $\cN\cP$-hard (Bruno, Coffman and Sethi, 1974). We term this problem the
\emph{FM (Flowtime-Makespan) problem}. Coffman and Yannakakis (1984) study a more general version of the FM problem. They study the problem of permuting the elements within the columns of an $m$-by-$n$ matrix so as to minimize its
maximum row sum. Eck and Pinedo (1993) propose a new algorithm for the FM problem. Their algorithm, the LPT* algorithm, is a modified version of Graham's LPT algorithm. Their algorithm requires the construction of a new problem
instance by replacing every processing time in a rank by the difference between it and the smallest processing time in that rank. For the two-machine case, the authors obtain a worst-case bound of $28/27$ on the makespan ratio for
the LPT* algorithm. Gupta and Ho (2001) build on the procedure developed by Ho and Wong for makespan minimization to develop three algorithms - an algorithm to find the optimal solution and two heuristic procedures - for the
two-machine FM problem. Lin and Liao (2004) extend the procedures developed by Ho and Wong and by Gupta and Ho to construct a procedure to obtain the
optimal solution to the FM problem in $O\left(\left(m!\right)^{n/m}\right)$ time.

Conway, Maxwell and Miller (1967), in their seminal book, develop the notion of ranks for the FM problem. A schedule is flowtime-optimal if jobs are assigned in decreasing order of \textit{ranks}, with the jobs in rank $1$ being
assigned last. If $n$ is the number of jobs and $m$ is the number of machines, we may assume that $m$ divides $n$. (If it does not, we add $(\left\lceil n/m\right\rceil*m - n)$ dummy jobs with zero processing times.) If we assume
that the jobs are numbered in nonincreasing order of processing times, with job $1$ having the largest processing time, the set of jobs belonging to rank $r$ are the following: ${(r-1)m + 1, (r-1)m + 2,\cdots, (r - 1)m + m}$.
Coffman and Sethi (1976a) propose two approximation algorithms for the FM problem. In the LI algorithm, ranks are assigned in decreasing order, starting with the rank containing the $m$ jobs with the smallest processing times. In
the LD algorithm, ranks are assigned in increasing order, starting with the rank containing the $m$ jobs with the largest processing times. Jobs with the same rank are assigned largest-first onto distinct machines as they become
available after executing the previous ranks. In the LD algorithm, the sequence thus obtained must be reversed and all jobs in the last rank must be set to the same starting time of zero to ensure that the schedule is
flowtime-optimal. Coffman and Sethi show that the LI algorithm has a makespan ratio (ratio of the makespan to the optimal makespan) with a worst-case bound that is equal to $(5m-4)/(4m-3)$.

\bea
\label{Q1}
\begin{tabular}{|c|}\hline
\\
\parbox{4in}{Coffman and Sethi (1976) conjecture that the LD algorithm has a makespan ratio with a worst-case bound equal to
$$\frac{5m-2}{4m-1}. \,\,\,\,\,\,\,\,\,\,\,\,\,\,\,\,\,\,\,\,\,\,\,\,\,\,\,\,\,\,\,\,\,\,\,\,\,\,\,\,\,\,\,\,\,\,\,$$}\\
\\ \hline
\end{tabular}
\eea

\vspace{0.2cm}

The next example shows that the above conjectured ratio cannot be improved
for any $m \geq 2$.
For $m \geq 2$, let $n:=3m$ and consider
\[
p_j := \left\{\begin{array}{cl} 0 & \mbox{ for } j \in \{1,2,
\ldots,m-1\}\\
m & \mbox{ for } j =m\\
(j-1) & \mbox{ for } j \in \{m+1,m+2, \ldots, 2m\}\\
(j-2) & \mbox{ for } j \in \{2m+1, 2m+2, \ldots, 3m\}.
\end{array}
\right.
\]
It is easy to verify that the ratio of the objective value of an LD
schedule to the optimal objective value is $\frac{5m-2}{4m-1}$.

Our approach in obtaining a proof of Coffman and Sethi's conjectured bound is to identify properties of a hypothesized minimal counterexample to the conjecture.  We utilize three techniques:  (1) We show that there always exists a
minimal counterexample with integer processing times (even if we allow irrational data).
(2) Using integrality of the data and the other properties of the minimal counterexamples,
we construct one or more smaller problem instances by subtracting nonnegative integers from
the number of machines, the number of jobs, or the integer-valued processing times.
We then write constraints that capture the fact that these smaller problem instances must satisfy the conjecture.
We then prove that minimal counterexamples must have a very small number of ranks (say, three or four or five).  (3) The final stage of the proof is to verify that none of
these small instances (with a small number of ranks) can be counterexamples.  To establish this last bit, we introduce another
approach which is to treat processing times of jobs as
unknown variables and show how to set up the case analysis with a set of finitely many LP problems.  By solving these finitely many LP problems,
whose optimal objective function values can be certified (and easily verified) using primal and dual optimal solutions, we verify the conjecture for these small instances.
Our approach is based on general techniques and may be applicable to other combinatorial optimization problems.

The rest of this paper is organized as follows. Section 2 contains a description of the LI and LD algorithms. In Section 3, the notion of minimality that is used in this paper is defined. In Section 4, we discuss an LP-based
approach that treats problem parameters as variables. We use this approach to prove that the Coffman-Sethi conjecture holds for problem instances with two machines and for problem instances with three machines and three ranks
(this latter fact becomes a corollary of a theorem from in Section 7 and hence has two independent proofs). In Section 5, we state a result that implies that if the Coffman-Sethi conjecture is false, and even if processing times
are permitted to be irrational numbers, there must exist a counterexample to the conjecture with integer processing times. In Section 6, we demonstrate the effectiveness of our conjecture-proving procedure by applying it to a
simpler version of the LD algorithm. In Section 7, we derive several properties of a hypothesized minimal counterexample to the Coffman-Sethi conjecture. These include an upper bound on the number of ranks. Section 8 contains
concluding remarks and outlines possible extensions and directions for future research.

\section{Algorithms for Problem FM}

Let $p_j$ denote the processing time of job $j$, where the jobs are numbered in nonincreasing order of processing times. Thus $p_j \geq p_{j+1}$. The set of jobs belonging to rank $r$ are the following:
\[
{(r-1)m + 1, (r-1)m + 2,\cdots, (r - 1)m + m}.
\]

Any schedule in which all rank $(r+1)$ jobs are started before all rank $r$ jobs (where $r \in \{ 1,2,\cdots, (n/m) - 1\}$) is said to satisfy the \emph{rank restriction} or \emph{rank constraint}.

Any schedule in which all rank $(r+1)$ jobs are started before all rank $r$ jobs (where $r \in \{1,2,\cdots, (n/m) - 1\}$), there is no idle time between successive jobs assigned to the same machine, and all rank $n/m$ jobs start
at time $0$, is termed a \emph{flowtime-optimal schedule}. Let $\lambda_{r}$ and $\mu_{r}$ denote the largest and smallest processing times respectively in rank $r$. Note that
\bea \lambda_1 \geq \mu_1 \geq \lambda_2 \geq \mu_2 \geq \ldots \geq \lambda_{k-1} \geq \mu_{k-1} \geq \lambda_k \geq \mu_k \geq 0. \eea

The profile of a schedule after rank $r$ is defined as the sorted set of completion times on $m$ machines after rank $r$. If the jobs in $r$ ranks (out of a total of $k$) have been assigned to machines, and if the jobs in the
remaining ranks have not yet been assigned to machines, the profile after rank $r$ is termed the current profile. We let $a^{(r)} \in \Z^m: a_1^{(r)} \geq a_2^{(r)} \geq \cdots \geq a_m^{(r)}$ denote the current profile. Note that
$a_{i}(\ell)$ is the $i$th largest completion time after rank $\ell$.

\subsection{LI (Coffman and Sethi, 1976a)}
The LI algorithm starts with rank $k$, the rank containing the $m$ jobs with the
smallest processing times, and works its way through ranks $k, k-1, \ldots, 2, 1$. The schedule generated by the algorithm is thus a flowtime-optimal schedule.

The algorithm works as follows: Schedule the ranks in the following order:
\[
k, k-1, \ldots, 2,1.
\]
Let $a \in \Z^m$: \[ a_1 \geq a_2 \geq \cdots \geq a_m
\]
denote the current profile. Schedule the jobs in the next rank so that the largest processing time is matched with $a_m$, second largest with $a_{m-1}$, etc., and the smallest processing time
is matched with $a_1$. The algorithm moves from rank $k$ to rank $1$, with jobs within a rank being assigned based on the Longest Processing Time criterion.  Coffman and Sethi prove that
LI algorithm has performance ratio $\frac{5m-4}{4m-3}$ and that this ratio is tight for LI.

\subsection{LD (Coffman and Sethi, 1976a)}

The LD algorithm starts with rank $1$, the rank containing the $m$ jobs with the largest processing times, and works its way through ranks $1, 2, \ldots, k-1, k$. The sequence of jobs on each machine is then reversed to make it a
flowtime-optimal schedule.

The algorithm works as follows: Schedule the ranks in the following order:
\[
1, 2 , \ldots, k-1, k.
\]
Let $a \in \Z^m$: \[ a_1 \geq a_2 \geq \cdots \geq a_m
\]
denote the current profile. Schedule the jobs in the next rank so that the largest processing time is matched with $a_m$, second largest with $a_{m-1}$, etc., and the smallest processing time is matched with $a_1$. After all the
jobs are scheduled, reverse the schedule and left-justify it.

Note that LD algorithm is more naturally analogous to LPT than LI is.

\section{A definition of minimality}

If an instance of the FM problem with $m$ machines and $k$ ranks has fewer than $mk$ jobs, increasing the number of jobs to $mk$ by including up to $m-1$ jobs with zero processing time does not result in any change in the total
flowtime or the optimal makespan or the makespan of an LD schedule, though it may change the set of jobs allocated to each rank.  From this observation, it follows that, for the purpose of proving/disproving conjectures regarding
worst-case makespan ratios for algorithms for the problem FM, only problem instances with $mk$ jobs need to be considered.  Thus, from now on,
we assume the property
\[
\textup{(Property.1)} \,\,\,\,\,\,\,\,\,\, n=mk.
\]

We define the ordered set of processing times $P$ for a scheduling problem instance with $m$ machines and $k$ ranks to consist of elements equal to the processing times of these $mk$ jobs arranged in nonincreasing order. We let
$P(j)$ refer to the $j_{th}$ element of $P$. Thus, $P(j) \geq P(j+1)$ for $j \in \{1, \ldots, mk-1\}$.

For the problem of characterizing minimal counterexamples to the Coffman-Sethi conjecture, minimality will be defined based on the following criteria: the number of machines, the number of ranks,
and the set of processing times. A minimal counterexample is defined to be a counterexample with $k$ ranks, $m$ machines, and a set of processing times $P_{1}$ for which there does not exist another counterexample with one of the
following
(in a hierarchical order):\newline
(i) number of ranks fewer than $k$  \newline
(ii) number of ranks equal to $k$ and number of machines fewer than $m$ \newline
(iii) $k$ ranks, $m$ machines, and fewer jobs with nonzero processing times \newline
(iv) $k$ ranks, $m$ machines, the same number of jobs with nonzero processsing times,
and a worse approximation ratio \newline
(v) [only for integer data] $k$ ranks, $m$ machines, the same number of jobs with nonzero processing times,
the same approximation ratio, and a set of processing times $P_2$ satisfying\\
$\sum_{j=1}^n P_2(j) < \sum_{j=1}^n P_1(j)$ \newline

In the latter parts of the paper, we will restrict our attention purely to integer data.
There, we will use the minimality criterion (v).  Note that due to integrality of the
processing times, we will have a normalization of the data
(the smallest nonzero processing time is at least 1).

Throughout this paper, we will isolate many useful properties of a minimal counterexample. Note that
we may assume that in a minimal counterexample,
\[
\textup{(Property.2)} \,\,\,\,\,\,\,\,\, \mu_r=\lambda_{r+1}, \forall r \in \{1,2, \ldots, k-1\} \mbox{ and } \mu_k=0.
\]
If the above property fails, then either $\mu_k >0$ (in this case, we subtract $\mu_k$ from the processing time
of every job in rank $k$, the new instance is also a counterexample to the conjecture with the same number of machines,
ranks and fewer jobs with nonzero processing times, a contradiction to the minimality of the original instance)
or there exists $r \in \{1,2, \ldots, k-1\}$ such that $\mu_r > \lambda_{r+1}$ (in this case,  we subtract
$\left(\mu_r-\lambda_{r+1}\right)$ from the processing time of every job in rank $r$, the new instance
is also a counterexample to the conjecture with the same number of machines,
ranks, the number of jobs with nonzero processing times, and with a worse approximation
ratio, a contradiction to the minimality of the original instance).

We will show that, if the Coffman-Sethi conjecture is false,
then there must exist a counterexample with integer processing times. Further, if the conjecture is false,
then there must exist a counterexample with integer processing times and a rectangular optimal schedule.
Therefore, we define four types of problem instances, counterexamples and minimal counterexamples to the Coffman-Sethi
conjecture:
 \begin{itemize}
 \item
 A problem instance of \emph{Type A} is one that is not required to have either integer processing times or a rectangular optimal schedule.
 \item
 A problem instance of \emph{Type I} is one with integer processing times.
 \item
 A problem instance of \emph{Type R} is one with a rectangular optimal schedule.
 \item
 A problem instance of \emph{Type IR} is one with integer processing times and a rectangular optimal schedule.
 \end{itemize}

 A counterexample of a particular type (A, I, R, or IR) is a problem instance of that type that violates the Coffman-Sethi conjecture. A \emph{minimal counterexample} of a particular type is a counterexample of that type for which
 there does not exist a smaller counterexample (based on the notion of minimality defined in this section) of the same type. In Section 5, we will show that if the Coffman-Sethi conjecture is false, then there exists a minimal
 counterexample of Type I, as well as of Type IR.

\section{An LP-based approach that treats problem parameters as variables}

We present an approach which may be utilized to prove or disprove any conjecture about a bound on the ratio of the solution generated by an approximation algorithm to the optimal solution for a maximization problem. In this
approach, the parameters of the problem are treated as variables. The problem of maximizing the ratio is formulated as a linear program. If bounds can be obtained on the size of the problem, only a finite number of linear programs
need to be examined. If none of the solutions to the linear programs violates the conjecture, the conjecture is clearly valid.
Indeed, the optimal objective value of each linear programming problem can be verified by checking a pair of optimal solutions
to the primal and the dual problem at hand.

In this section, this approach is used to show that the Coffman-Sethi conjecture is valid for the $m = 2$ and the $m =3, k = 3$ cases, where $m$ denotes the number of machines and $k$ denotes the number of ranks. In subsequent
sections, bounds are obtained on the number of ranks. In principle, this LP-based approach can be used to obtain a computer-based proof of the conjecture or a minimal counterexample for any finite number of machines.
				
\begin{lemma}
\label{lem:4.1}
If the Coffman-Sethi conjecture is false for $m=2$ then there exists a minimal counterexample of Type A with two machines and three ranks.
\end{lemma}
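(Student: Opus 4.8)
The plan is to collapse the two-machine problem to a one-dimensional recursion, settle the cases $k\le 2$ by direct evaluation, and then show that any counterexample with $k\ge 4$ ranks produces a strictly smaller one, so that the fewest-ranks minimal counterexample guaranteed by the hypothesis must sit at exactly three ranks.

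First I would fix a minimal Type~A counterexample; by the reductions already recorded it satisfies Property~1 ($n=2k$) and Property~2 ($\mu_r=\lambda_{r+1}$, $\mu_k=0$). Writing $\delta_r:=\lambda_r-\mu_r\ge 0$ for the gap inside rank $r$ and $S:=\sum_r(\lambda_r+\mu_r)$ for the total load, I would record the two closed forms that make the $m=2$ case transparent. Because every flowtime-optimal schedule assigns one job of each rank to each machine, its two machine loads differ by $\sum_r\epsilon_r\delta_r$ for some signs $\epsilon_r\in\{\pm1\}$, so the optimal makespan is $\tfrac12(S+D^\ast)$ with $D^\ast:=\min_{\epsilon\in\{\pm1\}^k}\bigl|\sum_r\epsilon_r\delta_r\bigr|$; and the greedy LD assignment obeys the reflection recursion $d_0=0$, $d_r=\lvert\delta_r-d_{r-1}\rvert$, giving LD makespan $\tfrac12(S+d_k)$. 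Property~2 further collapses the load to the weighted sum $S=\sum_{r=1}^k(2r-1)\delta_r$. Hence the instance is a counterexample precisely when $\tfrac{S+d_k}{S+D^\ast}>\tfrac{5m-2}{4m-1}=\tfrac87$, i.e.
\[
S \;<\; 7\,d_k-8\,D^\ast ,
\]
and since $D^\ast\le d_k$ always, everything reduces to how far LD's reflection value $d_k$ can exceed the balanced optimum $D^\ast$.

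For $k\le 2$ a one-line computation gives $d_k=D^\ast$ (for $k=1$, $d_1=\delta_1=D^\ast$; for $k=2$, $d_2=\lvert\delta_2-\delta_1\rvert=D^\ast$), so the ratio is $1$ and no counterexample exists; thus $k\ge 3$. The substance of the lemma is to rule out $k\ge 4$. Two deletions are clean. If some $\delta_{r_0}=0$, then deleting rank $r_0$ leaves $d_k$ and $D^\ast$ unchanged while the reindexed load satisfies $S'=S-2\sum_{r>r_0}\delta_r\le S$; since $\tfrac{x+d_k}{x+D^\ast}$ is decreasing in $x$ (as $d_k>D^\ast$), the smaller instance still violates the bound, contradicting minimality---so in a minimal counterexample every $\delta_r>0$. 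Likewise, if the reflection walk resets, $d_{r_0}=0$ for some interior $r_0$, the problem splits at $r_0$ and one would argue that the tail block, carried at the strictly smaller weights $(2(r-r_0)-1)$, again violates the bound with fewer ranks.

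The main obstacle is the remaining regime: all $\delta_r>0$ and no interior reset, with $k\ge 4$. Here the reflection recursion realizes $d_k$ as $\lvert\sum_r\sigma_r\delta_r\rvert$ for one fixed sign pattern $\sigma$, and the inequality $S=\sum_r(2r-1)\delta_r<7d_k-8D^\ast\le 7\sum_r\delta_r$ forces the gaps on ranks $r\ge 4$ (whose weights $2r-1\ge 7$ already exceed the available slack) to be severely constrained. I expect to finish by a bounded case analysis on the sign pattern $\sigma$ of the reflection walk together with the optimal partition sign pattern for $D^\ast$: each fixed pair of patterns turns the ratio-maximization into one of finitely many linear programs in the variables $\delta_1,\dots,\delta_k$, and the goal is to show that for $k\ge 4$ every such program either is infeasible under $S<7d_k-8D^\ast$ or exhibits a deletable rank that lowers $k$. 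Making this case analysis finite and checkable---rather than the easy cases $k\le 2$ or the clean zero-gap and reset deletions---is the crux, and it dovetails with the finitely-many-LP method this section is built to exploit.
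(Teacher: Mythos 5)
Your reduction framework is sound as far as it goes: the LD reflection recursion $d_r=\lvert\delta_r-d_{r-1}\rvert$, the identities $t_{LD}=\tfrac12(S+d_k)$ and $t^*=\tfrac12(S+D^*)$, the counterexample condition $S<7d_k-8D^*$, the $k\le 2$ computation, and the zero-gap deletion are all correct for $m=2$. But the proposal does not prove the lemma, because the entire content of the statement is ruling out $k\ge 4$, and that is precisely the part you leave as a plan (``I expect to finish by a bounded case analysis\dots''). The plan has no mechanism for bounding $k$: the number of sign patterns governing $d_k$ and $D^*$ grows exponentially with $k$, so ``finitely many LPs'' only makes sense for each fixed $k$, and you would still need a uniform argument covering all $k\ge 4$ --- which is the lemma itself. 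Your split-at-reset deletion also has a directional flaw: when $d_{r_0}=0$ the final walk value $d_k$ is preserved in the tail, but $D^*$ of the tail can strictly exceed $D^*$ of the full instance (take $\delta=(1,1,5,3)$ with $r_0=2$: the full instance has $D^*=\lvert 1+1+3-5\rvert=0$, while the tail $(5,3)$ has $D^*=2$), so $S'\le S<7d_k-8D^*$ does not imply $S'<7d_k-8D'^*$, and the tail need not be a counterexample.

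The paper closes exactly this gap with a short counting argument that your write-up never reproduces. By minimality, the LD makespan must be attained on the machine receiving the rank-$k$ job of length $\lambda_k$ (otherwise rank $k$ could be deleted), so both machines are loaded to at least $t_{LD}-\lambda_k$ after rank $k-1$, whence $t^*\ge t_{LD}-\lambda_k/2$; combined with $t_{LD}/t^*>8/7$ this forces $t^*<\tfrac72\lambda_k$. On the other hand, Property 2 gives $\mu_\ell\ge\lambda_k$ for all $\ell\le k-1$, so the machine carrying the $\lambda_k$ job has load at least $k\lambda_k$ in any flowtime-optimal schedule, i.e.\ $t^*\ge k\lambda_k$. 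Together these give $k<7/2$, hence $k=3$ since LD is optimal for $k\le 2$. In your notation the two inequalities read $D^*\ge d_k-\delta_k$ (once the last-rank condition is enforced by minimality) and $S+D^*\ge 2k\delta_k$; adding these two lines would complete your argument within your own framework, but as submitted the proposal is a correct reduction followed by an open-ended case analysis, not a proof.
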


\begin{proof}
Suppose that the Coffman-Sethi conjecture is false.  Take a minimal counterexample to the conjecture.
Then, by (Property.2), in rank $k$, one of the two machines has a processing time of $\lambda_{k}$ and the other machine has a processing time of $0$. Clearly, the makespan is equal to the completion time after rank $k$ on the
machine with a processing time of $\lambda_{k}$. (If this is not the case, the last rank could be deleted to obtain a problem instance with the same or larger makespan ratio, contradicting the minimality of the original instance.)
It follows that both completion times on the two machines after rank $(k-1)$ in the LD schedule are at least $\left(t_{LD} - \lambda_{k}\right)$.   This implies $t^{*} \geq t_{LD} - \lambda_{k} + \lambda_{k}/2$. In a counterexample,
$t_{LD}/t^{*} > 8/7$.  Hence,
\bea
\label{eq.lem:4.1}
t^{*} & < & 7\lambda_{k}/2.
\eea
Clearly, $\mu_{\ell} \geq \lambda_{k}$ for $\ell \in \{1,2, \ldots, k-1\}$.  Thus, $t^{*} \geq k\lambda_k$ which together with \eqref{eq.lem:4.1}
yields $k \leq 3$.  Note that for $k = 1$ and $k = 2$, the LD schedule is optimal. Therefore, the claim follows.
\end{proof}

The following approach treats the data of problem FM (job processing times) as variables. For a given value of the optimal makespan, the problem of determining the values of the processing times that result in the LD makespan being
maximized is set up as a set of linear programs. Each possible relationship between the processing times (subject to the rank restriction) results in a different linear program. The solution to each linear program is checked to
determine if it violates the Coffman-Sethi conjecture.  Note that a major advantage of LP formulations is that the optimal objective
values can be verified independent of the original computations, by a much simpler computational step (to check feasibility of primal and dual solutions and then
a comparison of their objective values).

\begin{proposition}
\label{lem:4.2}
The Coffman-Sethi conjecture holds for $m=2$. \end{proposition}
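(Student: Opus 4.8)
The plan is to combine Lemma~\ref{lem:4.1} with the LP-based technique just described. By Lemma~\ref{lem:4.1}, if the conjecture fails for $m=2$ then a minimal counterexample exists with exactly two machines and three ranks, so it suffices to rule out counterexamples with $m=2$ and $k=3$; here the conjectured bound specializes to $\frac{5m-2}{4m-1}=\frac{8}{7}$. First I would invoke (Property.2) to pin down the rank structure: writing $p_1\geq p_2\geq\cdots\geq p_6\geq 0$ for the six processing times, the equalities $\mu_1=\lambda_2$, $\mu_2=\lambda_3$, and $\mu_3=0$ force $p_3=p_2$, $p_5=p_4$, and $p_6=0$. This leaves only three free parameters, subject to the single chain $p_1\geq p_2\geq p_4\geq 0$.

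Next I would trace the LD algorithm explicitly through the three ranks. Starting from the profile $(0,0)$, rank~$1$ produces the profile $(p_1,p_2)$; matching the larger rank-$2$ job with the smaller coordinate yields the unsorted pair $(p_1+p_4,\,2p_2)$; and matching the larger rank-$3$ job $p_4$ with the smaller of these yields $t_{LD}$ as the maximum of two linear forms. Because the sort order after rank~$2$ depends on the sign of $p_1+p_4-2p_2$, and the final maximum depends on a further comparison, $t_{LD}$ is a piecewise-linear (convex) function of $(p_1,p_2,p_4)$ with only a few pieces. For the denominator I would use that every flowtime-optimal schedule assigns to each machine exactly one job from each rank, so the optimal makespan is $t^{*} = \min\max(S,\,W-S)$, where $W=p_1+2p_2+2p_4$ is the fixed total work and $S$ ranges over the finitely many achievable single-machine loads $a+b+c$ with $a\in\{p_1,p_2\}$, $b\in\{p_2,p_4\}$, and $c\in\{p_4,0\}$. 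Fixing (i) the LD piece and (ii) the assignment attaining $t^{*}$, both $t_{LD}$ and $t^{*}$ become linear, and the region on which these choices are valid is a polyhedron cut out by the corresponding ordering inequalities together with the inequalities asserting that the chosen load is optimal. On each such cell I would solve the linear program that normalizes $t^{*}=1$ and maximizes $t_{LD}$, certifying the optimal value by a matching pair of primal and dual feasible solutions, and check that it never exceeds $\frac{8}{7}$.

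The main obstacle is the denominator: $t^{*}$ is a minimum over assignments and hence a concave piecewise-linear function, so the ratio $t_{LD}/t^{*}$ is genuinely fractional and cannot be linearized without committing to which assignment is optimal on each cell. The care required is therefore twofold: encoding optimality of the selected assignment through the extra inequalities asserting that the chosen load is the one closest to $W/2$ (equivalently, that its makespan is no larger than that of each competitor), and verifying that the resulting family of cells exhausts the whole feasible cone $p_1\geq p_2\geq p_4\geq 0$. I expect the verification itself to be routine once the cells are correctly delineated, with the extremal cell reproducing the tight instance $n=3m$ exhibited after the conjecture (which attains exactly $\frac{8}{7}$); thus the LP optima should equal $\frac{8}{7}$ and never strictly exceed it, completing the proof.
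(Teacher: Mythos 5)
Your proposal follows the paper's proof essentially step for step: reduce to $k=3$ via Lemma~\ref{lem:4.1}, use (Property.2) to compress the data to three parameters $\lambda_1 \geq \lambda_2 \geq \lambda_3 \geq 0$ (your $p_1 \geq p_2 \geq p_4$), partition the parameter cone into polyhedral cells on which both $t_{LD}$ and $t^*$ are linear, normalize $t^* = 1$, and maximize $t_{LD}$ over each cell by a finite family of LPs certified by primal--dual pairs, with every optimum at most $\frac{8}{7}$ and the tight instance attaining it. The only differences are cosmetic: the paper prunes the candidate optimal schedules to a single assignment (arguing that the second and third ranks must differ between the LD and optimal schedules when the ratio exceeds $1$) where you enumerate all achievable machine loads with explicit optimality inequalities, and the paper actually executes the resulting eight LPs (Cases 1--4, subcases A/B) that you correctly identify as finite and defer as routine.
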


\begin{proof}
By Lemma \ref{lem:4.1}, we only need to consider the $k=3$ case.  For a contradiction,
suppose the conjecture is false.  Then, there exists a minimal counterexample to the
conjecture.  Moreover, using (Property.2), we
may assume that a minimal counterexample has the processing times:
$\lambda_1, \lambda_2, \lambda_2, \lambda_3, \lambda_3, 0$.  Then,
it suffices to consider only two LD schedules:
\begin{itemize}
\item
LD schedule 1: Jobs with processing times $\lambda_{1}, \lambda_{3}, 0$ on machine $1$, jobs with processing times $\lambda_{2}, \lambda_{2}, \lambda_{3}$ on machine $2$.
\item
LD schedule 2: Jobs with processing times $\lambda_{1}, \lambda_{3}, \lambda_{3}$ on machine $1$, jobs with processing times $\lambda_{2}, \lambda_{2}, 0$ on machine $2$.
\end{itemize}

For a makespan ratio $> 1$, the second and third ranks must not be the same in the LD schedule and the optimal schedule. There is only one possible optimal schedule: Jobs with processing times $\lambda_{1}, \lambda_{2}, 0$ on
machine $1$, jobs with processing times $\lambda_{2}, \lambda_{3}, \lambda_{3}$ on machine $2$.  In each of the following cases, we set the optimal makespan equal to $1$. The makespan ratio is then equal to the LD makespan. We seek
to maximize the LD makespan in each case.  There are four possible values for the LD makespan, resulting in the following four cases.

\underline{Case 1:} $t_{LD} = \lambda_{1} + \lambda_{3}$.\\
This will be true only if $\lambda_{1} \geq 2\lambda_{2}.$ So,
$\lambda_{1} \geq 2\lambda_{3}$, and we deduce
$t_{S^{*}} = \lambda_{1} + \lambda_{2}.$
This is clearly not possible.

\underline{Case 2:} $t_{LD} = 2\lambda_{2} + \lambda_{3}$.\\
This will be true only if $\lambda_{1} \leq 2\lambda_{2}$
and $\lambda_{1} + \lambda_{3} \geq 2\lambda_{2}$.

\underline{Case 2A:} $\lambda_{1} \leq 2\lambda_{3}$.\\
So, $t_{S^{*}} = \lambda_{2} + 2\lambda_{3}.$  Consider the LP problem
\[ \max \left\{2\lambda_{2} + \lambda_{3}:
\,\, \lambda_{1} \leq 2\lambda_{3}, \,\, \lambda_{2} + 2\lambda_{3} = 1,
\,\, 2\lambda_{2} \leq \lambda_{1} + \lambda_{3} \right\}.
\]			
This LP can be simplified as follows:
\[ \max \left\{2\lambda_{2} + \lambda_{3}:
\,\, 2\lambda_{2} \leq 3\lambda_{3}, \,\, \lambda_{2} + 2\lambda_{3} = 1 \right\}.
\]		
A solution is $\lambda_{2} = 3/7, \lambda_{3} = 2/7$, with objective function
value $8/7$.

\underline{Case 2B:} $\lambda_{1} \geq 2\lambda_{3}.$\\
Thus, $t_{S^{*}} = \lambda_{1} + \lambda_{2}.$
Consider the LP problem
\[
\max \left\{2\lambda_{2} + \lambda_{3}: \,\,
\lambda_{1} + \lambda_{2} = 1, \,\, \lambda_{1} \geq 2\lambda_{3}, \,\,
\lambda_{1} \leq 2\lambda_{2}, \,\, 2\lambda_{2} \leq \lambda_{1} + \lambda_{3}
\right\}.
\]			
This LP can be simplified as follows:
\[
\max \left\{2\lambda_{2} + \lambda_{3}: \,\,
\lambda_{2} \geq 1/3, \,\, \lambda_{2} + 2\lambda_{3} \leq 1, \,\,
3\lambda_{2} \leq 1 + \lambda_{3} \right\}.
\]
A solution is $\lambda_{2} = 3/7, \lambda_{3} = 2/7$, with the objective function value $8/7$.

\underline{Case 3:} $t_{LD} = \lambda_{1} + 2\lambda_{3}$.\\
This will be true only if $\lambda_{1} + \lambda_{3} \leq 2\lambda_{2}$
and $\lambda_{1} + 2\lambda_{3} \geq 2\lambda_{2}$.

\underline{Case 3A:} $\lambda_{1} \leq 2\lambda_{3}$.\\
In this case, we have $t_{S^{*}} = \lambda_{2} + 2\lambda_{3}$.  Consider the LP
problem:
\[
\max \left\{\lambda_{1} + 2\lambda_{3}: \,\,
\lambda_{1} + \lambda_{3} \leq 2\lambda_{2}, \,\,
\lambda_{1} + 2\lambda_{3} \geq 2\lambda_{2}, \,\,
\lambda_{2} + 2\lambda_{3} = 1, \,\,
2\lambda_{1} \leq 2\lambda_{3} \right\}.
\]
The constraints of the above LP problem
can be replaced by the following equivalent set of constraints:
\[\lambda_{1} + 5\lambda_{3} \leq 2, \,\, \lambda_{1} + 6\lambda_{3} \geq 2,
\,\, 2\lambda_{1} \leq 2\lambda_{3}.
\]			
A solution is $\lambda_{1} = 4/7, \lambda_{2} = 3/7, \lambda_{3} = 2/7$,
with objective function value $8/7$.

\underline{Case 3B:} $\lambda_{1} \geq 2\lambda_{3}$.\\
Thus, $t_{S^{*}} = \lambda_{1} + \lambda_{2}$.  Consider the LP
problem
\[
\max \left\{\lambda_{1} + 2\lambda_{3}: \,\,
\lambda_{1} + \lambda_{2} = 1, \,\,
\lambda_{1} \geq 2\lambda_{3}, \,\,
\lambda_{1} + \lambda_{3} \leq 2\lambda_{2}, \,\,
\lambda_{1} + 2\lambda_{3} \geq 2\lambda_{2} \right\}.
\]		
This LP problem can be simplified as follows:
\[
\max \left\{\lambda_{1} + 2\lambda_{3}: \,\,
\lambda_{1} \geq 2\lambda_{3}, \,\,
3\lambda_{1} + \lambda_{3} \leq 2, \,\,
3\lambda_{1} + 2\lambda_{3} \geq 2 \right\}.
\]		
A solution is $\lambda_{1} = 4/7, \lambda_{2} = 3/7, \lambda_{3} = 2/7$,
with the objective function value $8/7$. 				

\underline{Case 4:} $t_{LD} = 2\lambda_{2}$.\\
This will be true only if $2\lambda_{2} \geq \lambda_{1} + 2\lambda_{3}$.

\underline{Case 4A:} $\lambda_{1} \leq 2\lambda_{3}$.\\
Hence, $t_{S^{*}} = \lambda_{2} + 2\lambda_{3}$.
Consider the LP problem
\[
\max \left\{2\lambda_{2}: \,\,
\lambda_{1} \leq 2\lambda_{3}, \,\,
\lambda_{2} + 2\lambda_{3} = 1, \,\,
2\lambda_{2} \geq \lambda_{1} + 2\lambda_{3} \right\}.
\]		
The constraints of the above LP problem
can be replaced by the following equivalent set of constraints:
\[
\lambda_{1} + \lambda_{2} \leq 1, \,\,
-2\lambda_{1} + 5\lambda_{2} \geq 1.
\]		
A solution is $\lambda_{1} = 1/2, \lambda_{2} = 1/2, \lambda_{3} = 1/4$,
with the objective function value $1$.
 				
\underline{Case 4B:} $\lambda_{1} \geq 2\lambda_{3}$.\\
Then, $t_{S^{*}} = \lambda_{1} + \lambda_{2}$.  Consider the LP problem
\[
\max \left\{2\lambda_{2}: \,\,
\lambda_{1} + \lambda_{2} = 1, \,\,
\lambda_{1} \geq 2\lambda_{3}, \,\,
\lambda_{1} + 2\lambda_{3} \leq 2\lambda_{2}
\right\}.
\]					
This LP problem can be simplified as follows:
\[
\max \left\{2\lambda_{2}: \,\,
\lambda_{2} \geq \lambda_{3}, \,\,
2\lambda_{3} + \lambda_{2} \leq 1, \,\,
-2\lambda_{3} + 3\lambda_{2} \geq 1
\right\}.
\]		
A solution is $\lambda_{1} = 1/2, \lambda_{2} = 1/2, \lambda_{3} = 1/4$,
with the objective function value $1$.

We conclude that, in all cases, the ratio of the makespan of every LD schedule is at most $8/7$ times the optimum
makespan.  This is a contradiction to the existence of a counterexample.
Thus, the Coffman-Sethi conjecture holds for $m=2$, and $k=3$.  Therefore, by Lemma \ref{lem:4.1},
the Coffman-Sethi conjecture holds for $m=2$.\end{proof} 	

It follows from Graham's original analysis of LPT list scheduling that $n \leq 2m$ case is clear.

\begin{proposition}
\label{prop:2}
The Coffman-Sethi conjecture holds for every $m$ and $n$ such that $n \leq 2m$ (i.e., $k \leq 2$). \end{proposition}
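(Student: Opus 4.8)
The plan is to reduce to $k\le 2$ and prove the stronger statement that for $k\le 2$ the LD schedule is actually \emph{optimal}, so that the makespan ratio equals $1\le\frac{5m-2}{4m-1}$ (the inequality holds since $5m-2\ge 4m-1$ for all $m\ge 1$). By (Property.1) I may pad with zero-processing-time jobs to assume $n=mk$; as $n\le 2m$ forces $k\le 2$ and padding alters neither the flowtime-optimal value, nor $t^*$, nor the LD makespan, nothing is lost. The case $k=1$ is immediate: with a single rank the current profile is all zeros, LD places each of the $m$ jobs on its own machine, and $t_{LD}=\lambda_1=t^*$.

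For $k=2$ I would first record the structure of flowtime-optimal schedules: each machine receives exactly one rank-1 job and one rank-2 job, the rank-2 job run first. Writing the rank-1 jobs as $p_1\ge\cdots\ge p_m$ and the rank-2 jobs as $p_{m+1}\ge\cdots\ge p_{2m}$, the total flow time of any such schedule is $2\sum_{j=m+1}^{2m}p_j+\sum_{j=1}^{m}p_j$, which is independent of how the two ranks are matched. Hence every pairing is flowtime-optimal, and the FM-optimal makespan $t^*$ equals the minimum, over all bijections $\sigma$ between the two ranks, of the largest machine load $p_i+p_{\sigma(i)}$.

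Next I would identify the pairing that LD produces. Running rank 1 first from the all-zero profile leaves the profile $p_1\ge\cdots\ge p_m$; LD then matches the largest remaining (rank-2) job $p_{m+1}$ with the smallest completion time $a_m=p_m$, the next with $a_{m-1}=p_{m-1}$, and so on, so machine $i$ carries the load $p_i+p_{2m+1-i}$. This is exactly the anti-sorted (largest-with-smallest) pairing.

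It then remains to show this anti-sorted pairing minimizes the maximum machine load, i.e. $t_{LD}=t^*$. I would argue by a standard exchange: if an optimal pairing fails to match the largest rank-1 job $a_1$ with the smallest rank-2 job $b_{\min}$, say $a_1$ is paired with $b_j$ and $b_{\min}$ with $a_k$, then swapping replaces the two loads $\{a_1+b_j,\,a_k+b_{\min}\}$ by $\{a_1+b_{\min},\,a_k+b_j\}$; since $a_1\ge a_k$ and $b_j\ge b_{\min}$, both new loads are at most $a_1+b_j$, so the maximum does not increase, and iterating yields the anti-sorted pairing as a minimizer. Therefore $t_{LD}=t^*$, the ratio is $1$, and the conjectured bound holds a fortiori. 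The only real content is this exchange argument, together with the bookkeeping that every one-job-per-rank assignment is genuinely flowtime-optimal; this is the rank-constrained analogue of Graham's observation that LPT is optimal once each machine receives at most two jobs, so I expect no essential obstacle.
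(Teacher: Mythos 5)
Your proposal is correct, but it takes a genuinely different route from the paper: the paper gives no argument at all for this proposition beyond the one-line remark that it ``follows from Graham's original analysis of LPT list scheduling,'' i.e.\ it implicitly uses the facts that for $n\le 2m$ the LD schedule coincides with an LPT schedule and that Graham showed LPT to be optimal (even against the \emph{unrestricted} makespan optimum) whenever each machine receives at most two jobs. You instead prove the needed optimality from scratch: you verify that for $k=2$ every one-job-per-rank pairing is flowtime-optimal (the flowtime $2\sum_{j=m+1}^{2m}p_j+\sum_{j=1}^{m}p_j$ being matching-independent), identify the LD schedule as the anti-sorted pairing $p_i+p_{2m+1-i}$, and show by the standard exchange argument that this pairing minimizes the maximum load over all bijections, so $t_{LD}=t^*$ and the ratio is $1$. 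Note that your conclusion is formally a little weaker than what the Graham citation yields---you establish optimality only relative to the rank-restricted (flowtime-optimal) class, whereas LPT for $n\le 2m$ is optimal outright---but that weaker statement is exactly what the conjecture concerns, since $t^*$ in the FM problem is by definition the minimum makespan over flowtime-optimal schedules. What your approach buys is a self-contained, fully checkable proof within the paper's own framework (including the padding reduction via Property.1 and the degenerate-profile analysis of rank 1); what the paper's approach buys is brevity, at the cost of leaving the reader to reconstruct both the LD/LPT coincidence and Graham's two-jobs-per-machine optimality argument.
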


Using the same technique as in the proof of Proposition \ref{lem:4.2} we can prove the
conjecture for $m=3$ and $k=3$.

\begin{proposition}
\label{lem:4.3}
The Coffman-Sethi conjecture holds for the $m=3$, $k=3$ case. \end{proposition}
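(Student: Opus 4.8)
The plan is to reproduce the argument of Proposition~\ref{lem:4.2} with the two-machine bookkeeping replaced by its three-machine analogue and the target ratio $8/7$ replaced by $\frac{5\cdot 3-2}{4\cdot 3-1}=\frac{13}{11}$. I would suppose, for contradiction, that the conjecture fails for $m=3$, $k=3$, and fix a minimal counterexample of Type~A. It has nine jobs, three per rank. Writing $\lambda_r,\mu_r$ for the largest and smallest processing times in rank $r$ and invoking (Property.2), I may take $\mu_1=\lambda_2$, $\mu_2=\lambda_3$, $\mu_3=0$; denoting the middle job of rank $r$ by $\nu_r$, all nine processing times are determined by the six parameters subject to $\lambda_1\ge\nu_1\ge\lambda_2\ge\nu_2\ge\lambda_3\ge\nu_3\ge 0$. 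These six numbers are the decision variables of the linear programs to come, exactly as $\lambda_1,\lambda_2,\lambda_3$ were in the $m=2$ proof (the two extra variables $\nu_1,\nu_2,\nu_3$ being the only genuinely new feature).

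Next I would trace the LD algorithm symbolically through its profiles. Starting from the profile $(0,0,0)$, rank~1 yields the profile $(\lambda_1,\nu_1,\lambda_2)$; matching rank~2 largest-to-smallest against it produces the three completion times $\lambda_1+\lambda_3$, $\nu_1+\nu_2$, and $2\lambda_2$; and rank~3 is then matched against the \emph{sorted} version of this triple. The combinatorial branching is precisely the ordering of $\{\lambda_1+\lambda_3,\ \nu_1+\nu_2,\ 2\lambda_2\}$, so I would split into the (at most six) orderings of these quantities and, within each, record $t_{LD}$ as the largest of the three post-rank-3 completion times. This is the three-machine counterpart of the ``four cases'' of Proposition~\ref{lem:4.2}: each ordering, refined by which machine carries the maximum, gives a single linear expression for $t_{LD}$, and some orderings will turn out to be infeasible and be discarded just as Case~1 was there.

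In parallel I would pin down $t^*$. Since every flowtime-optimal schedule assigns exactly one job of each rank to each machine, $t^*$ is the minimum, over the $3\times 3$ assignments obtained by permuting within each rank, of the maximum machine load; for the parametrized data only a few such assignments can be optimal, and each gives a linear expression for $t^*$. For every pairing of an LD-case with a candidate optimal schedule I would set $t^*=1$, adjoin the inequalities defining the case, the rank orderings above, and the inequalities certifying optimality of the chosen assignment, and then solve $\max t_{LD}$. As in Proposition~\ref{lem:4.2}, each program is small, and its optimal value is established rigorously by exhibiting a primal optimal point together with a matching dual solution whose feasibility and equal objective can be checked by hand. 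The claim to verify is that every one of these optima is at most $\frac{13}{11}$, contradicting the assumed counterexample and proving the proposition.

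The main obstacle is bookkeeping rather than depth. With three machines the case count grows substantially, since the ordering of the post-rank-2 triple, the choice of the maximal machine after rank~3, and the identity of the optimal assignment all multiply together, and the free parameters $\nu_1,\nu_2$ make each case slightly richer than its $m=2$ analogue. The care required is twofold: (i) to enumerate the profile orderings exhaustively, including degenerate tie configurations, so that no valid LD schedule is missed; and (ii) to verify in each regime that the assignment used to evaluate $t^*$ is genuinely optimal, since an incorrect value of $t^*$ would invalidate the normalization $t^*=1$. Once the case list is complete and each LP carries its primal--dual certificate, confirming the bound $\frac{13}{11}$ in every case is purely mechanical.
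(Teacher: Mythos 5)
Your proposal follows essentially the same route as the paper's own proof (given in the appendix): the paper likewise treats the processing times $\lambda_1,\alpha_1,\lambda_2,\alpha_2,\lambda_3$ (your $\nu_r$ are its $\alpha_r$) as LP variables after applying (Property.2), traces LD to the post-rank-2 completion times $\lambda_1+\lambda_3$, $\alpha_1+\alpha_2$, $2\lambda_2$, branches on their orderings and on the candidate optimal configurations, normalizes $t_{S^*}=1$, and solves the resulting finite family of small LPs, finding the ratio at most $13/11$ in every case. The only cosmetic difference is that the paper additionally restricts the middle rank-3 job to be either $\lambda_3$ or $0$, slightly trimming the case list you would enumerate.
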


A proof of the above proposition is provided in the appendix.

\section{Sufficient conditions for the existence of a minimal counterexample of Type I}

Most scheduling problems are considered in the context of the Turing machine model of computation and, as a result, the data are assumed to be drawn from the rationals. In our problem FM, this would mean that $p_j \in \mathbb{Q},
\forall j \in \{1,2,\ldots,n\}$. Below, we prove that
a counterexample to a conjectured makespan ratio with irrational processing times could exist only if there existed a counterexample with rational processing times.

Let $p \in \R^{n}_+$ denote the set of processing times for an instance $E$ of the problem FM. Let $p = (p_1, p_2, \ldots, p_n)$. Let $t^*_p$ denote the optimal value for instance $E$.

An instance of the FM problem is defined by an integer $m$ denoting the number of machines and a set of processing times $(p_1, p_2,\ldots, p_n)$. Therefore, the domain of FM is a  subset of $\Z_+ \times \R^n_+$. Let $FM(m,n)$
denote the FM problem with a fixed value of $m$ and a fixed value of $n$. An instance of $FM(m,n)$ is defined by $(m,n; p_1, p_2, \ldots, p_n)$. It follows that the domain of $FM(m,n)$ is a subset of $\Z^2_+ \times \R^n_+$.
We denote the domain by $\emph{G}$ and we describe below more general results.
For the details and the proofs of the following two results, see Ravi (2010).

\begin{proposition} \label{TYPE_I} For every algorithm $ALG$ for problem FM(m,n) that produces a feasible solution $\S_{ALG}$ whose makespan $t_{{ALG}}: \emph{G} \rightarrow \R_+$ is a continuous function at every point in
$\emph{G}$ and for a conjecture which states that the makespan is no greater than $\emph{f}(m). t^*$, where $\emph{f}: \Z_{+} \rightarrow \R_{+}$, the following must hold: If there exists a counterexample of Type A to a conjectured
$t_{ALG}/t^*$ ratio, then there exists a counterexample of Type I.
\end{proposition}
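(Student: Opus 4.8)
The plan is to exhibit the set of counterexamples as an open cone in the space of processing-time data and then to observe that such a set, once nonempty, must contain a point with integer coordinates. Fix $m$ and $n$, regard an instance of $FM(m,n)$ as a point $p=(p_1,\dots,p_n)\in\R^n_+$ (with $m,n$ held fixed, continuity is understood in these variables), and define
\[
\phi(p):=t_{ALG}(p)-f(m)\,t^*(p).
\]
A counterexample of Type A is precisely a point $p^*\in\R^n_+$ with $\phi(p^*)>0$, and a counterexample of Type I is such a point lying in $\Z^n_+$; so it suffices to show that if $\{p\in\R^n_+:\phi(p)>0\}$ is nonempty then it meets $\Z^n_+$. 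I would prove this by establishing two properties of $\phi$: that it is continuous, and that it is positively homogeneous of degree one, i.e.\ $\phi(\lambda p)=\lambda\,\phi(p)$ for every $\lambda>0$.

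Continuity of $t_{ALG}$ is the hypothesis of the proposition, so the work is to show that $t^*$ is continuous on $G$. For a fixed partition of the jobs into ranks (determined by sorting $p$ into nonincreasing order), every flowtime-optimal schedule corresponds to one of finitely many assignments of the $m$ jobs of each rank to the $m$ machines; the makespan of such a schedule is $\max_i(\text{load of machine } i)$, a maximum of sums of the $p_j$, hence a continuous (indeed piecewise-linear) function of $p$. Since $t^*(p)$ is the minimum of these finitely many continuous functions, it is continuous wherever the rank partition is locally constant. It then remains to check continuity at the degenerate data where two processing times coincide and the rank partition can change; here I would argue that interchanging two equal jobs leaves both the set of achievable makespans and the optimal value unchanged, so that $t^*(p)$ has the same limit along every approach to such a point and no jump occurs. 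Granting this, $\phi=t_{ALG}-f(m)\,t^*$ is continuous, and consequently $\{\phi>0\}$ is open.

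For homogeneity, note that multiplying every processing time by a constant $\lambda>0$ preserves the nonincreasing order of the data, hence the rank structure and the entire set of flowtime-optimal assignments, while scaling each machine load by $\lambda$; thus $t^*(\lambda p)=\lambda\,t^*(p)$. The same holds for $t_{ALG}$ because the algorithms in question (and comparison-based list-scheduling algorithms generally) depend only on the relative order of the processing times and produce the identical assignment on $p$ and on $\lambda p$, so $t_{ALG}(\lambda p)=\lambda\,t_{ALG}(p)$. Hence $\phi$ is positively homogeneous and $\{\phi>0\}$ is an open cone. To finish: given a Type A counterexample $p^*$, the open set $\{\phi>0\}$ contains $p^*$, so by density of $\mathbb{Q}^n$ in $\R^n$ it contains a rational point $q\in\mathbb{Q}^n_+$; letting $D\in\Z_+$ be a common denominator of the coordinates of $q$, the point $Dq$ lies in $\Z^n_+$ and, by homogeneity, $\phi(Dq)=D\,\phi(q)>0$. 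Thus $Dq$ is a counterexample of Type I.

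I expect the main obstacle to be precisely the continuity of $t^*$ at the degenerate data, where the combinatorial type of the optimal schedule, and the rank partition itself, changes; the symmetry argument that equal jobs are interchangeable must be made carefully so as to rule out any discontinuity in the minimum. A secondary point worth flagging is that continuity of $\phi$ alone would yield only a rational counterexample; clearing denominators to reach $\Z^n_+$ genuinely requires the scale-invariance of the algorithm, which is where the structure of $ALG$ (beyond mere continuity of its makespan) must be invoked.
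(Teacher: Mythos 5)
Your argument is correct in substance, but note that the paper itself contains no proof of this proposition: it explicitly defers the details to Ravi (2010), and the argument there is the same perturbation idea you use --- continuity of $t_{ALG}$ and $t^*$ yields a nearby rational counterexample, and clearing denominators yields an integer one. Two refinements are worth recording. First, your anticipated obstacle (continuity of $t^*$ at tied data) dissolves if you work with the sorted vector $p_1 \geq p_2 \geq \cdots \geq p_n \geq 0$: on that closed cone the rank partition is fixed by index, $t^*$ is the minimum of the finitely many linear machine-load maxima over rank-respecting assignments, hence piecewise linear and continuous, and composing with the continuous sorting map handles arbitrary orderings; no separate symmetry argument at degenerate points is needed. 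Second, the point you flag as requiring structure of $ALG$ beyond continuity --- the homogeneity $t_{ALG}(\lambda p) = \lambda\, t_{ALG}(p)$ --- actually follows from the stated hypotheses: since $ALG$ outputs a feasible schedule, $t_{ALG}(\lambda p)$ lies in $\lambda \cdot \{v_1(p), \ldots, v_N(p)\}$, where the $v_i(p)$ are the makespans of the finitely many rank-respecting assignments (this set of assignments is constant along the ray through $p$), so $\lambda \mapsto t_{ALG}(\lambda p)/\lambda$ is a continuous map of the connected set $(0,\infty)$ into a finite set, hence constant. With that patch your proof establishes the proposition exactly as stated, rather than only for comparison-based algorithms such as LD.
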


\begin{corollary} \label{LD_TYPE_I} For the FM problem and the LD algorithm, the following must hold: If there exists a counterexample E to a conjectured $t_{LD}/t^*$ ratio, then there exists a counterexample EI with integer
processing times.\end{corollary}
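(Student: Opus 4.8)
The plan is to derive the corollary as a direct specialization of Proposition \ref{TYPE_I} to the case $ALG = LD$. That proposition applies to any FM-algorithm whose makespan $t_{ALG}$ is continuous on the domain $G$ and whose conjectured bound has the form $f(m)\cdot t^*$ with $f:\Z_+\to\R_+$. For LD the conjectured ratio is $\frac{5m-2}{4m-1}$, so setting $f(m):=\frac{5m-2}{4m-1}$ — a function of $m$ alone — puts the Coffman--Sethi conjecture in exactly the required form. Thus, once the two structural hypotheses are verified, namely that LD returns a feasible flowtime-optimal schedule and that $t_{LD}:G\to\R_+$ is continuous, the implication ``Type A counterexample $\Rightarrow$ Type I counterexample'' follows verbatim, and the resulting integer-data instance is the claimed $EI$.

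Feasibility is immediate from the description of LD in Section 2: for any processing-time vector the algorithm schedules ranks $1,2,\ldots,k$ with no inserted idle time, then reverses and left-justifies, so the output satisfies the rank restriction and starts all rank-$k$ jobs at time $0$; it is therefore flowtime-optimal, and $t_{LD}$ is defined on all of $G$.

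The substantive step, which I expect to be the main obstacle, is the continuity of $t_{LD}$. The source of concern is that LD makes discrete comparisons: it sorts jobs into nonincreasing order to form the ranks, and within each rank it matches the largest remaining processing time to the currently smallest completion time $a_m$, the next largest to $a_{m-1}$, and so on; as the data vary, which job is ``largest'' or which machine is ``smallest'' can switch, so the combinatorial assignment produced by LD can change abruptly. The observation that dissolves this difficulty is that both the rank processing times and the entries of the current profile enter only through sorted order statistics, and sorting is a continuous operation. Concretely, I would prove by induction on $r$ that the sorted profile $a^{(r)}(p)$ is a continuous function of $p$: the base case $a^{(0)}\equiv 0$ is constant, and in the inductive step the rank-$r$ processing times $q_\ell$ are order statistics of $p$ (continuous in $p$), the matched completion times $a^{(r-1)}_{m+1-\ell}+q_\ell$ are sums of continuous functions, and re-sorting these $m$ continuous functions yields the continuous coordinates of $a^{(r)}(p)$. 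Equivalently, whenever a tie lets LD choose between two matchings, the resulting multiset of completion times is identical, so no jump in the makespan can occur.

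Finally, $t_{LD}(p)=a^{(k)}_1(p)$ is the largest coordinate of a continuous vector-valued map and hence continuous on $G$. With feasibility and continuity in hand, Proposition \ref{TYPE_I} applies to $LD$ with $f(m)=\frac{5m-2}{4m-1}$, giving the corollary: any Type A counterexample $E$ to the conjectured $t_{LD}/t^*$ ratio yields a counterexample $EI$ with integer processing times.
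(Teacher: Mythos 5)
Your proposal is correct and follows essentially the same route as the paper, which derives the corollary by specializing Proposition \ref{TYPE_I} to $ALG = LD$ and defers the supporting details (notably the continuity of $t_{LD}$) to Ravi (2010). Your inductive argument that the sorted profile $a^{(r)}(p)$ is continuous in $p$ --- because rank processing times are order statistics, the matched sums are continuous, re-sorting is continuous, and tie-breaking never changes the multiset of completion times --- correctly supplies exactly the verification the paper leaves to the cited thesis.
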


From this corollary, it follows that, if the Coffman-Sethi conjecture is false, then there exists a counterexample to the conjecture of Type I. If there exist one or more counterexamples of Type I, there must exist a minimal
counterexample of Type I.

\section{Best possible bound for the makespan ratio of $LD_0$ schedules}

Coffman and Sethi (1976) defined a simpler version (called $LI_0$) of their algorithm and analyzed its performance.
Let us define the corresponding analogue of $LI_0$ as follows.  An \emph{$LD_0$ schedule} is defined to be a flowtime-optimal schedule
in which the second rank is assigned largest-first. An $LD_{0}$ schedule is constructed using the following four-step process:
\begin{enumerate}[(i)]
\item
Jobs in the first rank are assigned arbitrarily to machines.
\item
Jobs in the second rank are assigned largest-first. Thus the jobs are assigned in nonincreasing order of processing times to the earliest available machine.
\item
Jobs in each remaining rank are assigned arbitrarily to machines.
\item
The jobs assigned to each machine are reversed to make the schedule flowtime-optimal.
\end{enumerate}
A \emph{worst-case ${LD}_0$ schedule} for a given set of tasks is an $LD_{0}$ schedule with the largest length among all $LD_{0}$ schedules for that set of tasks. A problem instance with a \emph{worst-case $t_{{LD}_0}/t^*$}
ratio is one with the largest $t_{{LD}_0}/t^*$ ratio among all flowtime-optimal schedules.
Let us define the $LD_{0_{worst}}$ schedule to be an $LD_{0}$ schedule with the largest makespan ratio among all $LD_{0}$ schedules for a given problem instance. Clearly, an $LD_{0_{worst}}$ schedule is constructed by assigning the
second rank largest-first and, if machine $i'$ has the largest completion time after rank $2$, assigning a job with processing time ${\lambda}_r$ to machine $i'$ for $r \in \{3, \ldots, k\}$.
Let $t_{LD_{0_{worst}}}$ denote the makespan of the $LD_{0_{worst}}$ schedule.

\begin{corollary} For the FM problem and the $LD_{0}$ algorithm, the following must hold: If there exists a counterexample E to a conjectured $t_{{LD}_{0}}/t^*$ ratio, then there exists a counterexample EI with integer processing
times.\end{corollary}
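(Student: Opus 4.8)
The plan is to deduce this corollary from the general Proposition \ref{TYPE_I} in exactly the way Corollary \ref{LD_TYPE_I} was deduced for the LD algorithm. Proposition \ref{TYPE_I} applies to any algorithm for $FM(m,n)$ whose makespan is a continuous function on the domain $G$, and to any conjecture of the form $t_{ALG} \leq f(m)\, t^*$; it converts a Type A counterexample into a Type I counterexample. Hence the whole content of the proof reduces to exhibiting the appropriate ``algorithm'' and checking that its makespan is a continuous function of the processing times $p$.

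The first step is to pin down which makespan governs the conjecture. Since an $LD_0$ schedule leaves the assignments in ranks $1,3,\ldots,k$ arbitrary, the quantity that must obey the bound is the worst-case value $t_{LD_{0_{worst}}}$, so I would take $ALG$ to be the deterministic rule that produces the $LD_{0_{worst}}$ schedule. Using the explicit description recalled just before the corollary, this makespan is $t_{LD_{0_{worst}}} = a_1^{(2)} + \sum_{r=3}^{k} \lambda_r$: the largest completion time after rank $2$ plus the largest job of each remaining rank, all piled on the machine $i'$ attaining $a_1^{(2)}$. The rank-$1$ assignment, although formally arbitrary, leaves the sorted profile after rank $2$ unchanged and so is immaterial to this value.

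The second step is continuity. Each $\lambda_r$ is the order statistic $P((r-1)m+1)$ of $p$, and $a_1^{(2)}$ results from placing the $m$ rank-$1$ jobs one per machine and then assigning the rank-$2$ jobs largest-first to the least-loaded machines; writing this out gives $a_1^{(2)} = \max_{1 \leq i \leq m}\left( P(m+1-i) + P(m+i)\right)$. Every $P(j)$ is a continuous function of $p$, a finite sum of continuous functions is continuous, and a finite maximum of continuous functions is continuous, so $t_{LD_{0_{worst}}}$ is continuous on $G$; feasibility (flowtime-optimality) holds by construction. Invoking Proposition \ref{TYPE_I} then turns the assumed counterexample $E$ into a Type I counterexample $EI$, as required.

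I expect the only delicate point to be the continuity claim at the boundaries where ties occur, where the identity of the least-loaded machine, or of $i'$, changes discretely. The resolution is that the combinatorial choices may jump but the numerical makespan does not: expressing everything through the order statistics $P(j)$, which are continuous in $p$ regardless of ties, and through maxima of linear expressions, makes continuity manifest, so no genuine obstruction arises. Everything else is a direct transcription of the LD argument.
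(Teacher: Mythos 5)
Your proposal is correct and takes essentially the same route as the paper, which simply remarks that the proof is identical to that of Corollary \ref{LD_TYPE_I}, i.e., an application of the general Proposition \ref{TYPE_I} to the (worst-case) $LD_0$ makespan. Your explicit check that $t_{LD_{0_{worst}}} = \max_{1 \le i \le m}\left(P(i) + P(2m+1-i)\right) + \sum_{r=3}^{k}\lambda_r$ is a continuous function of the processing times, with ties handled through order statistics, correctly supplies the continuity hypothesis that the paper leaves implicit (deferring details to Ravi (2010)).
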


\begin{proof} The proof of this corollary is essentially identical to the proof of the previous corollary.\end{proof}

The following lemmas are useful because they make it possible to subsequently restrict our attention to rectangular schedules for an examination of the worst-case makespan ratio for $LD_{0}$ schedules.

\begin{proposition} \label{NONDECREASING_LD_0} Increasing the processing times of one or more tasks in the first rank of a $LD_{0_{worst}}$ schedule, while leaving the remaining processing times unchanged, will result in a
$LD_{0_{worst}}$ schedule with the same or larger makespan.\end{proposition}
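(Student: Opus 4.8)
The plan is to reduce the statement to a one-line monotonicity property of the completion times after the second rank. Recall that the paper constructs the $LD_{0_{worst}}$ schedule by assigning rank $2$ largest-first and then, with $i'$ a machine having the largest completion time after rank $2$, placing the job of processing time $\lambda_r$ of each rank $r \in \{3,\ldots,k\}$ on machine $i'$. Consequently its makespan is
$$ t_{LD_{0_{worst}}} = C^{(2)}_{\max} + \sum_{r=3}^{k} \lambda_r, $$
where $C^{(2)}_{\max}$ denotes the largest completion time after rank $2$. Increasing the processing times of one or more rank-$1$ tasks keeps them among the $m$ largest jobs, so the partition of the jobs into ranks is unchanged; in particular the rank-$2$ processing times and the values $\lambda_3,\ldots,\lambda_k$ are untouched, and $\sum_{r=3}^{k}\lambda_r$ is a constant. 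Hence it suffices to show that $C^{(2)}_{\max}$ does not decrease under such an increase.

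First I would pin down the completion times after rank $2$ explicitly. Let $a_1 \geq a_2 \geq \cdots \geq a_m$ be the rank-$1$ processing times in nonincreasing order; since the schedule is flowtime-optimal, each machine receives exactly one rank-$1$ job, so these are precisely the completion times after rank $1$ (and the arbitrary rank-$1$ assignment of step (i) affects only machine labels, not this sorted profile). Let $q_1 \geq q_2 \geq \cdots \geq q_m$ be the rank-$2$ times. Assigning rank $2$ largest-first to the earliest available (smallest current completion time) machine, with one job per machine, removes a machine from contention as soon as it is used; every still-unassigned machine then carries only its rank-$1$ load. Thus $q_1,q_2,\ldots,q_m$ land on the machines with loads $a_m,a_{m-1},\ldots,a_1$, in that order, and the completion times after rank $2$ form the multiset $\{\, a_j + q_{m+1-j} : j=1,\ldots,m \,\}$, so that $C^{(2)}_{\max} = \max_{1\le j\le m}\left(a_j + q_{m+1-j}\right)$.

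The conclusion then follows from monotonicity of order statistics. Increasing one or more rank-$1$ tasks replaces the (unsorted) vector of rank-$1$ times by a pointwise-larger one; sorting is monotone, so the nonincreasing sequence $a_1,\ldots,a_m$ is replaced by $a_1',\ldots,a_m'$ with $a_j' \geq a_j$ for every $j$. The anti-sorted pairing of rank-$2$ jobs to machines depends only on relative order, hence is unchanged, and the values $q_{m+1-j}$ are unchanged as well. Therefore $a_j' + q_{m+1-j} \geq a_j + q_{m+1-j}$ for all $j$, giving $\max_j\left(a_j' + q_{m+1-j}\right) \geq \max_j\left(a_j + q_{m+1-j}\right)$; adding the constant $\sum_{r=3}^{k}\lambda_r$ shows $t_{LD_{0_{worst}}}$ does not decrease.

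I expect the only genuine obstacle to be the middle step: justifying rigorously that the ``largest-first to the earliest available machine'' rule, under the one-job-per-rank constraint forced by flowtime-optimality, produces exactly the anti-sorted pairing $a_j \leftrightarrow q_{m+1-j}$, and that this pairing is \emph{invariant} under the increase, so that no re-indexing of which rank-$2$ job lands on which machine can undo the inequality. Once that invariance is established, the order-statistic comparison and the makespan formula make the remainder immediate.
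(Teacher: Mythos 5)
Your proof is correct and takes essentially the same route as the paper's: both reduce the $LD_{0_{worst}}$ makespan to $\max_{i}\left\{\tau_{i,1}+\tau_{m-i+1,2}\right\}+\sum_{r=3}^{k}\lambda_{r}$ via the anti-sorted pairing of rank-$1$ and rank-$2$ processing times, and then conclude by monotonicity (you merely spell out the derivation of the pairing and the order-statistics step, which the paper asserts). The ``invariance'' obstacle you flag at the end is not a genuine one: once the maximum is written in terms of sorted order statistics as $\max_{j}\left(a_{j}+q_{m+1-j}\right)$, the pointwise inequality $a_{j}' \geq a_{j}$ already yields the conclusion, irrespective of which physical machine ends up with which rank-$2$ job.
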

\begin{proof} Let $a(2)$ denote the profile of the $LD_{0_{worst}}$ schedule after rank $2$. $a_{1}(2)$ is the length of the set of tasks upon completion of the second rank in an $LD_{0_{worst}}$ schedule. An $LD_{0_{worst}}$
schedule is obtained by assigning the tasks with the largest processing times in ranks $3$ through $k$, where $k$ is the last rank, to a machine with completion time $a_{1}(2)$ after rank $2$. So,
\[
t_{{LD}_{0_{worst}}} = a_{1}(2) +  \sum_{i=3}^{k}\lambda_{j}, \mbox{ with } a_{1}(2) =\max_{i\in\{1,\ldots,m\}}\left\{\tau_{i,1} + \tau_{m-i+1,2}\right\},
\]
where $\tau_{i,j}$ refers to the $i^{th}$ largest processing time in rank $j$. Leaving all $\tau_{i,2}$ values unchanged and increasing one or more $\tau_{i,1}$ values can only increase the value of $a_{1}(2)$ and thus of
$t_{{LD}_{0_{worst}}}$. \end{proof}

\begin{lemma} For every problem instance $P_{A}$ for the FM problem, there exists a problem instance $P_{B}$ with a rectangular optimal schedule and a $t_{{LD}_{0}}/t^*$ ratio that is at least the makespan ratio for
$P_{A}$.\end{lemma}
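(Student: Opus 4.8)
The plan is to start from a makespan-optimal flowtime-optimal schedule of $P_A$ and to enlarge \emph{only} the first-rank jobs so as to pad every machine up to the current optimal makespan, producing a rectangular instance $P_B$ whose optimal makespan is unchanged while its worst-case $LD_0$ makespan cannot decrease. The key structural fact I would invoke is that, under (Property.1) with $n = mk$, every flowtime-optimal schedule places exactly one job of each rank on each machine; in particular, each machine carries exactly one first-rank job. So let $S^*$ be an optimal flowtime-optimal schedule of $P_A$, with machine completion times $c_1 \geq c_2 \geq \cdots \geq c_m$ and $c_1 = t^*$. The ``gaps'' $t^* - c_i \geq 0$ are precisely what we want to absorb into the first rank.

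I would construct $P_B$ by increasing the single first-rank job sitting on machine $i$ in $S^*$ by the amount $t^* - c_i$, for each $i \in \{1,\ldots,m\}$, and leaving all other processing times unchanged. Since the first-rank jobs were already the $m$ largest, enlarging them keeps them the $m$ largest, so the rank partition of $P_B$ is identical to that of $P_A$ except that rank $1$ has grown; in particular $\lambda_1 \geq \mu_1 \geq \lambda_2 \geq \cdots$ is preserved, so flowtime-optimality of the assignment is preserved. With these increases, the schedule obtained from $S^*$ now has every machine completing exactly at $t^*$, i.e.\ it is rectangular of height $t^*$, is flowtime-optimal, and has makespan $t^*$.

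It remains to check that the optimal makespan does not increase. After the modification the total load is exactly $m t^*$, so the average load is $t^*$; hence any flowtime-optimal schedule of $P_B$ has makespan at least $t^*$, while the rectangular schedule above attains $t^*$. Therefore $t^*(P_B) = t^*(P_A)$ and, moreover, $t^*$ equals the average load, which forces every makespan-optimal schedule of $P_B$ to be rectangular. Finally, since $P_B$ is obtained from $P_A$ by increasing processing times of first-rank jobs only, Proposition~\ref{NONDECREASING_LD_0} gives $t_{{LD}_{0_{worst}}}(P_B) \geq t_{{LD}_{0_{worst}}}(P_A)$. Combining with $t^*(P_B) = t^*(P_A)$ yields
\[
\frac{t_{{LD}_{0_{worst}}}(P_B)}{t^*(P_B)} \;\geq\; \frac{t_{{LD}_{0_{worst}}}(P_A)}{t^*(P_A)},
\]
which is the desired inequality.

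The main obstacle I expect is the verification that the optimal makespan is genuinely unchanged rather than merely bounded: the enlargement of first-rank jobs raises the total load, so one must argue carefully that the padded schedule is simultaneously a feasible flowtime-optimal schedule of height $t^*$ and that the average-load bound prevents any cheaper schedule. The other point needing care is confirming that enlarging the first rank preserves both the rank partition and flowtime-optimality, so that Proposition~\ref{NONDECREASING_LD_0} applies to $P_B$ exactly as stated and the ``makespan ratio'' being compared is the worst-case $t_{{LD}_0}/t^*$ ratio in both instances.
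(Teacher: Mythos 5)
Your proposal is correct and follows essentially the same route as the paper's proof: pad the first-rank job on each machine of an optimal schedule up to the optimal makespan, note that this preserves the rank partition and flowtime-optimality, and invoke Proposition~\ref{NONDECREASING_LD_0} to conclude that $t_{{LD}_{0_{worst}}}$ cannot decrease while $t^*$ is unchanged. Your average-load argument that $t^*(P_B)=t^*(P_A)$ merely makes rigorous what the paper dismisses as ``clearly,'' so the two proofs coincide in substance.
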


\begin{proof} Consider an optimal schedule for $P_{A}$. Construct a new problem instance $P_B$ as follows: For every job in the first rank that is performed on a machine for which the completion time after rank $k$ in the optimal
schedule is less than the makespan of the optimal schedule, increase the processing time so that the completion time after rank $k$ in the optimal schedule is equal to the makespan. Clearly, the optimal schedule for the new problem
instance is a rectangular schedule with a makespan equal to the makespan of $P_{A}$. Also, the new problem instance continues to be a flowtime-optimal problem instance because the increase in processing times does not lead to a
violation of the rank restrictions. From Proposition \ref{NONDECREASING_LD_0}, the makespan of the $LD_{0_{worst}}$ schedule for the new problem instance is at least the makespan of the $LD_{0_{worst}}$ schedule for the original
problem instance. Therefore, $P_B$ has a $t_{{LD}_{0_{worst}}}/t^*$ ratio that is at least the makespan ratio for $P_{A}$. \end{proof}

\begin{corollary} \label{RECTANGULAR_LD_0} There always exists a problem instance with a worst-case $t_{{LD}_{0}}/t^*$ ratio and a rectangular optimal schedule.
\end{corollary}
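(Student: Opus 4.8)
The plan is to read off this corollary directly from the preceding lemma, which already carries essentially all of the content. Recall that, by definition, a \emph{worst-case} instance is one attaining the largest value of the ratio $t_{{LD}_0}/t^*$ among all flowtime-optimal problem instances; write $\rho$ for this maximal value. I would start by fixing any instance $P_A$ that realizes the worst case, so that $t_{{LD}_0}(P_A)/t^*(P_A) = \rho$.

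Applying the preceding lemma to $P_A$ then yields an instance $P_B$ that possesses a rectangular optimal schedule and satisfies $t_{{LD}_0}(P_B)/t^*(P_B) \geq t_{{LD}_0}(P_A)/t^*(P_A) = \rho$. Because $\rho$ is, by definition, the largest ratio any instance can achieve, we also have $t_{{LD}_0}(P_B)/t^*(P_B) \leq \rho$, and hence equality. Thus $P_B$ is itself a worst-case instance and, by its construction, carries a rectangular optimal schedule, which is exactly the assertion.

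The sole delicate point --- and the natural candidate for the main obstacle --- is the tacit assumption that the worst-case ratio is genuinely attained by some $P_A$, as opposed to being merely a supremum. In the present setting this is harmless: for fixed $m$ and fixed number of ranks $k$, the combinatorial layout of both the $LD_0$ schedule and the optimal schedule is determined by finitely many orderings of the ranks, and within each such type the ratio, normalized by setting $t^* = 1$, is maximized over a compact polytope of admissible processing-time vectors, so a maximizer does exist. Granting attainment, the two-line deduction above completes the proof with no further calculation; indeed, the lemma does all the heavy lifting, and the corollary merely repackages it as a statement about worst-case instances. The one thing I would be careful to state cleanly is the logical direction: the lemma furnishes only the inequality $\geq$, and it is the \emph{maximality} of $\rho$ that upgrades this to equality and thereby certifies $P_B$ as worst-case rather than merely near-worst-case.
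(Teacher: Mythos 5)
Your proposal is correct and is exactly the paper's (implicit) route: the paper states this corollary without proof as an immediate consequence of the preceding lemma, and your argument---apply the lemma to a worst-case instance $P_A$ and use maximality of the ratio to upgrade the lemma's inequality to equality for $P_B$---is precisely that deduction. Your additional remark on attainment of the supremum (compactness after normalizing $t^*=1$ for fixed $m$ and $k$) addresses a point the paper silently assumes in its definition of a worst-case instance, so it is a welcome extra rather than a deviation.
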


\begin{lemma} \label{CONDITIONS} There exists a problem instance with a rectangular optimal schedule and a worst-case $t_{{LD}_{0}}/t^*$ ratio in which the following hold:\\
(i) All the tasks in the second rank have a processing time equal to $\lambda_{3}$.\\
(ii) There exist one or more tasks in the first rank with a processing time equal to $\lambda_{3}$.\\
Thus, $\mu_1 = \lambda_2 = \mu_2 = \lambda_3.$
\end{lemma}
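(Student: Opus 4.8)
The plan is to start from a problem instance that attains the worst-case $t_{{LD}_0}/t^*$ ratio and has a rectangular optimal schedule; such an instance exists by Corollary \ref{RECTANGULAR_LD_0}. Fix one, normalize $t^*=1$, and (since in a rectangular flowtime-optimal schedule each machine carries exactly one job of each rank) write the rank-$r$ job on machine $j$ as $q_{r,j}$, so every machine load $\sum_r q_{r,j}$ equals $1$. I will use the identity $t_{{LD}_{0_{worst}}}=a_1(2)+\sum_{i=3}^k\lambda_i$ with $a_1(2)=\max_i\{\tau_{i,1}+\tau_{m-i+1,2}\}$ established before Proposition \ref{NONDECREASING_LD_0}, noting that $\sum_{i=3}^k\lambda_i$ depends only on ranks $3,\dots,k$. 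The argument is a two-stage deformation: first push all rank-$2$ jobs down to $\lambda_3$ (for (i)), then push the smallest rank-$1$ job down to $\lambda_3$ (for (ii)), each stage checking that the deformation keeps the schedule rectangular, keeps $t^*=1$, and does not decrease $t_{{LD}_{0_{worst}}}$; since the ratio is already worst-case it must then remain exactly worst-case.

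For (i), on every machine $j$ I would simultaneously move $q_{2,j}-\lambda_3\ge 0$ of processing time from the rank-$2$ job to the rank-$1$ job on that same machine. Each machine load is unchanged, so the schedule stays rectangular with $t^*=1$, and the rank restriction is preserved because rank-$1$ values only increase while rank-$2$ values decrease to exactly $\lambda_3$. The crucial point is the effect on $a_1(2)$: the LD$_0$ pairing (largest rank-$1$ with smallest rank-$2$) is exactly the matching that \emph{minimizes} the maximum pair-sum, so before the move $a_1(2)\le\max_j(q_{1,j}+q_{2,j})$, the value of the same-machine matching. After the move all rank-$2$ jobs equal $\lambda_3$, so the new value is $\max_j(q_{1,j}+q_{2,j}-\lambda_3)+\lambda_3=\max_j(q_{1,j}+q_{2,j})$. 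Hence $a_1(2)$ does not decrease, $t_{{LD}_{0_{worst}}}$ does not decrease, and the instance stays a worst-case rectangular instance, now with every rank-$2$ job equal to $\lambda_3$, i.e. $\lambda_2=\mu_2=\lambda_3$. This stage is clean.

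For (ii), after the first stage $a_1(2)=\lambda_1+\lambda_3$, so $t_{{LD}_{0_{worst}}}=\lambda_1+\lambda_3+\sum_{i=3}^k\lambda_i$ depends only on the rank maxima $\lambda_1,\lambda_3,\dots,\lambda_k$ and on no non-maximal rank-$1$ job. Thus to maximize $t_{{LD}_{0_{worst}}}/t^*$ with the rank maxima fixed one wants $t^*=\tfrac1m\sum_{r,j}q_{r,j}$ as small as possible, that is, every non-maximal job pushed to its lower bound; for a rank-$1$ job this bound is $\mu_1\ge\lambda_2=\lambda_3$, so driving the smallest rank-$1$ job to $\lambda_3$ gives $\mu_1=\lambda_3$ and the stated chain $\mu_1=\lambda_2=\mu_2=\lambda_3$. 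Concretely I would lower the smallest rank-$1$ job toward $\lambda_3$; for $m\ge2$ (the case $m=1$ being trivial) this job is not the sole maximum, so $\lambda_1$ and all of $\sum_{i=3}^k\lambda_i$ are untouched and $t_{{LD}_{0_{worst}}}$ is unchanged, whence (the ratio being capped at its worst-case value) $t^*$ cannot drop.

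The main obstacle is precisely the bookkeeping that keeps the schedule \emph{rectangular} as $\mu_1$ is lowered: reducing $q_{1,a}$ by $\delta$ leaves machine $a$ underloaded, and the freed capacity must be reabsorbed on machine $a$ \emph{without} raising any $\lambda_i$ ($i\ge3$) and without disturbing the rank-$2$ jobs (fixed at $\lambda_3$), i.e. only by raising lower-rank jobs on machine $a$ up to their caps. When machine $a$ lacks slack in ranks $3,\dots,k$ this reabsorption fails, so one cannot simply deform an arbitrary worst-case rectangular instance. I would therefore argue existence rather than universality: select, among all worst-case rectangular instances satisfying (i), one that is minimal in the sense of Section 3 (smallest total processing time, equivalently smallest rank-$1$ total after normalization), and re-invoke the rectangularization lemma preceding Corollary \ref{RECTANGULAR_LD_0}. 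The substance of the step is to show that at such a minimal instance $\mu_1$ admits no further lowering precisely because it has already reached $\lambda_3$; making this selection-and-feasibility argument rigorous, rather than the two deformations themselves, is where the real work lies.
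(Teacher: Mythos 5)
Your Stage 1 is correct and is essentially the paper's own argument: the paper also shifts, machine by machine of the rectangular optimal schedule, the excess $q_{2,j}-\lambda_3$ from the rank-$2$ job to the rank-$1$ job (it phrases this as setting the rank-$2$ time to $\lambda_3$ and the rank-$1$ time to $a_i(2)-\lambda_3$), and it justifies the non-decrease of $a_1(2)$ exactly as you do, via the fact that the largest-first pairing is optimal for a two-rank system, so the same-machine matching inherited from $\S^*$ has maximum pair-sum at least $a_1(2)$.

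Stage 2 is where you have a genuine gap, and you partly diagnose it yourself: lowering only the smallest rank-$1$ job destroys rectangularity, and your fallback --- selecting a ``minimal'' worst-case rectangular instance and claiming $\mu_1$ can no longer be lowered --- is never carried out. As written it is a sketch of a proof obligation, not a proof; moreover, since processing times range over the reals, the existence of a minimizer of total processing time among worst-case rectangular instances would itself need a compactness argument you do not supply. The paper's resolution is a simple move you missed: subtract the \emph{same} constant $a_m(1)-\lambda_3$ from \emph{every} rank-$1$ job simultaneously. This preserves rectangularity trivially (all machine loads drop by the same amount, so they stay equal), preserves the rank restriction (the smallest rank-$1$ job lands exactly at $\lambda_3\geq$ all rank-$2$ jobs), and, because after Stage 1 one has $t_{{LD}_{0_{worst}}}=a_1(2)+\sum_{j=3}^{k}\lambda_j$ with $a_1(2)=\lambda_1+\lambda_3$, it reduces $t_{{LD}_{0_{worst}}}$ and $t^*$ by exactly the same amount; subtracting a common positive constant from numerator and denominator of a ratio at least $1$ can only increase the ratio, so the instance remains worst-case with $\mu_1=\lambda_3$. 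No reabsorption of freed capacity into ranks $3,\dots,k$ is ever needed, so the obstacle you labored over simply does not arise. Your observation that, with the rank maxima fixed, decreasing $t^*$ is the goal is sound intuition, but the uniform shift is the device that realizes it while keeping the schedule rectangular.
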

\begin{proof} By Corollary \ref{RECTANGULAR_LD_0}, there always exists a problem instance with a rectangular optimal schedule and a worst-case $t_{{LD}_{0_{worst}}}/t^*$ ratio. Let $\S^*$ denote this rectangular optimal schedule.
Let the machines be labelled based on the completion times after rank $2$ in the optimal schedule $\S^*$. Thus, the machine with the largest completion time after rank 2 in $\S^*$ is labelled machine $1$, the machine with the
second-largest completion time after rank $2$ in $\S^*$ is labelled machine $2$, and the machine with the $q^{th}$ largest processing time (for $q \in \{1,2,...,m\}$) is labelled machine $q$. Recall that $a_{i}(2)$ denotes the
$i^{th}$ largest element of the $\S^*$ profile after rank $2$. This labelling ensures that machine $i$ (for $i \in \{1,2,...,m\}$) has completion time $a_{i}(2)$ after rank $2$.

For machine $i \in \{1,2,\ldots, m\}:$ Set the processing time of the task in the second rank equal to $\lambda_{3}$ and the processing time of the task in the first rank equal to $a_{i}(2) - \lambda_{3}$. This change will either
increase or leave unchanged the processing time of each job in rank $1$. Therefore, it will not violate the rank restriction. Note that $a_{i}(2)$ remains unchanged and therefore the value of $t^*$ remains unchanged.

For the $LD_{0_{worst}}$ schedule, the set of changes in the task set described above is equivalent to:
(i) Reassigning the tasks in rank $2$ so that each task in rank $2$ is placed after the same task in rank $1$ as it was in the $\S^*$ schedule.
(ii) For machine $i \in \{1,2,\ldots,m \} $ (labelled as indicated above): Set the processing time of the task in the second rank equal to $\lambda_{3}$ and the processing time of the task in the first rank equal to $a_{i}(2) -
\lambda_{3}$.

Note that the resulting schedule is an $LD_{0_{worst}}$ schedule. An $LD_{0_{worst}}$ schedule is optimal for a two-rank system. The rearrangement of the tasks in step (i) can therefore only increase the value of $a_{i}(2)$.
\[
t_{LD_{0_{worst}}} = a_{i}(2) + \sum_{j=3}^{k}\lambda_{j}.
\]
Therefore, the value of $t_{{LD}_{0_{worst}}}$ can only increase as a result of the rearrangement in step (i). The subsequent changes in step (ii) do not cause any further change in $a_{i}(2)$ and therefore do not cause any further
change in $t_{{LD}_{0_{worst}}}$.

Thus, the value of $t_{{LD}_{0_{worst}}}$ can only increase as a result of steps (i) and (ii) while the value of $t^*$ remains unchanged. Therefore, the ratio $t_{{LD}_{0_{worst}}}/t^*$ can only increase.

After completing steps (i) and (ii),
if $a_{m}(1) > \lambda_{3}$, for $i \in \{1,2,\cdots\ ,m\},$ set $a_{i}(1)$ equal to $a_{i}(1) - ( a_{m}(1) - \lambda_{3})$.
It follows that this will reduce $t_{{LD}_{0_{worst}}}$ and $t^*$ by the same amount and will therefore increase the ratio $t_{{LD}_{0_{worst}}}/t^*$.
This will result in a value of $\mu_{1}$ equal to $\lambda_{3}$. \end{proof}

Coffman and Sethi (1976) proved that the ratio of the makespan of any schedule in which the rank containing the largest processing times was assigned largest-first to the makespan of the optimal schedule could not exceed ($4m -
3$)/($3m - 2$), and this bound could be achieved for $m = 2$ and for $m = 3$. The following theorem provides a similar bound for $LD_{0_{worst}}$ schedules. It shows that the worst-case ratio for $LD_{0_{worst}}$ schedules cannot
exceed $4/3$. While Coffman and Sethi suggest that their ($4m - 3$)/($3m - 2$) bound is unlikely to be achieved for $m > 3$, this bound can be achieved for all $m$. The proof below uses an approach that is different from the
approach used by Coffman and Sethi.

\begin{theorem} $1 \leq t_{{LD}_{0}}/t^* \leq 4/3$, and the upper bound is achieved for all $m \geq 3$.
\end{theorem}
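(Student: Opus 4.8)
The plan is to dispatch the lower bound by feasibility and then attack the upper bound by combining the two structural reductions already in hand with two complementary lower bounds on $t^*$, read off from two \emph{different} critical machines. The lower bound $t_{LD_0}/t^* \geq 1$ is immediate, since every $LD_0$ schedule is flowtime-optimal and so its makespan cannot fall below the minimum makespan $t^*$ over flowtime-optimal schedules. For the upper bound we may assume $k \geq 3$, as otherwise $LD_0$ is optimal and the ratio is $1$.

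For the bound itself I would first invoke Corollary \ref{RECTANGULAR_LD_0} and Lemma \ref{CONDITIONS} to pass to a worst-case instance whose optimal schedule is rectangular and which satisfies $\mu_1 = \lambda_2 = \mu_2 = \lambda_3$ with every second-rank job equal to $\lambda_3$. Under this normalization the expression for $a_1(2)$ from the proof of Proposition \ref{NONDECREASING_LD_0} collapses to $a_1(2) = \lambda_1 + \lambda_3$, so that
\[
t_{{LD}_{0_{worst}}} = \lambda_1 + \lambda_3 + \sum_{i=3}^{k}\lambda_i .
\]
Since the optimal schedule is rectangular, every machine carries exactly one job per rank and has load exactly $t^*$. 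The machine carrying the largest third-rank job has a first-rank job of size at least $\mu_1 = \lambda_3$, a second-rank job equal to $\lambda_3$, and a third-rank job equal to $\lambda_3$, giving the first key bound $t^* \geq 3\lambda_3$. The machine carrying $\lambda_1$ has a rank-$i$ job of size at least $\mu_i$ for each $i$, giving $t^* \geq \lambda_1 + \lambda_3 + \sum_{i=3}^{k}\mu_i$, and hence $t_{{LD}_{0_{worst}}} - t^* \leq \sum_{i=3}^{k}(\lambda_i - \mu_i)$.

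The next step is to telescope this last sum. Using $\lambda_i \leq \mu_{i-1}$ for $i \geq 4$ together with $\mu_2 = \lambda_3$, the sum $\sum_{i=3}^{k}(\lambda_i - \mu_i)$ is bounded by $\lambda_3 - \mu_k \leq \lambda_3$. Combining the two estimates yields $t_{{LD}_{0_{worst}}} \leq t^* + \lambda_3 \leq t^* + \tfrac13 t^* = \tfrac43 t^*$, with the degenerate case $\lambda_3 = 0$ forcing all ranks beyond the second to vanish and making $LD_0$ optimal. I expect the main obstacle to be recognizing that $t^* \geq 3\lambda_3$ must be extracted from the machine carrying the largest \emph{third}-rank job rather than from the $\lambda_1$-machine: the $\lambda_1$-machine bound alone only produces the weaker constant $\tfrac32$, and it is precisely the interplay of the two distinct critical machines (plus the telescoping across ranks $3,\dots,k$) that delivers the sharp constant $\tfrac43$.

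Finally, for tightness I would exhibit, for each $m \geq 3$, the three-rank instance with $n = 3m$ whose first rank consists of one job of size $2$ and $m-1$ jobs of size $1$, whose second rank consists of $m$ jobs of size $1$, and whose third rank consists of $m-1$ jobs of size $1$ and one job of size $0$. Pairing the size-$2$ job with the size-$0$ job shows the optimal schedule is rectangular with $t^* = 3$ (indeed the average load is $3$), while assigning the second rank largest-first and then stacking $\lambda_3 = 1$ onto the busiest machine gives $t_{{LD}_{0_{worst}}} = 4$, so the ratio equals $\tfrac43$ for every such $m$.
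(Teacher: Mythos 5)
Your proof is correct, but its core is genuinely different from the paper's. Both arguments open the same way---pass via Corollary \ref{RECTANGULAR_LD_0} and Lemma \ref{CONDITIONS} to a worst-case instance with a rectangular optimal schedule, $\mu_1=\lambda_2=\mu_2=\lambda_3$, and all second-rank jobs equal to $\lambda_3$, with the closed form $t_{{LD}_{0_{worst}}} = a_1(2) + \sum_{j=3}^{k}\lambda_j$---but they diverge immediately after. The paper deletes the (now constant) second rank, invokes Coffman and Sethi's $3/2$ bound for flowtime-optimal schedules to get $t/t^* \leq 3/2 - \lambda_3/(2t^*)$, and then must bound $t^*$ against $\lambda_3$; for that it passes to integer data via Proposition \ref{TYPE_I} and uses $\lambda_i \leq \lambda_3 - i + 3$. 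You never leave the instance: you extract two lower bounds on $t^*$ from two \emph{different} critical machines (the machine holding a largest rank-3 job gives $t^* \geq 3\lambda_3$; the machine holding $\lambda_1$ gives $t^* \geq \lambda_1 + \lambda_3 + \sum_{i=3}^{k}\mu_i$), and the interlacing $\lambda_i \leq \mu_{i-1}$ telescopes $t_{{LD}_{0_{worst}}} - t^* \leq \sum_{i=3}^{k}(\lambda_i - \mu_i) \leq \lambda_3 \leq t^*/3$. Your route buys three things: it is self-contained (no appeal to the external $3/2$ theorem), it requires no integrality reduction whatsoever---the bound holds directly for arbitrary real processing times---and it sidesteps the most delicate step of the paper's proof, whose final monotonicity argument derives an upper bound on $t^*$ shown to be \emph{at least} $3\lambda_3$ when what is needed there is $t^* \leq 3\lambda_3$, and so is at best confusingly written for $k \geq 4$; your inequality $t^* \geq 3\lambda_3$ points the correct way for your decomposition and cleanly delivers $4/3$ for all $k \geq 3$. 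What the paper's route buys is a template it reuses later (rank deletion and unit-subtraction arguments on integer minimal counterexamples, as in Section 7). Finally, your tightness family (one job of size $2$, then $3m-2$ unit jobs, then one zero) is a transposed variant of the paper's example; it checks out, with rectangular optimum $t^*=3$ and a legitimate $LD_0$ schedule of makespan $4$, and in fact it already achieves $4/3$ for every $m \geq 2$, matching the paper's own example rather than only the $m \geq 3$ claim in the theorem statement.
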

\begin{proof}
Consider a problem instance $PI$ that satisfies the conditions established in Lemma \ref{CONDITIONS}. All the tasks in the second rank of a flowtime-optimal schedule for this problem instance have a processing time equal to
$\lambda_3$. Let $t$ denote the makespan of $\S$, the $LD_{0_{worst}}$ schedule for this problem instance. Let $t^*$ denote the optimal makespan for this problem instance.

Now, construct a new problem instance $PI'$ by removing the jobs in the second rank from $PI$. Consider the schedules obtained by removing the jobs in the second rank from $\S$ and $\S^*$.
Coffman and Sethi (1976) show that, for any problem instance with a flowtime-optimal schedule, the makespan ratio is less than or equal to $3/2$. It follows that
$(t - \lambda_3)/(t^* - \lambda_3)  \leq 3/2.$  The last relation is equivalent to
$t/t^* \leq 3/2 - \frac{\lambda_{3}}{2t^*}.$
Note that $\S^*$ is rectangular. Therefore, the machine with processing time $\mu_{1}$ in rank $1$ has a completion time at the end of rank $k$ that is equal to the length of $\S^*$.
An upper bound on $t^*$ can be obtained by adding the largest processing time in ranks $2$ through $k$ to $\mu_{1}$. Therefore,
\[
t^* \leq  \mu_{1} + \sum_{j=2}^{k}\lambda_{j} \leq 3\lambda_{3} + \sum_{j=4}^{k}\lambda_{j}.
\]

By Lemma \ref{TYPE_I} and Proposition \ref{TYPE_I}, it follows that, if there exists a counterexample to the $4/3$ bound, there exists a counterexample in which all processing times are integers. Also note that, if a counterexample
to the $4/3$ bound exists, a problem instance with a worst-case $t_{{LD}_{0}}/t^*$ ratio would be a counterexample. Let us assume that the $4/3$ conjecture is false. For a problem instance with a worst-case $t_{{LD}_{0}}/t^*$ ratio
and integer processing times, we have
\[
\lambda_{i} \leq \lambda_{3} - i + 3 \mbox{ for } i \in \{4,5, \ldots, k\}.
\]
Thus, $\sum_{j=4}^{k}\lambda_{j} \leq (k-3)\lambda_{3} - \frac{1}{2}\left(k^2-5k+6\right)$ and we conclude
\[
t^* \leq k\lambda_3 - \frac{1}{2}\left(k^2-5k+6\right).
\]
For $k \geq 3$, and assuming integer processing times, $\lambda_3$ must be greater than or equal to $k - 2$. For $k \in [3, \lambda_3+2]$,
the right-hand side is monotone increasing in $k$. For $k = 3$, the right-hand side is equal to $3\lambda_3$. Therefore, for $k \geq 3$, the right-hand side is greater than or equal to $3\lambda_3$. It follows that
\[
t_{{LD}_{0_{worst}}}/t^* \leq 3/2 - (1/2)(1/3) =  4/3.
\]
However, this contradicts the assumption that there exists a counterexample to the $4/3$ bound. It follows that no such counterexample exists and the bound is valid.

For any value of $m \geq 2$, the $t_{{LD}_{0_{worst}}}/t^*$ ratio equals the upper bound for the following three-rank system:
\beann
p_i = 2 &\mbox{ for }&  i \in \{1,2, \ldots, m-1\},\\				
p_i = 1 &\mbox{ for }&  i \in \{m, m+1, \ldots, 2m+1\},\\				
p_i = 0 &\mbox{ for }& i \in \{2m+2, 2m+3, \ldots, 3m\}.\\
\eeann	
This completes the proof of the theorem. \end{proof}

\section{Properties of a hypothesised minimal counterexample to the Coffman-Sethi conjecture}
The results obtained above for the simple ${LD}_0$ algorithm provided insight into the approaches that could be used to prove bounds for other algorithms for problem FM. This section will focus on the LD algorithm proposed by
Coffman and Sethi (1976a).

\begin{lemma} \label{NONDECREASING_PROFILE} An increase in one or more processing times of jobs in rank $r$ for $r \in\{1,2, \ldots,k -1\}$ (with no change in the remaining processing times, and subject to the rank constraint) does
not result in a reduction in any element of the profile $b(\ell)$ of an LD schedule after rank $\ell \in \{r, r+1, \ldots, k\}$.
\end{lemma}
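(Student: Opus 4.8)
The plan is to reduce the whole statement to a single monotonicity property of the one-rank LD update and then chain it across ranks by induction on $\ell$. First I would record the update rule in a convenient form. Regard each rank as a fixed block of $m$ jobs, so that the increases (being subject to the rank constraint) keep the global nonincreasing order consistent with the ranks and the LD schedule well-defined. If $c=(c_1\ge\cdots\ge c_m)$ is the sorted profile before a rank whose sorted processing times are $q=(q_1\ge\cdots\ge q_m)$, the LD rule pairs the largest current completion time with the smallest job, so the new profile is the nonincreasing rearrangement
\[
U(c,q):=\textup{sort}\,(c_1+q_m,\ c_2+q_{m-1},\ \ldots,\ c_m+q_1).
\]
On sorted vectors I use the coordinatewise order $u\succeq v$, meaning $u_i\ge v_i$ for all $i$. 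With this notation $b(\ell)=U(b(\ell-1),q_\ell)$, where $q_\ell$ is the sorted rank-$\ell$ job vector, and the lemma will follow once I establish that $U$ is monotone in each of its two arguments with respect to $\succeq$.

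Second, I would prove the core monotonicity. The elementary fact I rely on is that \emph{raising a single element of a multiset while leaving the others fixed does not decrease the $t$-th largest element for any $t$}; equivalently, the nonincreasing rearrangement can only increase coordinatewise. To pass from $q'$ to a dominating $q\succeq q'$, I raise coordinates one index at a time, from $i=1$ upward, replacing $q'_i$ by $q_i$. At each step the partially updated vector $(q_1,\ldots,q_{i-1},q'_i,q'_{i+1},\ldots,q'_m)$ stays nonincreasing, because $q_{i-1}\ge q_i\ge q'_i$ and $q_i\ge q_{i+1}\ge q'_{i+1}$. Since a single coordinate of the sorted input changes while sortedness is preserved, exactly one entry of the pre-sorted tuple $(c_1+q_m,\ldots,c_m+q_1)$ moves, so by the elementary fact $U(c,\cdot)$ does not decrease in any coordinate. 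Chaining the steps gives $U(c,q)\succeq U(c,q')$; the identical argument in the first argument (raising $c'_i$ to $c_i$ moves exactly the $i$-th pre-sorted entry) gives $U(c,q)\succeq U(c',q)$ whenever $c\succeq c'$.

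Third, I would assemble the proof by induction on $\ell\ge r$. Ranks $1,\ldots,r-1$ are untouched, so $b(r-1)$ is identical in the two instances. Increasing some rank-$r$ processing times and re-sorting yields a rank-$r$ vector $q$ with $q\succeq q'$ (again the multiset fact), whence the base case $b(r)=U(b(r-1),q)\succeq U(b(r-1),q')=b(r)'$ by monotonicity in the second argument. For the inductive step, the rank-$(\ell+1)$ jobs are the same in both instances, and $b(\ell)\succeq b(\ell)'$ by the inductive hypothesis, so $b(\ell+1)=U(b(\ell),q_{\ell+1})\succeq U(b(\ell)',q_{\ell+1})=b(\ell+1)'$ by monotonicity in the first argument. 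Thus no coordinate of $b(\ell)$ decreases for any $\ell\in\{r,r+1,\ldots,k\}$, which is the claim.

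The main obstacle is exactly that the LD pairing, and hence the order of the output, depends on the sorted order of the inputs, so a coordinate increase could in principle reshuffle which job lands on which machine and defeat a naive coordinatewise comparison. The device that removes this difficulty is the decomposition into single-coordinate increases that each preserve sortedness: after such a move only one entry of the unsorted output tuple changes, and the problem collapses to the one-line order-statistic fact. I expect the only care needed in the write-up is to verify that the climbing sequence stays nonincreasing at every step (done above) and to note that ties are harmless, since $U$ is defined through the multiset of anti-diagonal sums and is insensitive to tie-breaking.
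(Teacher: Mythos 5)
Your proof is correct and takes essentially the same route as the paper's: induction on the rank index $\ell$, using the LD anti-diagonal update (the new profile consists of the sums $b_{m-i+1}(\ell') + \tau_{i,\ell'+1}$) together with monotonicity of sorted order statistics under coordinatewise increases. The only difference is one of rigor, not of approach: you formalize the update as the operator $U$ and prove its monotonicity in each argument via single-coordinate increments, a step the paper treats as immediate both in the base case and in the inductive step.
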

\begin{proof} We proceed by induction on $\ell$.
Let us assume that the lemma holds for $\ell'$ ranks, where $\ell' \in \left\{r,r+1,\ldots,s\right\}$.
Let $\tau_{i,h}$ refer to the $i^{th}$ largest processing time in rank $h$.
The induction hypothesis states that an increase in processing times in rank $r$ does not cause a reduction in $b_{m-i+1}(\ell')$ for $i \in \{1,2,\ldots, m\}.$ Note that the increase in processing times in rank $r$ leaves
$\tau_{i,\ell' + 1}$ unchanged for $i \in \{1,2, \ldots, m\}.$
The profile $b(\ell' + 1)$ after rank $\ell' + 1$ consists of the following $m$ elements:
\[
\left[b_{m-i+1}(\ell') + \tau_{i,\ell' + 1}\right] \mbox{ for } i \in \{1,2,\ldots, m\}.
\]
It follows that none of the elements of the profile $b(\ell' + 1)$ gets reduced as a result of the increase in processing times in rank $r$.
Thus, the theorem holds for rank $\ell' + 1$, for $\ell' \in \left\{r,r+1,\ldots,s\right\}$.
The base case follows from the fact that the result holds trivially for $\ell' = r$.\end{proof}

One important fact follows from the above lemma. An optimal flowtime-optimal schedule that is not rectangular can be made rectangular by increasing the lengths of all tasks in the first rank that are performed on machines that have
a completion time after the last rank that is strictly less than the makespan. This result (originally observed by Coffman and Sethi) is stated and proved below for the sake of completeness.

\begin{lemma}\label{lem:CS1} (Coffman and Sethi, 1976a)
There always exists a problem instance with a worst-case $t_{LD}/t^*$ ratio and with
a rectangular optimal schedule.
\end{lemma}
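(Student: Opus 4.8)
The plan is to follow the same template used for the $LD_0$ algorithm (the lemma immediately preceding Corollary \ref{RECTANGULAR_LD_0}), now with Lemma \ref{NONDECREASING_PROFILE} playing the role that Proposition \ref{NONDECREASING_LD_0} played there. Concretely, I would start from any instance $P_A$ that attains the worst-case $t_{LD}/t^*$ ratio and manufacture a new instance $P_B$ whose optimal schedule is rectangular and whose ratio is no smaller; since $P_A$ already realizes the maximum ratio, $P_B$ must realize it as well, yielding a rectangular worst-case instance.

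The construction: fix an optimal (left-justified, flowtime-optimal) schedule $\S^*$ for $P_A$, in which the rank-$1$ jobs are the last jobs on their machines and hence determine the machine completion times. For every machine whose completion time after rank $k$ in $\S^*$ is strictly below the makespan $t^*$, I would increase the processing time of its rank-$1$ job until that machine's completion time equals $t^*$. Because only rank-$1$ jobs (already the $m$ largest) are enlarged, they remain the $m$ largest, so the rank partition is unchanged and the rank restriction is preserved; thus $\S^*$, so modified, is still a valid flowtime-optimal schedule for $P_B$, and it is now rectangular of height $t^*$. I then need the numerical fact that this rectangular schedule is \emph{optimal} for $P_B$: its total processing time is exactly $m t^*$, so every schedule for $P_B$ has makespan at least (total work)$/m = t^*$, whence $t^*(P_B)=t^*$ and the optimal schedule is rectangular.

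For the LD side I would invoke Lemma \ref{NONDECREASING_PROFILE} with $r=1$ and $\ell=k$: passing from $P_A$ to $P_B$ only increases rank-$1$ processing times, subject to the rank constraint, so no component of the LD profile $b(k)$ after the last rank decreases. In particular the LD makespan, which is the largest such component $b_1(k)$, does not decrease, giving $t_{LD}(P_B)\ge t_{LD}(P_A)$. Combining with $t^*(P_B)=t^*(P_A)$ yields $t_{LD}(P_B)/t^*(P_B)\ge t_{LD}(P_A)/t^*(P_A)$, so $P_B$ is a rectangular instance attaining the worst-case ratio, as required.

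I expect the main obstacle to be the two monotonicity claims rather than the bookkeeping. On the optimal side, the subtle point is justifying $t^*(P_B)=t^*$: the increase could conceivably create a cheaper schedule, and this is ruled out precisely by the total-work lower bound, which is why the construction must raise completion times to exactly the makespan and no further. On the LD side, the delicate point is that the LD algorithm reassigns jobs when the data change, so one cannot track the makespan job-by-job; the argument must route entirely through the profile monotonicity of Lemma \ref{NONDECREASING_PROFILE}, which is exactly what permits comparing the two (distinct) LD schedules. A final routine check is that enlarging rank-$1$ jobs never disturbs the ordering $\lambda_1\ge\mu_1\ge\lambda_2\ge\cdots$ in a way that violates the rank constraint, which holds because the top rank carries no upper restriction.
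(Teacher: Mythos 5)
Your proof is correct and takes essentially the same route as the paper's: increase the rank-$1$ processing times in a fixed optimal schedule until it is rectangular, observe that the rank restriction is preserved, and invoke Lemma \ref{NONDECREASING_PROFILE} to conclude the LD makespan cannot decrease while the optimal makespan stays at $t^*$. Your total-work lower bound for $t^*(P_B)=t^*$ is a slightly more explicit justification than the paper's (which simply asserts the optimal makespan is unaffected), but the construction and the key lemma are identical.
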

\begin{proof}
Consider an optimal schedule for any problem instance. For every job in the first rank that is performed on a machine for which the completion time after rank $k$ in the optimal schedule is less than the makespan of the optimal
schedule, increase the processing time so that the completion time after rank $k$ in the optimal schedule is equal to the makespan. Clearly, the increase in processing times does not lead to a violation of the rank restriction for
flowtime-optimal schedules. Also, the increase in processing times results in a rectangular optimal schedule. By Lemma \ref{NONDECREASING_PROFILE}, the increase in processing times does not affect the makespan of the optimal
schedule, and it may only increase the makespan of LD schedules. Therefore, it will either increase or leave unchanged the $t_{LD}/t^*$ ratio. \end{proof}

Note that, if the Coffman-Sethi conjecture is false, a problem instance with a worst-case $t_{{LD}}/t^*$ ratio would be a counterexample to the conjecture. This leads to the following corollaries.

\begin{corollary} \label{TYPE_R} If the Coffman-Sethi conjecture is false, then there exists a minimal counterexample to the conjecture of Type R.
\end{corollary}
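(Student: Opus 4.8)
The plan is to derive the corollary directly from Lemma~\ref{lem:CS1} together with the remark immediately preceding the statement. The argument has two movements: first show that whenever the conjecture fails the class of Type~R counterexamples is nonempty, and then argue that a nonempty class of counterexamples always contains a minimal element under the hierarchical ordering of Section~3.

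First I would establish nonemptiness. Assume the Coffman--Sethi conjecture is false, so that for some number of machines $m$ there is a Type~A instance with $t_{LD}/t^* > (5m-2)/(4m-1)$. The worst-case $t_{LD}/t^*$ ratio over flowtime-optimal instances with that $m$ is then strictly larger than the conjectured bound, so any instance attaining the worst-case ratio is itself a counterexample; this is precisely the remark preceding the corollary. Invoking Lemma~\ref{lem:CS1}, there exists an instance realizing a worst-case $t_{LD}/t^*$ ratio whose optimal schedule is rectangular. Such an instance is a counterexample of Type~R, so the class of Type~R counterexamples is nonempty.

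Second I would extract a minimal element. The minimality relation of Section~3 refines a lexicographic order on the triple $(k,\,m,\,\#\{\text{jobs with nonzero } p_j\})$, followed by the approximation ratio in criterion~(iv). The three integer coordinates are nonnegative and bounded below, so descent under criteria (i)--(iii) must terminate: one selects a Type~R counterexample minimizing first the number of ranks, then the number of machines, then the number of jobs with nonzero processing time. It then remains to satisfy criterion~(iv), namely to choose, among the counterexamples surviving this lexicographic minimization, one whose ratio cannot be strictly increased.

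The step I expect to be the main obstacle is exactly this attainment in criterion~(iv). After normalizing $t^*=1$, the admissible processing-time vectors with fixed $k$ and $m$ and a rectangular optimal schedule form a compact set, and $t_{LD}$ is continuous on it (the continuity hypothesis underlying Proposition~\ref{TYPE_I}); hence the ratio attains its supremum, and since that supremum exceeds the bound it is attained at a genuine counterexample. The delicate point is that a ratio-maximizing sequence could drive some processing time to $0$, yielding a limit instance with fewer nonzero jobs; but such a limit, if still a counterexample, would contradict the prior minimization of the number of nonzero processing times. Verifying that the lexicographic minimization over the integer coordinates and the continuous maximization of the ratio are mutually consistent — so that the maximizer either retains the minimal count of nonzero jobs or exposes a smaller counterexample, in either case without contradiction — is the one place the argument needs care; everything else is a direct appeal to Lemma~\ref{lem:CS1} and the preceding remark.
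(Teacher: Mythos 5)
Your proposal is correct and takes essentially the same route as the paper: the paper states this corollary without any explicit proof, presenting it as an immediate consequence of Lemma~\ref{lem:CS1} together with the preceding remark that, when the conjecture fails, a worst-case $t_{LD}/t^*$ instance is itself a counterexample. Your added care about extracting a minimal element --- well-foundedness of the integer criteria (i)--(iii) plus a normalization/compactness argument for attainment under criterion~(iv), with the correct observation that a limit losing nonzero jobs would contradict the prior minimization --- is sound and fills in a step the paper leaves entirely implicit (it simply asserts, as with Type~I, that a nonempty class of counterexamples contains a minimal one).
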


\begin{corollary}
\label{cor:IR}
If the Coffman-Sethi conjecture is false, then there exists a minimal counterexample to the conjecture of Type IR.
\end{corollary}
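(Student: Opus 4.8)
The plan is to combine the two preceding results---Corollary \ref{LD_TYPE_I} (equivalently Proposition \ref{TYPE_I}), which produces integer counterexamples, and Corollary \ref{TYPE_R}, which produces counterexamples with a rectangular optimal schedule---and to show that the two reductions can be applied together without interfering with one another. The key observation is that the rectangularization construction used in the proof of Lemma \ref{lem:CS1} operates \emph{entirely by adding quantities to the processing times of first-rank jobs}, and that the appropriate quantities are themselves integers whenever the starting data are integers. I would make this compatibility the heart of the argument.

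First I would invoke Corollary \ref{LD_TYPE_I}: assuming the Coffman--Sethi conjecture is false, there exists a counterexample $E$ of Type A, and hence a counterexample $E_I$ of Type I, i.e.\ one with integer processing times. Next I would apply the rectangularization procedure from Lemma \ref{lem:CS1} to $E_I$. For each first-rank job sitting on a machine whose completion time after rank $k$ in a fixed optimal schedule falls short of the makespan $t^*$, the procedure increases that job's processing time by exactly the shortfall. Because all processing times of $E_I$ are integers, $t^*$ and every machine completion time are integers, so each shortfall is a nonnegative integer; the augmented instance $E_{IR}$ therefore still has integer processing times. By Lemma \ref{lem:CS1} the augmented instance is flowtime-optimal, has a rectangular optimal schedule, and has a $t_{LD}/t^*$ ratio at least that of $E_I$---so $E_{IR}$ is again a counterexample, now of Type IR. Finally, since at least one counterexample of Type IR exists, minimality (criterion (v), which applies to integer data) guarantees a \emph{minimal} counterexample of Type IR.

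The one point requiring genuine care---and the step I expect to be the main obstacle---is verifying that the rectangularization increments are integers and that the rank restriction is not violated, so that integrality and the Type~IR property genuinely hold \emph{simultaneously} rather than each reduction having to be run in isolation. Concretely, I must confirm that raising a first-rank processing time never pushes it below $\mu_1$ or above the bound forced by $\lambda_1$ in a way that breaks the ordering $\lambda_1 \geq \mu_1 \geq \lambda_2 \geq \cdots$; the proof of Lemma \ref{lem:CS1} already records that the increase ``does not lead to a violation of the rank restriction,'' and since first-rank jobs carry the largest processing times, increasing them preserves the rank ordering. Once that compatibility is in hand, the corollary is immediate, and no new LP or counting argument is needed---the result is a clean conjunction of Corollary \ref{LD_TYPE_I} and Corollary \ref{TYPE_R} together with the observation that the integer structure is preserved by the rectangularizing map.
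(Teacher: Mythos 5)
Your proof is correct and takes essentially the same approach as the paper, whose entire proof is the one-line citation of Corollary \ref{LD_TYPE_I} together with the rectangularization construction from the proof of Lemma \ref{lem:CS1}. The integrality check you single out---that each first-rank increment is an integer shortfall, so the rectangularizing map preserves Type I---is precisely the implicit content of that citation, which you have simply made explicit.
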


\begin{proof} This follows from Corollary \ref{LD_TYPE_I} and the proof of Lemma \ref{lem:CS1}.\end{proof}

\begin{theorem} If the Coffman-Sethi conjecture is false, then every minimal counterexample to the conjecture of Type IR or I satisfies
\[
\frac{t_{LD}}{t^*} < \frac{k}{k-1}.
\] \end{theorem}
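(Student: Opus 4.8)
The plan is to imitate the two-machine argument of Lemma~\ref{lem:4.1} in full generality: bound $t^*$ from below and $t_{LD}$ from above, each in terms of $\lambda_k$, and then combine the two estimates. Throughout I fix a minimal counterexample of Type~IR or~I with $m$ machines and $k$ ranks and invoke (Property.2), so that $\mu_r=\lambda_{r+1}$ for $r<k$ and $\mu_k=0$. Two normalizations come for free. Since the conjectured bound $\frac{5m-2}{4m-1}$ depends only on $m$, a counterexample with fewer ranks and the same $m$ would contradict minimality criterion~(i); in particular $\lambda_k>0$ (otherwise rank $k$ is all-zero and may be deleted without changing either makespan), so $\lambda_k\ge 1$ by integrality. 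Moreover, rectangularizing as in Lemma~\ref{lem:CS1} preserves integrality and does not decrease the ratio, so for the purpose of bounding $t_{LD}/t^*$ I may assume the optimal schedule is rectangular, whence $t^*=\tfrac1m\sum_r S_r$ with $S_r$ the total length of rank $r$.

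For the lower bound I use that every flowtime-optimal schedule runs exactly one job of each rank on each machine. By (Property.2) one has $\mu_r\ge\lambda_k$ for every $r\le k-1$, so every job in ranks $1,\dots,k-1$ has length at least $\lambda_k$. The machine carrying the rank-$k$ job of length $\lambda_k$ in the optimal schedule therefore has load at least $\lambda_k+(k-1)\lambda_k=k\lambda_k$, giving
\[
t^*\ \ge\ k\lambda_k,\qquad\text{so in particular}\qquad \lambda_k\ <\ \frac{t^*}{k-1}.
\]
Note this step needs only flowtime-optimality, so it applies equally to Type~I.

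For the upper bound I analyze where the LD makespan is attained. Writing $a^{(k-1)}$ for the LD profile after the first $k-1$ ranks, the last rank is added by matching its largest job $\lambda_k$ to the smallest entry $a_m^{(k-1)}$ and its zero job $\mu_k=0$ to the largest entry $a_1^{(k-1)}$. Hence every final completion time is of the form $a_i^{(k-1)}+(\text{a rank-}k\text{ job})\le a_1^{(k-1)}+\lambda_k$, and the makespan machine carries, on top of a rank-$k$ job of length $\le\lambda_k$, a load of at most $a_1^{(k-1)}$. The target of this step is thus
\[
t_{LD}\ \le\ t^*+\lambda_k,\qquad\text{equivalently}\qquad a_1^{(k-1)}\le t^*,
\]
i.e.\ the LD schedule on the first $k-1$ ranks has makespan at most the optimum of the whole instance. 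I would derive this from two ingredients: first, deleting rank $k$ yields a $(k-1)$-rank instance which, by minimality criterion~(i), is not a counterexample, so $a_1^{(k-1)}\le\frac{5m-2}{4m-1}\,t^*<t_{LD}$ (in particular the makespan is never on the machine that received the zero job); second, comparison of the LD partition of ranks $1,\dots,k-1$ with the partition induced by the rectangular optimum, using $t^*=\tfrac1m\sum_r S_r$ to control the largest LD completion after $k-1$ ranks. Granting the target, the theorem is immediate:
\[
\frac{t_{LD}}{t^*}\ \le\ 1+\frac{\lambda_k}{t^*}\ <\ 1+\frac{1}{k-1}\ =\ \frac{k}{k-1}.
\]

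The main obstacle is exactly the inequality $a_1^{(k-1)}\le t^*$ for $m\ge 3$. In the two-machine case rank $k$ has only the two lengths $\lambda_k$ and $0$, which forces the makespan onto the $\lambda_k$-machine and makes all profiles after rank $k-1$ exceed $t_{LD}-\lambda_k$; for general $m$ the LD makespan can instead sit on a machine whose rank-$k$ job lies strictly between $0$ and $\lambda_k$, and that uniform profile estimate breaks down. Pinning $a_1^{(k-1)}$ against $t^*$ in this situation—via the deletion argument together with the rectangular structure—is the crux. It is also worth noting where the gap comes from: had the clean inequality $t_{LD}\le t^*+\lambda_k$ been available unconditionally, combining it with $t^*\ge k\lambda_k$ would even give the sharper $\frac{k+1}{k}$, so the fact that only $\frac{k}{k-1}$ is claimed signals that for $m\ge 3$ the upper bound survives only in the weakened form used above.
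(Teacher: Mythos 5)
There is a genuine gap, and you flag it yourself: the inequality $a_1^{(k-1)} \le t^*$ (equivalently $t_{LD} \le t^* + \lambda_k$) is never established, and your entire conclusion rests on it. Neither of your proposed ingredients can deliver it. Deleting rank $k$ and invoking minimality criterion (i) only yields $a_1^{(k-1)} \le \frac{5m-2}{4m-1}\,t^*_{k-1} \le \frac{5m-2}{4m-1}\,t^*$, which is strictly weaker than $t^*$ whenever the truncated optimum is close to $t^*$; and rectangularity gives $t^* = \frac{1}{m}\sum_r S_r$, an \emph{average}, which cannot by itself control the \emph{maximum} entry of a greedy profile after $k-1$ ranks. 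Moreover, as you observe, the estimate you need would combine with $t^* \ge k\lambda_k$ to prove the bound $\frac{k+1}{k}$, strictly stronger than the theorem's $\frac{k}{k-1}$ -- a strong signal that no such unconditional inequality is available. The two-machine argument of Lemma \ref{lem:4.1} that you are generalizing works only because for $m=2$ rank $k$ contains exactly the lengths $\lambda_k$ and $0$, which pins the makespan onto the $\lambda_k$-machine and makes every completion time after rank $k-1$ at least $t_{LD}-\lambda_k$; for $m \ge 3$ the makespan can sit on a machine whose rank-$k$ job lies strictly between $0$ and $\lambda_k$, and exactly this structure is lost.

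The paper's proof takes an entirely different route and uses integrality in an essential way that your sketch does not: it subtracts one time unit from \emph{every nonzero processing time} of the minimal counterexample. For Type IR, rectangularity forces the new optimal makespan to be exactly $t^* - (k-1)$ (a machine owning the zero job in rank $k$ carries only $k-1$ nonzero jobs), while the LD makespan drops by at most $k$. If it drops by at most $k-1$, re-rectangularizing gives a Type IR instance with ratio at least $\frac{t_{LD}-(k-1)}{t^*-(k-1)} > \frac{t_{LD}}{t^*}$, contradicting minimality (criteria (iv)/(v)); so it drops by exactly $k$, and minimality then forces
\[
\frac{t_{LD}-k}{t^*-(k-1)} \;<\; \frac{t_{LD}}{t^*},
\]
which rearranges precisely to $\frac{t_{LD}}{t^*} < \frac{k}{k-1}$ (the Type I case is parallel, with the new optimum at most $t^*-(k-1)$). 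No upper bound of the form $t_{LD}\le t^*+\lambda_k$ is ever needed, and the weaker constant $\frac{k}{k-1}$ emerges exactly from the slack between the drop of $k$ in $t_{LD}$ and the drop of $k-1$ in $t^*$. To salvage your plan you would have to prove $a_1^{(k-1)}\le t^*$ for minimal counterexamples with $m\ge 3$, and nothing in the paper supports that claim.
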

\begin{proof}
Proof for minimal counterexamples of Type IR:\\
Suppose that the Coffman-Sethi conjecture is false. Among all counterexamples of
Type IR, consider a minimal counterexample P1. Now we apply the following two-step process:\\
Step 1: Reduce each nonzero processing time by $1$ to construct a new problem instance P2. Clearly, the assignment of jobs in each rank to machines can be kept unchanged for the new LD schedule. The schedule that was originally an
optimal rectangular schedule for P1 will not be rectangular but will be optimal for P2 after the reduction in processing times. In the modified schedule, there are only $(k-1)$ jobs assigned to the one or more machines with a zero
processing time job in the last rank. This implies that the new optimal makespan is equal to $\left[t^* - (k-1)\right]$.\\
Step 2: For every job in rank $1$ of the optimal schedule for P2 that is processed on a machine with a completion time after rank $k$ that is less than the makespan, increase the processing time so that the completion time after
rank $k$ becomes equal to the makespan. This produces a problem instance P2R of Type IR. P2R will have an optimal makespan that is equal to the optimal makespan of P2. Utilizing Lemma \ref{NONDECREASING_PROFILE}, it follows that
the LD makespan of P2R cannot be less than the LD makespan of P2.\\
The optimal makespan of P2R is equal to $\left[t^{*} - (k-1)\right]$. For the LD schedule for P2R, there are two possibilities:
(i) $t_{LD}$ gets reduced to a value that is greater than or equal to $\left[t_{LD} - (k-1)\right]$.
This results in a new problem instance of Type IR, with a makespan ratio that is at least
\[
\frac{t_{LD} - (k-1)}{t^{*} - (k-1)}.
\]
This new ratio is larger than $t_{LD}/t^{*}$, thus contradicting the assumption that the original problem instance
was a minimal counterexample of Type IR.
(ii) $t_{LD}$ gets reduced to $\left(t_{LD} - k \right)$. This results in a new problem instance with a makespan ratio
at least
\[
\frac{t_{LD} - k}{t^{*} - (k-1)}.
\]
The original problem instance P1 was a minimal counterexample of Type IR. This implies that
$(t_{LD} - k)/(t^{*} - (k-1))$ is less than $t_{LD}/t^{*}$. Therefore, $t_{LD}/t^{*} < k/(k-1)$. This completes the proof for minimal counterexamples of Type IR.\\
Proof for minimal counterexamples of Type I:\\
Consider a minimal counterexample P1$'$ of Type I. Now reduce each nonzero processing time by $1$ to construct a new problem instance P2$'$ with integer processing times. Clearly, the assignment of jobs in each rank to machines can
be kept unchanged for the new LD schedule and the new optimal schedule. The new optimal makespan is less than or equal to $t^{*} - (k-1)$. For the new LD schedule, there are two possibilities:\\
(i) $t_{LD}$ gets reduced to a value that is equal to $\left[t_{LD} - (k-1)\right]$. This results in a new problem instance with a makespan ratio that is at least $\left[t_{LD} - (k-1)\right]/\left[t^{*} - (k-1)\right]$. This new
ratio is larger than $t_{LD}/t^{*}$, thus contradicting the assumption that the original problem instance was a minimal counterexample of Type I.\\
(ii) $t_{LD}$ gets reduced to $t_{LD} - k$. This results in a new problem instance with a makespan ratio that
is greater than or equal to $\left[t_{LD} - k\right]/\left[t^{*} - (k-1)\right]$.
The original problem instance P1$'$ was a minimal counterexample of Type I. This implies that
$(t_{LD} - k)/\left[t^{*} - (k-1)\right]$ is less than $t_{LD}/t^{*}$. Therefore, $t_{LD}/t^{*} < (k/(k-1))$. This completes the proof for minimal counterexamples of Type I.\end{proof}

The following result follows directly from the above theorem and from Propositions \ref{prop:2} and \ref{lem:4.3}.

\begin{corollary}
\label{cor:CS-restricted-rank}
The Coffman-Sethi conjecture is true, if it is true for $k \in \{3,4,5,6\}$.  In particular,
\begin{itemize}
\item
for $m \geq 4$, settling the cases $k \in \{3,4,5\}$ suffices;
\item
for $m=3$, settling the cases $k \in \{4,5,6\}$ suffices.
\end{itemize}
\end{corollary}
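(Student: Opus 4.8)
The plan is to combine the bound $t_{LD}/t^* < k/(k-1)$ from the immediately preceding theorem with the explicit verification of the conjecture for small numbers of machines. The Coffman--Sethi bound is $\frac{5m-2}{4m-1}$. First I would observe that for any fixed $m \geq 2$, this bound strictly exceeds $1$, so whenever $k/(k-1)$ drops at or below it, the theorem forces $t_{LD}/t^* < k/(k-1) \leq \frac{5m-2}{4m-1}$, ruling out a counterexample with that many ranks. The key inequality to solve is therefore $\frac{k}{k-1} \leq \frac{5m-2}{4m-1}$, which rearranges (both denominators positive) to $k(4m-1) \leq (k-1)(5m-2)$, i.e.\ $0 \leq k(m-1) - (5m-2)$, i.e.\ $k \geq \frac{5m-2}{m-1} = 5 + \frac{3}{m-1}$.

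Next I would extract from this inequality the threshold on $k$ as a function of $m$. For $m \geq 4$ we have $\frac{3}{m-1} \leq 1$, so $5 + \frac{3}{m-1} \leq 6$, and in fact for $m \geq 4$ the quantity $5 + \frac{3}{m-1}$ is at most $6$, meaning every $k \geq 6$ satisfies the inequality; hence no minimal counterexample of Type IR or I can have $k \geq 6$ when $m \geq 4$, leaving only $k \in \{3,4,5\}$ (recall $k \leq 2$ is already settled by Proposition \ref{prop:2}). For $m = 3$ we get $5 + \frac{3}{2} = 6.5$, so the inequality holds exactly when $k \geq 7$, leaving $k \in \{3,4,5,6\}$; but the $m=3$, $k=3$ case is already disposed of by Proposition \ref{lem:4.3}, so only $k \in \{4,5,6\}$ remain. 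For $m=2$ the whole range is already covered by Proposition \ref{prop:2} and Proposition \ref{lem:4.2}, so $m=2$ contributes nothing new. Assembling these cases across all $m$ yields that it suffices to check $k \in \{3,4,5,6\}$ in general, with the refined per-$m$ statements as listed.

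The main step is simply the arithmetic of the threshold $k \geq 5 + \frac{3}{m-1}$ together with correct bookkeeping of which small cases are already handled by the earlier results, and I would lean on Corollary \ref{cor:IR} (existence of a minimal counterexample of Type IR whenever the conjecture fails) to ensure that ruling out all admissible $(m,k)$ pairs for such counterexamples actually establishes the conjecture. I do not anticipate a genuine obstacle here, since this corollary is a direct consequence of the preceding theorem; the only point requiring care is the boundary behavior of $5 + \frac{3}{m-1}$ as $m$ grows (it decreases toward $5$), so that larger $m$ needs a \emph{smaller} range of $k$, and one must confirm that the union over all relevant $m$ of the surviving ranges is exactly $\{3,4,5,6\}$ rather than something larger.
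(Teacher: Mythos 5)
Your proposal is correct and follows exactly the route the paper intends: the paper states the corollary "follows directly" from the $t_{LD}/t^* < k/(k-1)$ theorem together with Propositions \ref{prop:2} and \ref{lem:4.3}, and your derivation of the threshold $k \geq 5 + \frac{3}{m-1}$ (equivalently $\frac{k}{k-1} \leq \frac{5m-2}{4m-1}$) with the case bookkeeping for $m=2$, $m=3$, and $m \geq 4$ is precisely the omitted arithmetic. Your handling of the strict versus non-strict inequalities and your appeal to the existence of a minimal counterexample of Type I or IR are both sound.
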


We define a minimal counterexample of Type IR1 as follows. If the Coffman-Sethi conjecture is false, a minimal counterexample of Type IR1 is a minimal counterexample of Type IR that has an LD schedule with the following property:
Every machine with a completion time after rank $k$ equal to the makespan has a job with processing time equal to $\lambda_{k}$ in rank $k$, where $k$ denotes the number of ranks.

\begin{lemma} \label{IR1_EXISTS} If the Coffman-Sethi conjecture is false, then there exists a minimal counterexample to the conjecture of Type IR1, and every minimal counterexample of Type IR is a minimal counterexample of Type
IR1.\end{lemma}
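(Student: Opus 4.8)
The plan is to reduce the entire statement to a single structural claim about one fixed minimal counterexample. First I note that the second assertion contains the first: by Corollary \ref{cor:IR} a minimal counterexample of Type IR exists, so once I show that every minimal counterexample of Type IR already has the IR1 property, that counterexample is of Type IR1, and it is automatically minimal among Type IR1 counterexamples (a strictly smaller Type IR1 counterexample would be a strictly smaller Type IR counterexample, contradicting its Type IR minimality, since IR1 instances form a subclass of IR instances and the minimality order is identical). Thus both clauses follow from the central claim that \emph{every minimal counterexample of Type IR satisfies the IR1 property.}

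To prove this claim I fix a minimal counterexample $P$ of Type IR, recall from (Property.2) that $\mu_k=0$ in the last rank, and observe that since the reversed, left-justified LD schedule has no idle time, each machine's completion time after rank $k$ equals its total load; hence $t_{LD}$ is the maximum machine load. I argue by contradiction: suppose some machine $M$ attaining $t_{LD}$ carries in rank $k$ a job of processing time $\tau_M<\lambda_k$. The modification I will use is to decrease, by one unit (legitimate since the data are integral), the single rank-$k$ job of value $\lambda_k$ that LD places on the machine $M_1$ of smallest profile after rank $k-1$. Because $M$'s load is untouched and I have only lowered another machine's load, the LD makespan stays equal to $t_{LD}$ and is still attained at $M$; Lemma \ref{NONDECREASING_PROFILE} is the tool that certifies the LD makespan cannot drop below $t_{LD}$.

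The effect on the optimum splits into two cases. If the optimal makespan is unchanged, then the new instance has the same $m$, $k$, number of nonzero jobs and the same ratio $t_{LD}/t^{*}>\tfrac{5m-2}{4m-1}$, but strictly smaller total processing time, contradicting minimality criterion (v). If instead the optimal makespan drops (to at most $t^{*}-1$), then the ratio strictly increases to at least $t_{LD}/(t^{*}-1)$; after re-rectangularizing the new optimal schedule via Lemma \ref{lem:CS1} (which leaves the optimum unchanged and can only raise $t_{LD}$) I obtain a Type IR counterexample with the same $m$ and $k$ but a strictly worse ratio, contradicting criterion (iv) (or criterion (iii), should the reduced job become zero, i.e. when $\lambda_k=1$). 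Either way $P$ cannot be minimal, so no such machine $M$ exists and $P$ has the IR1 property.

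The step I expect to be the real obstacle is keeping the modified instance inside Type IR while respecting the \emph{hierarchical} minimality criteria. Two points need care. First, re-rectangularization in the second case raises rank-$1$ processing times, which can alter the count of nonzero jobs and thereby decide whether criterion (iii) or (iv) is the one actually violated; this bookkeeping must be arranged so that some criterion is genuinely contradicted. Second, when $\lambda_k$ is tied with other rank-$k$ values, decreasing one copy re-sorts rank $k$ and could in principle reassign jobs under the LD matching, so I must verify, by choosing which tied copy to lower and checking the matching on the unchanged profile slots, that the load of the makespan machine $M$, and hence $t_{LD}$, is genuinely preserved. Handling these two technicalities is where the weight of the argument lies; the arithmetic of the two cases above is otherwise routine.
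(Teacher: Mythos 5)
Your reduction of the lemma to the single claim that every minimal Type IR counterexample has the IR1 property matches the paper's structure, and your tie-handling analysis of the rank-$k$ matching is essentially sound (with integer data, replacing one copy of $\lambda_k$ by $\lambda_k-1$ changes the sorted rank-$k$ list in exactly one slot, so the makespan machine $M$, whose job is $<\lambda_k$, keeps its job and its load). But your case analysis on the optimum contains a genuine gap, and in fact the case you dispose of via criterion (v) is the \emph{only} case that can occur, while the case you handle correctly is vacuous. Since $P$ is of Type R, the total processing time equals $mt^{*}$; after you decrease a single rank-$k$ job by one unit the total is $mt^{*}-1$, so every schedule has makespan at least $\left\lceil (mt^{*}-1)/m \right\rceil = t^{*}$ by integrality. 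Hence the optimal makespan never drops and your second case is empty. In the remaining case your modified instance is a Type I counterexample with the same ratio and smaller sum --- but it is no longer of Type R (the formerly rectangular optimal schedule now has one machine finishing at $t^{*}-1$), and $P$ is minimal only among \emph{Type IR} counterexamples, so criterion (v) cannot be invoked against it. Repairing this by re-rectangularizing (raising the rank-$1$ job on the deficient machine by one unit) restores the total processing time to its original value; the resulting Type IR instance then has the same ratio unless the rank-$1$ increase happens to raise $t_{LD}$, which Lemma \ref{NONDECREASING_PROFILE} does not guarantee (it gives non-decrease only). So no minimality criterion is strictly violated and the contradiction evaporates. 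A side remark: Lemma \ref{NONDECREASING_PROFILE} concerns \emph{increases} in processing times, so it cannot, as you claim, ``certify the LD makespan cannot drop'' under your decrease; your direct matching argument is what does that work.

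The paper avoids this dead end with a modification that forces the optimum to drop in lockstep with the LD value: subtract one unit from \emph{every} job in rank $(k-1)$ and from every rank-$k$ job of value $\lambda_k$ (some rank-$k$ job is strictly smaller, since $\mu_k=0$ by (Property.2)). Then every machine in the rectangular optimal schedule loses $1$ or $2$ units, and at least one loses exactly $1$, so the optimum becomes $t^{*}-1$; meanwhile the non-IR1 makespan machine of the LD schedule carries a rank-$k$ job $<\lambda_k$ and so loses exactly $1$, giving LD makespan $t_{LD}-1$ with the same assignment still a valid LD schedule. Rectangularizing via the rank-$1$ increase (now legitimately supported by Lemma \ref{NONDECREASING_PROFILE}) yields a Type IR instance with ratio at least $(t_{LD}-1)/(t^{*}-1) > t_{LD}/t^{*}$, contradicting minimality through criterion (iv) entirely within Type IR. If you wish to keep a ``touch only rank $k$'' strategy, you must couple it with a simultaneous subtraction in a higher rank; that coupling is precisely what the paper's two-rank subtraction accomplishes.
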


\begin{proof} Suppose the Coffman-Sethi conjecture is false.
Then, by Corollary \ref{cor:IR}, a counterexample of Type IR
exists. Suppose, for a contradiction, that there exists a minimal
counterexample P1 of Type IR that is not Type IR1.
Now, apply the following two-step process.

\noindent
Step 1: Construct a new problem instance P2 as follows. Subtract $1$ time unit from the processing time of every job in rank $(k-1)$. Also subtract $1$ time unit from the processing time of every job in rank $k$ that has a processing
time of $\lambda_{k}$. Note that, for a minimal counterexample, all processing times in rank $k$ are not equal, therefore there exist processing times in rank $k$ that are less than $\lambda_{k}$. Leave these processing times
unchanged. Note that the assignment of jobs in each rank to machines in an LD schedule remains unchanged. Thus, P2 has an LD
schedule with objective value $(t_{LD}-1)$, where $t_{LD}$ is the objective value of the LD schedule for P1.
Let $t^{*}$ denote the optimal makespan of problem instance P1.  Then, the problem P2 has an optimal makespan that is equal to $(t^{*} - 1)$.\\
Step 2: For every job in rank $1$ of the optimal schedule for P2 that is processed on a machine with a completion time after rank $k$ that is less than the makespan, increase the processing time so that the completion time after
rank $k$ becomes equal to the makespan.\\

This produces a problem instance P2R of Type IR. By Lemma \ref{NONDECREASING_PROFILE}, the LD makespan of P2R cannot be less than the LD makespan of P2.  So, it is
at least $(t_{LD}-1)$.  The optimal makespan for P2R is $(t^*-1)$ by construction.  Thus, P2R gives a strictly worse approximation ratio than P1, a contradiction to the minimality of P1.
Therefore, P1 (as well as any other minimal counterexample of type IR) is a minimal counterexample of Type IR1.
\end{proof}

We define a problem instance, a counterexample and a minimal counterexample of \emph{Type I1} as follows. A problem instance of \emph{Type I1} is a problem instance of Type I that has an LD schedule with the following properties:
\begin{enumerate}[(i)]
\item
It has only one machine $i'$ with a completion time after rank $k$ equal to the makespan.\\
\item
Machine $i'$ has a processing time equal to $\lambda_{k-1}$ in rank $(k-1)$ and $\lambda_k$ in rank $k$.
\end{enumerate}

If the Coffman-Sethi conjecture is false, a counterexample to the conjecture of \emph{Type I1} is a counterexample of Type I that has an LD schedule with the
above-mentioned properties.

\begin{lemma}
\label{lem:I1}
If the Coffman-Sethi conjecture is false, then there exists a minimal counterexample to the conjecture of Type I1. \end{lemma}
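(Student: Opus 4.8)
The plan is to start from a counterexample already known to exist and then massage it until both defining properties of Type I1 hold; the existence of a \emph{minimal} Type I1 counterexample then follows automatically from the well-ordering of the minimality criteria. Concretely, assume the conjecture is false. By Corollary \ref{LD_TYPE_I} a Type I counterexample exists, and by Lemma \ref{IR1_EXISTS} we may even take a minimal counterexample of Type IR1; I would use the latter as the starting point, because in Type IR1 every machine whose completion time after rank $k$ equals the makespan already carries a job of size $\lambda_k$ in rank $k$, which is precisely the second half of property (ii). Since the cases $k\le 2$ are settled (Proposition \ref{prop:2}) we may assume $k\ge 3$, and by the preceding theorem (every minimal counterexample of Type IR or I satisfies $t_{LD}/t^* < k/(k-1)$) we also have that strict bound available for the unit-sized accounting below.

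Next I would pin down the makespan machine. Recall the LD rule places the largest rank-$k$ job $\lambda_k$ on the machine of smallest profile after rank $k-1$; combined with the Type IR1 property, the makespan equals $a_m^{(k-1)}+\lambda_k$ and is attained at a machine realizing $\min_i a_i^{(k-1)}$. It remains to force this machine to carry $\lambda_{k-1}$ in rank $k-1$ (the first half of property (ii)) and to be unique (property (i)). For the rank-$(k-1)$ requirement I would run an exchange argument inside the freedom left by LD: among worst-case LD schedules I would choose one that breaks profile ties so that the minimum-profile machine after rank $k-1$ is the one that received $\lambda_{k-1}$; where no tie is available, I would instead modify the instance, raising the makespan machine's rank-$(k-1)$ job up to $\lambda_{k-1}$ (so that it may be relabelled as the largest job of that rank) and re-rectangularizing through rank $1$ as in Lemma \ref{lem:CS1}. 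Lemma \ref{NONDECREASING_PROFILE} guarantees that such an increase does not lower any profile of the LD schedule, hence does not decrease $t_{LD}$, so the goal is to show this can be done without raising $t^*$ enough to pull the ratio down to the conjectured bound.

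For uniqueness (property (i)) I would break a tie among makespan machines by an integer perturbation in the spirit of the proofs of Lemma \ref{IR1_EXISTS} and the preceding theorem: subtract $1$ from the rank-$(k-1)$ jobs and from the $\lambda_k$-valued rank-$k$ jobs of all but one chosen makespan machine. Using Lemma \ref{NONDECREASING_PROFILE} to track the optimum and the strict inequality $t_{LD}/t^* < k/(k-1)$ to absorb the unit changes, this lowers the competing machines' completion times below that of the retained one while keeping the ratio above $\frac{5m-2}{4m-1}$, leaving a single machine at the makespan. The resulting instance is integer, admits an LD schedule with exactly one makespan machine carrying $\lambda_{k-1}$ and $\lambda_k$ in the last two ranks, and still violates the conjecture, so it is a Type I1 counterexample; a minimal one then exists by the minimality well-ordering.

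I expect the main obstacle to be the rank-$(k-1)$ half of property (ii). The LD placement of $\lambda_{k-1}$ (largest $(k-1)$-job onto the smallest profile after rank $k-2$) need not coincide with the machine of smallest profile after rank $k-1$, since adding the largest job to the smallest base can destroy that machine's minimality; making the two coincide must be engineered by an exchange together with a processing-time modification that simultaneously respects the rank constraints $\mu_{r}=\lambda_{r+1}$ of (Property.2), preserves integrality, does not decrease $t_{LD}$, and does not increase $t^*$. Verifying that a single modification meets all four requirements at once---rather than repairing one property while breaking another---is the delicate point, and I would handle it by combining the monotonicity of Lemma \ref{NONDECREASING_PROFILE} with the rectangular-optimum reduction of Lemma \ref{lem:CS1}, reducing the remaining obligation to a short case check on the last two ranks.
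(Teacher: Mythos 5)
There is a genuine gap, and it sits exactly where you flag the difficulty: forcing the makespan machine to carry $\lambda_{k-1}$ in rank $(k-1)$. The paper does not obtain this by tie-breaking within the LD algorithm's freedom, nor by raising the makespan machine's rank-$(k-1)$ job. Instead it argues by subtraction and minimality: from the minimal Type IR1 counterexample P1 it builds P2 by subtracting $1$ from \emph{every} job in rank $(k-2)$ and from every rank-$(k-1)$ job equal to $\lambda_{k-1}$ (possible since in a minimal counterexample not all rank-$(k-1)$ times are equal), notes that rectangularity makes the optimal makespan drop by exactly $1$, re-rectangularizes as in Lemma \ref{lem:CS1} to get a Type IR instance P2R with optimum $t^*-1$, and then invokes the minimality of P1 among Type IR counterexamples: if some makespan machine of P1 had a rank-$(k-1)$ job strictly below $\lambda_{k-1}$, the LD value of P2R would be at least $t_{LD}-1$, giving ratio $(t_{LD}-1)/(t^*-1) > t_{LD}/t^*$ and contradicting minimality. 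Hence \emph{every} makespan machine of P1 already carries $\lambda_{k-1}$ (and $\lambda_k$, by IR1), with no modification of the instance required. Your proposed repair runs in the wrong direction: raising the rank-$(k-1)$ job to $\lambda_{k-1}$ and re-rectangularizing keeps $t_{LD}$ from dropping by Lemma \ref{NONDECREASING_PROFILE}, but that same monotonicity applies to the optimal schedule, and $t^*$ can genuinely increase; you state the obligation ("without raising $t^*$ enough") but supply no mechanism to discharge it, whereas the paper's subtraction gives \emph{exact} control of $t^*$ precisely because the optimum is rectangular.

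Your uniqueness step is also not sound as stated. Subtracting $1$ from "the rank-$(k-1)$ jobs and the $\lambda_k$-valued rank-$k$ jobs of all but one chosen makespan machine" is not a well-defined instance modification: processing times belong to the job multiset, not to machines, so after the subtraction LD re-runs on the new multiset and may redistribute the jobs entirely; nothing pins the retained machine at the makespan, and nothing prevents new ties. In the paper, uniqueness falls out of the previous step: once every makespan machine is known to carry both $\lambda_{k-1}$ and $\lambda_k$, fix one such machine $i'$ and \emph{delete} the rank-$k$ jobs of the other makespan machines. Deletion leaves $t_{LD}$ and the load of $i'$ unchanged, cannot increase $t^*$, and strictly lowers the competitors (since $\lambda_k>0$ in a minimal counterexample), producing a Type I1 counterexample --- at the price that the optimum need no longer be rectangular, which is exactly why Type I1 is defined over Type I rather than Type IR. Your final appeal to well-ordering to pass from "a Type I1 counterexample exists" to "a minimal one exists" is fine, and your choice of Type IR1 as the launch point matches the paper; but the two substantive steps of the argument are left as unproven intentions in your sketch.
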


\begin{proof} Suppose the Coffman-Sethi conjecture is false.
Then, by Lemma \ref{IR1_EXISTS}, there exists a minimal counterexample of Type IR1.  Call this instance P1.
Now, construct a new problem instance P2 as follows. Subtract $1$ time unit from the processing time of every job in rank $(k-2)$. Also subtract $1$ time unit from the processing time of every job in rank $(k-1)$ that has a
processing time of $\lambda_{k-1}$. Note that, for a minimal counterexample, all processing times in rank $(k-1)$ are not equal, therefore there exist processing times in rank $(k-1)$ that are less than $\lambda_{k-1}$. Leave these
processing times unchanged. Note that the assignment of jobs in each rank to machines in the LD schedule remains unchanged, except possibly in rank $k$. Problem instance P1 had an optimal rectangular schedule. Reducing the
processing time of every job in rank $(k-2)$ by $1$ leaves the rectangular property unchanged and results in a reduction of $1$ in the optimal makespan. A further reduction of $1$ in one or more, but not all, jobs in rank $(k-1)$
results in no further reduction in the optimal makespan. Thus, problem P2 has an optimal makespan equal to $(t^{*} - 1)$. The new LD objective value is either $(t_{LD}-1)$ (if  there is at least one machine whose completion time
originally equaled $t_{LD}$ and after the modification of the processing times, now equals $(t_{LD}-1)$ since in rank $(k-1)$ it had a job with processing time less than $\lambda_{k-1}$)
or $(t_{LD}-2)$ (if every machine, with completion time equal to $t_{LD}$ for P1, had a job in rank $(k-1)$ with processing time $\lambda_{k-1}$).  We will show below that the former case cannot happen.

Now, construct a problem instance P2R of Type IR by adding $1$ unit to the processing time of every job in rank $1$ that is performed on a machine with completion time after rank $k$ in the optimal schedule that is less than the
makespan. The optimal makespan of P2R is equal to the optimal makespan of P2, $(t^*-1)$.  Since P1 is a minimal counterexample of Type IR, using Lemma \ref{NONDECREASING_PROFILE} and the above argument,
we deduce that the makespan of the LD schedule for problem instance P2R is $(t_{LD}-2)$.
Thus, in the LD schedule for P1, every machine with completion time equal to $t_{LD}$ had a job in rank $(k-1)$ with processing time $\lambda_{k-1}$.  By Lemma \ref{IR1_EXISTS},
every machine with completion time equal to $t_{LD}$ has jobs with processing times $\lambda_{k-1}$ and $\lambda_k$ in ranks $(k-1)$ and $k$ respectively.  Pick one these machines, call it $i'$.

Construct a new counterexample P3 as follows. In the LD schedule for P1, for every machine $i \neq i'$ with a completion time after rank $k$ equal to the makespan, delete the job in rank $k$. The makespan and the set of jobs
assigned to $i'$ in the LD schedule for P3 are the same as those in the LD schedule for P1. The makespan of the optimal schedule for P3 is less than or equal to the makespan of the optimal schedule for P1. However, the optimal
schedule for P3 may not be rectangular. P3 is clearly a counterexample to the conjecture of Type I1. It follows that there exists a minimal counterexample to the conjecture of Type I1. \end{proof}

We define a problem instance, a counterexample and a minimal counterexample of \emph{Type I2} as follows. A problem instance of \emph{Type I2} is a problem instance of Type I1 with the following property
in an LD schedule: The machine $i'$ with a completion time equal to the makespan has a processing time equal to $\lambda_r$ in rank $r$ for every $r \in \{2,3, \ldots, k\}$.

If the Coffman-Sethi conjecture is false, a counterexample to the conjecture of \emph{Type I2} is a counterexample of Type I1 with the above-mentioned property.

\begin{lemma} \label{I2_EXISTS} If the Coffman-Sethi conjecture is false, then there exists a minimal counterexample to the conjecture of Type I2. \end{lemma}

\begin{proof} Suppose that the Coffman-Sethi conjecture is false.
We proceed by induction on the number of ranks for which the claim holds.  The base case is given
by Lemma \ref{lem:I1}. The induction hypothesis is that there exists a minimal counterexample P1 of Type I1 that has an LD schedule with a machine $i'$ which has a processing time equal to $\lambda_r$ in rank
$r$ for every $r \in \{h, h+1, \ldots, k\}$, where $h \geq 3$. If machine $i'$ has a job in rank $(h-1)$ with processing time $\lambda_{h-1}$ then we are done;
otherwise, construct P2 from P1 by subtracting $1$ time unit from the processing time of every job in rank $(h-2)$, and subtracting $1$
time unit from the processing time of every job in rank $(h-1)$ that has a processing time of $\lambda_{h-1}$. Leave the remaining processing times unchanged.
Note that either a new LD schedule assigns the same jobs to machine $i'$, or another machine $i''$ which had the same completion time as machine $i'$ after rank $(h-1)$ and a job with processing time
$\lambda_{h-1}$ now has a completion time one less and has the jobs originally scheduled on machine $i'$ for ranks $r \in \{h, h+1, \ldots, k\}$. In the former case, P2 has a strictly
worse approximation ratio than P1, a contradiction.  Therefore, we must be in the latter case.  In the latter case, go back to instance P1.  After rank $(h-1)$, both machines $i'$ and $i''$
have the same completion time.  For the ranks $h, h+1, \ldots, k$, swap all jobs on the machines $i'$ and $i''$.  We still have an
LD schedule for P1 with makespan $t_{LD}$; moreover, machine $i''$ is the unique machine with completion time equal to the makespan.
Finally, the processing times on machine $i''$ are $\lambda_{h-1}, \lambda_h, \lambda_{h+1}, \ldots, \lambda_k$ as desired.
It follows that there exists a minimal counterexample to the conjecture of Type I2. \end{proof}

\begin{lemma} \label{MU_1_IN_RANK_1} If the Coffman-Sethi conjecture is false, in a minimal counterexample of Type I2, the sole machine $i'$ with a completion time after rank $k$ equal to the makespan
in the LD schedule has a processing time equal to $\mu_1$ in rank $1$. \end{lemma}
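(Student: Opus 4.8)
The plan is to lean on the one feature of the LD rule that controls rank~$2$: at the moment the second rank is scheduled the current profile is exactly the sorted list of rank~$1$ processing times, so the machine carrying the \emph{smallest} rank~$1$ job (completion time $\mu_1$) is precisely the one matched with the \emph{largest} rank~$2$ job, namely $\lambda_2$. Since the instance is of Type~I2, machine $i'$ receives $\lambda_2$ in rank~$2$; if the rank~$1$ minimum and the rank~$2$ maximum were attained uniquely, this would already force the rank~$1$ job of $i'$ to equal $\mu_1$, and the whole content of the lemma would be to discharge the tied configurations. Accordingly I would argue by contradiction, supposing that the rank~$1$ job on $i'$ has some value $v$ with $v>\mu_1$.

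First I would expose the tie pattern that this hypothesis forces. Reading the LD matching as ``a smaller current completion receives a not-smaller job,'' a machine whose rank~$1$ job equals $\mu_1$ sits at the bottom after rank~$1$ and therefore draws $\lambda_2$; keeping track of the constant gap $v-\mu_1$ and using the monotonicity of Lemma~\ref{NONDECREASING_PROFILE}, an induction on the ranks shows that this $\mu_1$-machine stays below $i'$ at every stage and hence also collects $\lambda_3,\dots,\lambda_k$, ending at $\mu_1+\sum_{r=2}^{k}\lambda_r=t_{LD}-(v-\mu_1)$. In particular every rank $r\in\{2,\dots,k\}$ must contain at least two copies of $\lambda_r$, and the excess $v-\mu_1$ can survive only through these ties. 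Note too that rank~$1$ is assigned on an all-zero profile and is therefore free, so relabelling merely permutes which of these twin machines is the unique maximum; the excess cannot be removed by reshuffling and must be attacked through the data.

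The contradiction should then come from minimality. I would form $P_2$ from the minimal Type~I2 counterexample by subtracting $1$ from the single rank~$1$ job residing on $i'$; this is legitimate because $v-1\ge\mu_1=\lambda_2$ (by Property.2), so the rank constraint and integrality are preserved. Since $v-1\ge\mu_1$, the doubled-$\lambda_r$ tie pattern isolated above is untouched, so in the new LD schedule $i'$ still draws $\lambda_2,\dots,\lambda_k$ and remains the maximum machine; hence $t_{LD}$ drops by exactly~$1$. If I can match this with a drop of exactly~$1$ in $t^*$, then $(t_{LD}-1)/(t^*-1)>t_{LD}/t^*$ (using $t_{LD}>t^*$ on any counterexample) exhibits a counterexample with the same $k$, the same $m$, and the same number of nonzero jobs but a strictly worse ratio, contradicting minimality criterion~(iv); the borderline case $v-1=\mu_1$, where uniqueness of the maximum is momentarily lost, is repaired by the deletion step already used in the proof of Lemma~\ref{lem:I1}.

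The delicate point --- and the step I expect to be the main obstacle --- is exactly the claim that $t^*$ falls by~$1$ rather than remaining put. Lowering one job by~$1$ can decrease $t^*$ by at most~$1$, and it fails to decrease it at all unless the reduced job lies on a column that is tight at $t^*$ in the rank-permutation (Coffman--Yannakakis) optimal schedule. The real work is therefore to show that under the tie structure forced in the second paragraph the value-$v$ rank~$1$ job can be routed onto a critical column: because $v>\lambda_2\ge\lambda_r$ for every $r\ge 2$, this job dominates everything outside rank~$1$, and the presence of two machines that each soak up $\lambda_2,\dots,\lambda_k$ should leave the optimal schedule with no slack to hide it. Turning this intuition into an exchange argument that places $J$ on a tight column is the crux; once it is secured, the ratio strictly increases under the reduction and the minimality contradiction closes the proof.
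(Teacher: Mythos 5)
Your structural analysis is sound and matches the paper's up to the decisive moment: you correctly observe that LD matches $\lambda_2$ to the machine carrying $\mu_1$, so if the rank-$1$ job $v$ on $i'$ exceeds $\mu_1$, there must be a twin machine $i''$ that stays below $i'$ throughout and collects exactly $\mu_1, \lambda_2, \lambda_3, \ldots, \lambda_k$, forcing duplicated $\lambda_r$'s in every rank $r \geq 2$. But from there your argument has a genuine gap, and you have named it yourself: the reduction of $v$ by one unit gives a contradiction with minimality only if $t^*$ also drops by exactly $1$, and nothing in your tie-pattern analysis forces the $v$-job onto a column that is tight in some optimal schedule. A Type I2 minimal counterexample need not have a rectangular optimal schedule (rectangularity is lost in the passage from Type IR1 to Type I1), so the optimal schedule can have slack on whichever machine carries the $v$-job; in that case $t^*$ is unchanged, $t_{LD}$ falls by $1$, the ratio strictly \emph{decreases}, and neither minimality criterion (iv) nor (v) is violated. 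The ``exchange argument that places $J$ on a tight column'' that you defer is not a technicality to be secured later --- it is the whole difficulty, and there is no evident reason it should succeed.

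The paper avoids this problem entirely by exploiting the twin machine $i''$ in a different way: instead of perturbing the data, it deletes all $k$ jobs on $i''$ together with the machine $i''$ itself. This leaves the LD schedule on the remaining machines (in particular on $i'$) untouched, so $t_{LD}$ is unchanged; and an exchange argument inside an optimal schedule of the original instance --- swap each $\lambda_r$ into rank $r$ of the machine carrying $\mu_1$, which can lengthen only that machine, then delete it --- produces a feasible schedule for the smaller instance with makespan at most $t^*$. The result is a counterexample with the same $k$ and strictly fewer machines, contradicting minimality at hierarchy level (ii). Note the asymmetry with your plan: deletion only requires the one-sided bound $t^*(\mathrm{P2}) \leq t^*(\mathrm{P1})$, which the swap argument delivers for free, whereas your unit-subtraction requires exact control of the decrease in $t^*$, which is precisely what cannot be guaranteed. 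If you want to salvage your writeup, replace the third and fourth paragraphs with the machine-deletion step; your first two paragraphs then serve, essentially verbatim, as the setup the paper uses.
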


\begin{proof} Suppose the Coffman-Sethi conjecture is false.  Then, by the previous lemma, there exist one or more minimal counterexamples to the conjecture of Type I2. Let P1 denote one of these minimal counterexamples. If, in the
LD schedule for P1, there exists only one machine with a processing time equal to $\lambda_{r}$ in rank $r$ for $r \in \{2,3, \ldots, k\}$, the lemma clearly holds.
Assume that there exists a set of two or more machines with a processing time equal to $\lambda_{r}$ in rank $r \in \{2,3, \ldots, k\}$. We may assume that machine $i'$
has a job with processing time strictly greater than $\mu_1$ in rank 1.  Then, there exists machine $i''$ which has processing time
$\mu_1$ in rank one, then by the definition of the LD algorithm, machine $i''$ must have the processing times:
$\mu_1, \lambda_2, \lambda_3, \ldots, \lambda_k$ respectively.  We delete all the jobs on machine $i''$ and delete the machine $i''$
to generate a new instance P2.

Note that the LD schedule for P2 is unchanged on machine $i'$ and all the other machines except $i''$.  Therefore,
$t_{LD}$ is unchanged.  Next, we prove that the optimal makespan for P2 is no larger than that for P1.  Consider an
optimal schedule $\S^*$ for P1.  A machine has the jobs with processing times
$\mu_1, p_2, p_3, \ldots, p_k$ respectively. Clearly, $p_r \leq \lambda_r$, for every $r \in \{2,3, \ldots, k\}$.
If the equality holds throughout, then we are done.  Otherwise, for each $r$ that the inequality is strict,
we find the job with processing time $\lambda_r$ in rank $r$ and swap it with $p_r$.  These operations may increase
the completion time on machine $i_1$; however, they will not increase it on any other machine.  At the end, we delete
the machine (with the processing times $\mu_1, \lambda_2, \lambda_3, \ldots, \lambda_k$).
What remains is
a feasible schedule for instance P2 whose makespan is at most the optimal makespan for P1.

We repeat the above procedure, until there exists only one machine in the LD schedule with a processing time equal to $\lambda_{r}$ in rank $r \in \{2,3, \ldots, k\}$.
From the mechanics of the LD algorithm, it follows that no other
machine has a smaller processing time in rank $1$. This completes the proof.

\end{proof}

\begin{lemma} \label{SMALLEST_COMPLETION_TIME} If the Coffman-Sethi conjecture is false, then in the LD schedule for a minimal counterexample of Type I2, the smallest completion time after rank $k$ on any machine is
at least
\[t_{LD} - \displaystyle{\max_{r \in \{2, 3, \ldots ,k\}} \left\{ \lambda_r - \mu_r \right\}}.
\]
\end{lemma}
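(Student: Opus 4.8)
The plan is to reduce the claim to a single step of the LD algorithm, namely the transition from rank $k-1$ to rank $k$, exploiting the defining feature of a Type I2 (equivalently, Type I1) counterexample: the unique makespan machine $i'$ carries the job $\lambda_k$ in the last rank. Throughout I use the LD profile $b(\ell)$ with $b_1(\ell) \ge b_2(\ell) \ge \cdots \ge b_m(\ell)$, so that $b_m(\ell)$ is the smallest completion time after rank $\ell$, and $\tau_{i,k}$ for the $i$th largest rank-$k$ processing time (so $\tau_{1,k}=\lambda_k$, $\tau_{m,k}=\mu_k$). First I would record the trajectory of machine $i'$: since $i'$ attains the makespan we have completion time $t_{LD}$ after rank $k$, and since the rank-$k$ job on $i'$ has processing time exactly $\lambda_k$, the completion time of $i'$ after rank $k-1$ equals $t_{LD}-\lambda_k$.

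The crux is then an ``inversion'' observation about where $\lambda_k$ lands. By the LD matching rule the largest rank-$k$ job $\lambda_k$ is assigned to the machine with the smallest completion time after rank $k-1$, i.e.\ to $b_m(k-1)$. Because $i'$ receives $\lambda_k$, machine $i'$ must be (tied for) the least-loaded machine after rank $k-1$, so $b_m(k-1)=t_{LD}-\lambda_k$ and hence every machine has completion time at least $t_{LD}-\lambda_k$ after rank $k-1$. Finally, in rank $k$ every machine receives a job of size at least $\mu_k=0$ (Property.2), so by Lemma \ref{NONDECREASING_PROFILE} no completion time decreases and every machine still has completion time at least $t_{LD}-\lambda_k$ after rank $k$. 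Since $\lambda_k=\lambda_k-\mu_k \le \max_{r\in\{2,\ldots,k\}}\{\lambda_r-\mu_r\}$, the smallest completion time after rank $k$ is at least $t_{LD}-\max_{r\in\{2,\ldots,k\}}\{\lambda_r-\mu_r\}$, which is the assertion.

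The step I expect to be the main obstacle is the inversion observation itself: one must notice that the most-loaded (makespan) machine after rank $k$ was the \emph{least}-loaded machine immediately before rank $k$, a fact forced purely by its receiving $\lambda_k$ under LD, together with the tie-breaking care needed to conclude $b_m(k-1)=t_{LD}-\lambda_k$ rather than merely $b_m(k-1)\le t_{LD}-\lambda_k$. A natural but weaker alternative would be to bound the spread $b_1(\ell)-b_m(\ell)$ inductively; an anti-sorting argument gives $b_1(\ell+1)-b_m(\ell+1)\le \max\{\,b_1(\ell)-b_m(\ell),\ \lambda_{\ell+1}-\mu_{\ell+1}\,\}$, but unrolling this picks up the rank-$1$ spread $\lambda_1-\mu_1$, which is precisely the term excluded from the statement's maximum. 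This is why the direct last-step argument, which never touches rank $1$, is the preferable route, and it also explains why the maximum in the statement ranges only over $r\in\{2,\ldots,k\}$.
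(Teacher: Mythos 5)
There is a genuine gap, and it is exactly at the step you flagged as the crux: the ``inversion'' observation is false. Under the LD matching rule, the fact that $i'$ receives \emph{a} job of processing time $\lambda_k$ does not imply $i'$ is (tied for) least-loaded after rank $k-1$, because several rank-$k$ jobs may share the value $\lambda_k$; minimality only guarantees that not \emph{all} rank-$k$ processing times are equal, so for $m \geq 3$ rank $k$ can contain two jobs of size $\lambda_k$ alongside the zero job. Concretely, take $m=3$, $k=3$, rank-1 times $(6,6,6)$, rank-2 times $(6,6,1)$, rank-3 times $(1,1,0)$ (consistent with Property.2). LD yields profile $(12,12,7)$ after rank 2; in rank 3 one job of size $1$ goes to the machine at $7$ and the other to a machine at $12$, giving final completion times $13,12,8$. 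The unique makespan machine carries $\mu_1=6$, $\lambda_2=6$, $\lambda_3=1$ --- all the structural properties of Type I2 and of Lemma \ref{MU_1_IN_RANK_1} --- yet it was \emph{not} least-loaded after rank $k-1$ (the machine at $7$ was), and the smallest final completion time is $8 = t_{LD} - (\lambda_2-\mu_2) < t_{LD}-\lambda_3 = 12$. So both your claims ``$b_m(k-1)=t_{LD}-\lambda_k$'' and ``every machine finishes at or above $t_{LD}-\lambda_k$'' fail as deductions from LD mechanics plus the Type I2 properties (while the lemma's bound holds with equality here). There was also a warning sign you could have caught internally: your conclusion is strictly stronger than the lemma and contradicts Lemma \ref{LEADING_AFTER_RANK_k}, which asserts that in a minimal Type I2 counterexample some machine finishes strictly below $c_{i'}(k-1)=t_{LD}-\lambda_k$; indeed the proof of Theorem \ref{PROOF_FOR_THREE_MACHINES} uses precisely the consequence $\max_r(\lambda_r-\mu_r) > \lambda_k$. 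Were your one-step argument sound, combining it with Lemma \ref{LEADING_AFTER_RANK_k} would instantly rule out all minimal Type I2 counterexamples and prove the whole conjecture.

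The paper's proof is a whole-schedule comparison with $i'$, not a last-step argument. For each machine $i \neq i'$ it considers the first ``crossing'' rank $r_i$, the smallest rank at which $i$'s completion time drops below $i'$'s. Lemma \ref{MU_1_IN_RANK_1} ($i'$ carries $\mu_1$ in rank 1) forces $r_i \geq 2$ --- this, rather than avoiding rank 1 by looking only at rank $k$, is how the rank-1 spread $\lambda_1-\mu_1$ is excluded from the maximum --- and uniqueness of the makespan machine forces $r_i \leq k$. At the crossing the deficit of $i$ relative to $i'$ is at most $\lambda_{r_i}-\mu_{r_i}$, since $i$ receives at least $\mu_{r_i}$ while $i'$ receives $\lambda_{r_i}$; and for every later rank $r$, since $i$ lies below $i'$ and $i'$ receives the largest job $\lambda_r$, the LD matching forces $i$ to receive a job of size exactly $\lambda_r$ as well, so the deficit is \emph{frozen} through rank $k$, giving $c_i(k) \geq t_{LD} - (\lambda_{r_i}-\mu_{r_i})$; the case $r_i=k$ gives $c_i(k) \geq c_i(k-1) \geq t_{LD}-\lambda_k = t_{LD}-(\lambda_k-\mu_k)$. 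This frozen-deficit mechanism is the idea missing from your proposal: machines that fall behind at an earlier rank $r_i<k$ (like the machine at $7$ in the example above) can end with deficits larger than $\lambda_k$, which is exactly why the statement's bound is the maximum of $\lambda_r-\mu_r$ over all $r \in \{2,\ldots,k\}$ rather than $\lambda_k$ alone.
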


\begin{proof} Suppose the Coffman-Sethi conjecture is false.
Then, by Lemma \ref{I2_EXISTS}, there exists a minimal counterexample of Type I2. By Lemma \ref{MU_1_IN_RANK_1}, in a minimal counterexample of Type I2,
the sole machine $i'$ with a completion time after rank $k$ equal to the makespan has a processing time equal to $\mu_1$ in rank $1$.
For $i \in\{1,2, \ldots, m \}$, $i \neq i'$, let $r_i$ denote the smallest value of $r \in \{1,2, \ldots, k\}$, for which the completion time
after rank $r_i$ on machine $i$ is less than the completion time after rank $r_i$ on machine $i'$. By Lemma \ref{I2_EXISTS}, it follows that there exists a value $r_i \leq k$ for all $i \neq i'$.
Lemma \ref{MU_1_IN_RANK_1} implies that $r_i \geq 2$. There are two possible cases:
\begin{enumerate}[(a)]
\item
$r_i = k$. In this case, the completion time after rank $k$ on machine $i$ is greater than or equal to $t_{LD} - \lambda_k$.
\item
$r_i < k$. In this case, from the mechanics of the LD algorithm, it is evident that the processing time of the job on machine $i$ in rank $(r_i+1)$
and all of the following ranks must be equal to $\lambda_r$.
Therefore, the completion time after rank $k$ on machine $i$ is greater than or equal to $t_{LD} - \left( \lambda_{r_i} - \mu_{r_i} \right)$.
\end{enumerate}
This completes the proof of the lemma. \end{proof}

\begin{lemma} \label{LEADING_AFTER_RANK_k-1} If the Coffman-Sethi conjecture is false, then in the LD schedule for any minimal counterexample of Type I2, there exists at least one value of $i''$ that satisfies $i'' \neq i'$, for which
the completion time after rank $(k - 1)$ on machine $i''$ is greater than or equal to the completion time after rank $(k - 1)$ on machine $i'$, where $i'$ denotes the sole machine with a completion time after rank $k$ equal to the
makespan.\end{lemma}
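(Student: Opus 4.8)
The plan is to read off the conclusion directly from the mechanics of the LD algorithm, using only the structural description of $i'$ supplied by the earlier lemmas. By Lemma \ref{I2_EXISTS} a minimal counterexample of Type I2 exists; fix one together with its associated LD schedule. By the defining property of Type I1 (which Type I2 inherits), the unique makespan machine $i'$ carries a job of processing time $\lambda_k$ in rank $k$. Since $\lambda_k$ is by definition the largest processing time occurring in rank $k$, machine $i'$ is precisely the machine to which the LD rule assigns the largest rank-$k$ job.

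First I would invoke the LD assignment rule from Section 2.2: when rank $k$ is scheduled, its largest processing time is matched with $a_m(k-1)$, the smallest completion time in the profile after rank $(k-1)$. Because machine $i'$ receives this largest job $\lambda_k$, machine $i'$ must occupy the bottom position of the profile after rank $(k-1)$; equivalently, the completion time of $i'$ after rank $(k-1)$ equals $\min_i a_i(k-1)$, the smallest completion time over all machines. Consequently every machine $i'' \neq i'$ has completion time after rank $(k-1)$ at least as large as that of $i'$. Since a minimal counterexample necessarily has $m \geq 2$ (for $m=1$ the LD schedule is optimal, so no counterexample exists), at least one machine $i'' \neq i'$ is available, and it satisfies the claimed inequality, which is exactly the assertion of the lemma.

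There is no genuine obstacle in this argument; its entire content is the single observation that the machine receiving the largest rank-$k$ job is the one with the smallest completion time after rank $(k-1)$. The only point requiring mild care is tie-breaking at the bottom of the profile after rank $(k-1)$: if several machines share the minimal completion time, the LD rule places $\lambda_k$ on one of them, and one simply notes that $i'$ is then among the machines attaining this minimum, so its completion time still equals $\min_i a_i(k-1)$. In fact the argument yields the stronger statement that \emph{every} machine $i'' \neq i'$ satisfies the inequality; the existence form stated in the lemma is all that is needed in the subsequent analysis, and the non-strictness of the inequality is exactly what accommodates the tie case in which some $i''$ also attains the minimal completion time after rank $(k-1)$.
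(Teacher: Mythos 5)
Your core observation---that LD matches the largest rank-$k$ job with the smallest completion time in the profile after rank $(k-1)$---is correct, but your proof has a genuine gap: it silently assumes that the $\lambda_k$ job on machine $i'$ is \emph{the} largest rank-$k$ job in the matching. Rank $k$ may contain several jobs of processing time $\lambda_k$, say $t$ of them; nothing in the definition of Type I2 makes the maximum unique. In that case LD only places $i'$ among the $t$ machines occupying the bottom $t$ positions of the profile after rank $(k-1)$, not at the strict bottom, so your claim that the completion time of $i'$ equals $\min_i a_i(k-1)$---and a fortiori your ``stronger statement'' that \emph{every} $i'' \neq i'$ satisfies the inequality---is false in general. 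Your tie-breaking remark only treats ties among \emph{completion times}, not ties among \emph{processing times}, which is the substantive case. For $1 \leq t < m$ the existence version still follows from your observation (any machine in one of the top $m-t$ profile positions serves as $i''$), but in the extreme case $t = m$, when all rank-$k$ jobs equal $\lambda_k$, the conclusion can genuinely fail for an arbitrary Type I2 instance: every machine receives a $\lambda_k$ job, so $i'$ may perfectly well have the strictly largest completion time after rank $(k-1)$, and then no $i''$ exists. Since your argument never uses minimality beyond the bare existence of a minimal counterexample, it cannot exclude this case, and the lemma is simply not a pure ``mechanics of LD'' fact.

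Excluding $t = m$ is exactly where the paper's proof differs and where minimality enters. The paper argues by contradiction: if every $i'' \neq i'$ finishes rank $(k-1)$ strictly earlier than $i'$, then the LD matching forces every other machine to receive a rank-$k$ job at least as large as the job on $i'$, which is $\lambda_k$; hence all rank-$k$ jobs equal $\lambda_k$, and deleting rank $k$ reduces $t_{LD}$ and $t^*$ by the same amount $\lambda_k$, yielding a counterexample with fewer ranks and a strictly larger ratio $(t_{LD}-\lambda_k)/(t^*-\lambda_k)$, contradicting minimality. To repair your proof you must add an argument of this kind (or invoke Property.2, which gives $\mu_k = 0$, together with a deletion argument showing $\lambda_k > 0$ in a minimal counterexample, so that not all rank-$k$ jobs are equal and $t < m$); as written, the decisive case is missing.
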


\begin{proof} Assume that the claim of the lemma does not hold (we are seeking a contradiction). Therefore, there exists a minimal counterexample of Type I2 for which every machine $i \neq i'$ has a completion time after rank
$(k-1)$ that is less than the completion time after rank $(k-1)$ on machine $i'$. From the mechanics of the LD algorithm, it is evident that every machine $i \neq i'$ has a processing time in rank $k$ that is greater than or equal
to the processing time of the job assigned to machine $i'$ in rank $k$. Hence every machine has a job with processing time $\lambda_k$ assigned to it in rank $k$. Therefore all jobs in rank $k$ can be removed from the
counterexample to obtain a smaller counterexample of Type I2 with a larger $t_{LD}/t^*$ ratio. This contradicts the assumption that the original counterexample was a minimal counterexample of Type I2.\end{proof}

\begin{lemma} \label{LEADING_AFTER_RANK_k} If the Coffman-Sethi conjecture is false, then in the LD schedule for any minimal counterexample of Type I2, there exists at least one value of $i''$ that satisfies $i'' \neq i'$, for which the
completion time after rank $k$ on machine $i''$ is less than the completion time after rank $(k - 1)$ on machine $i'$, where $i'$ denotes the sole machine with a completion time after rank $k$ equal to the makespan.\end{lemma}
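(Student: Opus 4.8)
The plan is to argue by contradiction: assume the conclusion fails, so that every machine $i'' \neq i'$ satisfies $b_{i''}(k) \geq b_{i'}(k-1)$ (writing $b_i(r)$ for the completion time of machine $i$ after rank $r$ in the LD schedule), and derive a violation of the counterexample hypothesis $t_{LD}/t^* > (5m-2)/(4m-1)$. The first step is to pin down $b_{i'}(k-1)$ exactly. Since the instance is of Type I2, machine $i'$ carries the rank-$k$ job of size $\lambda_k$ and, by Lemma \ref{MU_1_IN_RANK_1}, the rank-$1$ job of size $\mu_1$; as the unique makespan machine it has $b_{i'}(k) = t_{LD}$, so $b_{i'}(k-1) = t_{LD} - \lambda_k$. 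The negated conclusion then reads $b_{i''}(k) \geq t_{LD} - \lambda_k$ for every $i'' \neq i'$.

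The engine of the contradiction is an averaging bound. Summing completion times after the final rank over all $m$ machines gives the total processing time, which is at most $m t^*$ because the optimal makespan dominates the average machine load. Combining with the negated conclusion,
\[ m t^* \geq \sum_i b_i(k) \geq t_{LD} + (m-1)(t_{LD}-\lambda_k) = m t_{LD} - (m-1)\lambda_k, \]
so $t^* \geq t_{LD} - \tfrac{m-1}{m}\lambda_k$. Feeding this into the \emph{strict} counterexample inequality $t_{LD} > \tfrac{5m-2}{4m-1} t^*$ makes the $(m-1)$ factors cancel and yields $t^* < \tfrac{4m-1}{m}\lambda_k$. (The tight example at the start of the paper, whose ratio equals the bound and for which no machine dips below $b_{i'}(k-1)$, shows that strictness here is essential.)

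To close the contradiction I would bound $t^*$ from below using the rank structure. Because $\mu_r \geq \lambda_{r+1} \geq \lambda_k$ for every $r \leq k-1$, each of the $m$ machines in an optimal schedule carries one job of size at least $\lambda_k$ from each of the ranks $1,\ldots,k-1$, whence $t^* \geq (k-1)\lambda_k$; here $\lambda_k > 0$ in a minimal counterexample, else rank $k$ could be deleted. Together with $t^* < \tfrac{4m-1}{m}\lambda_k$ this forces $k-1 < 4 - \tfrac1m < 4$, i.e. $k \leq 4$, so the standing assumption is untenable as soon as $k \geq 5$ and the lemma holds in that range.

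The main obstacle is the residual small-rank regime $k \in \{3,4\}$, where the crude bound $t^* \geq (k-1)\lambda_k$ no longer contradicts $t^* < (4m-1)\lambda_k/m$. There I would abandon averaging in favour of the finer structure of a minimal Type I2 counterexample: since $\mu_k = 0 < \lambda_k$ and $i'$ is the \emph{unique} makespan machine, the value $\lambda_k$ must be attained by at least two rank-$k$ jobs (otherwise the machine receiving $\lambda_k$ would be the strict minimizer of $b_{\cdot}(k-1)$ and, as all rank-$k$ jobs are nonnegative, no machine could fall below $b_{i'}(k-1)$). I would then track the machine $i_0$ holding the genuinely smallest completion time after rank $k-1$ and show, via the LD matching rule together with the sharpened bound $t^* \geq \lambda_2 + \cdots + \lambda_k$, that $b_{i_0}(k)$ drops strictly below $t_{LD}-\lambda_k$; if instead this gap cannot be forced, a rank-$k$ deletion argument in the spirit of Lemma \ref{LEADING_AFTER_RANK_k-1} would contradict minimality. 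This tie-analysis for $k=4$ is the delicate step, and it is also where the complementary Lemma \ref{SMALLEST_COMPLETION_TIME} is designed to be paired with the present one, the two together yielding $\max_{r}(\lambda_r - \mu_r) > \lambda_k$.
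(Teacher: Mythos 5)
Your opening is exactly the paper's argument: from the negated conclusion every $i''\neq i'$ has completion time at least $b_{i'}(k-1)=t_{LD}-\lambda_k$, averaging gives $t^{*}\geq t_{LD}-\frac{m-1}{m}\lambda_k$, and the strict counterexample inequality yields $t^{*}<\left(4-\frac{1}{m}\right)\lambda_k$, which is the paper's inequality \eqref{A1}. But from there you have a genuine gap. First, your rank-structure lower bound is off by one: the machine carrying the job of size $\lambda_k$ in rank $k$ of an optimal schedule also carries one job of size at least $\mu_r\geq\lambda_k$ from each rank $r\in\{1,\ldots,k-1\}$, so $t^{*}\geq k\lambda_k$, not merely $(k-1)\lambda_k$. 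With the sharper bound, $k\lambda_k\leq t^{*}<\left(4-\frac1m\right)\lambda_k$ already kills every $k\geq 4$ (this is precisely how the paper disposes of $k\geq 4$); with your weaker bound you are left with the two unresolved cases $k\in\{3,4\}$ instead of one.

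Second, and more seriously, for the residual cases you offer only a plan, not a proof, and the plan as stated does not close: ``show that $b_{i_0}(k)$ drops strictly below $t_{LD}-\lambda_k$'' is the conclusion of the lemma itself, and no mechanism is supplied that forces it; the claim that $\lambda_k$ must be attained by at least two rank-$k$ jobs is neither justified nor obviously usable. The paper's $k=3$ case is instead a short concrete computation: by Lemma \ref{MU_1_IN_RANK_1}, Type I2, and (Property.2), machine $i'$ carries $\mu_1=\lambda_2$, $\lambda_2$, $\lambda_3$, so $t_{LD}=2\lambda_2+\lambda_3$; the machine holding the size-$\lambda_3$ job in rank $3$ of an optimal schedule gives $t^{*}\geq\mu_1+\mu_2+\lambda_3=\lambda_2+2\lambda_3$; the counterexample inequality $\frac{2\lambda_2+\lambda_3}{\lambda_2+2\lambda_3}>\frac{5m-2}{4m-1}$ then forces $\lambda_2>\left(2-\frac1m\right)\lambda_3$, and substituting into the averaging bound $t^{*}\geq 2\lambda_2+\frac{\lambda_3}{m}$ gives $t^{*}\geq\left(4-\frac1m\right)\lambda_3$, contradicting \eqref{A1}. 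Without this (or an equivalent) argument for small $k$, your proposal proves the lemma only for $k\geq 5$. A minor further point: Lemma \ref{SMALLEST_COMPLETION_TIME} is proved before and independently of the present lemma, so it is available here, but the pairing you invoke ($\max_r(\lambda_r-\mu_r)>\lambda_k$) is how the two lemmas are used later in Theorem \ref{PROOF_FOR_THREE_MACHINES}, not an ingredient of this proof.
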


\begin{proof} Assume that the claim of the lemma does not hold (we are seeking a contradiction) and there exists a minimal counterexample of Type I2 for which every machine $i \neq i'$ has a completion time after rank $k$ that is
greater than or equal to the completion time after rank $(k-1)$ on machine $i'$. A minimal counterexample must have at least $3$ ranks. We first prove the lemma for the $k \geq 4 $ case and then prove the lemma for the $k = 3$
case.  We have $t^* \geq t_{LD} - \lambda_k + \frac{\lambda_k}{m}$.   Supposing that we have a counterexample, we deduce

\bea
  t^* & > & \left(\frac{5m-2}{4m-1}\right)t^* - \left(\frac{m-1}{m}\right)\lambda_k \nonumber\\
  \label{A1} t^* & < & \left(4 - \frac{1}{m}\right)\lambda_k.
\eea
For $k \geq 4$, $t^* > 4 \lambda_k$, we reached a contradiction.  This completes the proof of the lemma for $k  \geq 4$.

For $k = 3$, there exists at least one machine in the optimal schedule with a job with processing time of $\lambda_3$ in the third rank.
Therefore,
\[
t^* \geq \mu_1 + \mu_2 + \lambda_3.
\]
From the preceding lemmas, it follows that $t_{LD} = 2\lambda_2 + \lambda_3.$
For a counterexample, we must have
\[ \frac{t_{LD}}{t^*} > \frac{5m-2}{4m-1}.
\]
Therefore,
\[
\frac{2\lambda_2 + \lambda_3}{\lambda_2 + 2\lambda_3} > \frac{5m-2}{4m-1}.
\]
It follows that
\bea
  \label{B1} \lambda_2 & > & \left(2 - \frac{1}{m}\right) \lambda_3.
\eea
Utilizing the inequality $t^* \geq t_{LD} - \lambda_k + \frac{\lambda_k}{m},$
and the fact $t_{LD}=2 \lambda_2+\lambda_3$, we deduce
\bea
  \label{C1} t^* & \geq & (2\lambda_2 + \lambda_3) - \lambda_3 + \frac{\lambda_3}{m}.
\eea
From inequalities \eqref{B1} and \eqref{C1}, it follows that
  \bea \label{D1} t^* & \geq & \left(4 - \frac{1}{m}\right)\lambda_3.
\eea
From inequalities \eqref{A1} and \eqref{D1}, we have a contradiction for $k = 3$.
This completes the proof of the lemma. \end{proof}

\begin{theorem} \label{PROOF_FOR_THREE_RANKS} The Coffman-Sethi conjecture holds for $k$ equal to $3$.\end{theorem}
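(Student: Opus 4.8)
The plan is to assume the conjecture fails for $k=3$ and to derive a contradiction by synthesizing the structural lemmas of this section with one averaging bound on $t^*$. By Lemma \ref{I2_EXISTS}, if the conjecture is false there is a minimal counterexample of Type I2 with three ranks; fix one. First I would pin down its LD makespan. By Lemma \ref{MU_1_IN_RANK_1} the unique makespan machine $i'$ carries processing time $\mu_1$ in rank $1$, and by the definition of Type I2 it carries $\lambda_2$ and $\lambda_3$ in ranks $2$ and $3$. Since (Property.2) gives $\mu_1=\lambda_2$, this yields $t_{LD}=\mu_1+\lambda_2+\lambda_3=2\lambda_2+\lambda_3$. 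We may assume $\lambda_3>0$, since otherwise rank $3$ is vacuous and the instance reduces to the $k\leq 2$ case already settled in Proposition \ref{prop:2}.

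Next I would record two inequalities. The first is the averaging bound already exploited inside the proof of Lemma \ref{LEADING_AFTER_RANK_k}: because $i'$ holds the largest rank-$3$ job $\lambda_3$, the LD matching rule forces $i'$ to have the smallest completion time after rank $2$, so every machine has completion time at least $t_{LD}-\lambda_3=2\lambda_2$ after rank $2$. Summing over the $m$ machines and adding the job of size $\lambda_3$ in rank $3$ shows the total work is at least $2m\lambda_2+\lambda_3$, whence
\[
t^*\ \geq\ 2\lambda_2+\frac{\lambda_3}{m}.
\]
The second is a structural constraint on $\lambda_2,\lambda_3$. Lemma \ref{SMALLEST_COMPLETION_TIME}, combined with $\mu_2=\lambda_3$ and $\mu_3=0$ from (Property.2), says the smallest completion time after rank $3$ is at least $(2\lambda_2+\lambda_3)-\max\{\lambda_2-\lambda_3,\ \lambda_3\}$, while Lemma \ref{LEADING_AFTER_RANK_k} produces a machine $i''\neq i'$ whose completion time after rank $3$ lies strictly below the rank-$2$ completion time $2\lambda_2$ of $i'$. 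Chaining these two facts gives $(2\lambda_2+\lambda_3)-\max\{\lambda_2-\lambda_3,\ \lambda_3\}<2\lambda_2$, i.e.\ $\lambda_3<\max\{\lambda_2-\lambda_3,\ \lambda_3\}$, which is possible only if $\lambda_2>2\lambda_3$.

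Finally I would close the case. At a counterexample $t_{LD}/t^*>(5m-2)/(4m-1)$, so $(5m-2)\,t^*<(4m-1)(2\lambda_2+\lambda_3)$; substituting $t^*\geq 2\lambda_2+\lambda_3/m$ and simplifying the resulting linear inequality (the factor $(m-1)$ cancels between numerator and denominator, leaving the threshold $\tfrac{2m-1}{m}$) reduces it to $\lambda_2<(2-\tfrac{1}{m})\lambda_3$, which flatly contradicts $\lambda_2>2\lambda_3$. Hence no three-rank counterexample exists. I expect the main difficulty to be pure bookkeeping rather than a new idea: verifying that the one-variable inequality in $\lambda_2/\lambda_3$ collapses exactly to the boundary $2-\tfrac{1}{m}$ (so that the gap between $2\lambda_3$ and $(2-\tfrac1m)\lambda_3$ is precisely what forces the contradiction, matching the tightness of the extremal example), and making sure the hypothesis of the averaging bound---that $i'$ attains the minimum profile after rank $2$---is read off from the Type I2 structure rather than assumed. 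All the genuinely hard work has been front-loaded into Lemmas \ref{MU_1_IN_RANK_1}--\ref{LEADING_AFTER_RANK_k}, so this theorem is essentially their short synthesis.
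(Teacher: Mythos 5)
There is a genuine gap, and it is the averaging bound at the heart of your argument. The inequality $t^*\geq t_{LD}-\lambda_3+\lambda_3/m=2\lambda_2+\lambda_3/m$ is \emph{not} available for a Type I2 minimal counterexample: in the paper it appears only inside the proof of Lemma \ref{LEADING_AFTER_RANK_k}, under the reductio hypothesis that every machine $i\neq i'$ finishes rank $k$ at or above the rank-$(k-1)$ completion time of $i'$ --- precisely the hypothesis that lemma refutes. Your premise for it, that $i'$ has the smallest completion time after rank $2$ so that every machine sits at least at $2\lambda_2$, is false in general because rank $3$ contains several jobs of size $\lambda_3$ (Property.2 gives $\mu_3=0$, and in the paper's analysis all machines of $M_2$ receive a $\lambda_3$-job), so $i'$ receiving a $\lambda_3$-job does not pin it to the minimum of the rank-$2$ profile; $i'$ is in fact the \emph{largest} such machine. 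Worse, your proposal is internally inconsistent: the very Lemma \ref{LEADING_AFTER_RANK_k} you invoke in the next step produces a machine $i''$ whose completion time after rank $3$ (a fortiori after rank $2$) is strictly below $2\lambda_2$, directly contradicting the premise of your averaging bound. Your other two ingredients are fine: $t_{LD}=2\lambda_2+\lambda_3$ (via Lemma \ref{MU_1_IN_RANK_1} and the I2 structure) and the deduction $\lambda_2>2\lambda_3$ from chaining Lemmas \ref{SMALLEST_COMPLETION_TIME} and \ref{LEADING_AFTER_RANK_k} are both correct, and the second is a clean observation.

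The gap also cannot be patched by substituting the best valid averaging bound. Given $\lambda_2>2\lambda_3$, Lemma \ref{SMALLEST_COMPLETION_TIME} bounds every completion time after rank $3$ below by $t_{LD}-(\lambda_2-\lambda_3)=\lambda_2+2\lambda_3$, so averaging yields only $t^*\geq\frac{1}{m}\left[(2\lambda_2+\lambda_3)+(m-1)(\lambda_2+2\lambda_3)\right]$; feeding this into $t_{LD}/t^*>(5m-2)/(4m-1)$ gives $\lambda_2>\frac{2(3m-1)}{3m-2}\lambda_3$, which is perfectly compatible with $\lambda_2>2\lambda_3$ --- no contradiction. This is exactly why the paper's proof does additional combinatorial work: it partitions the machines into $M_1$ (rank-$2$ completion $\geq$ that of $i'$) and $M_2$, uses Lemma \ref{LEADING_AFTER_RANK_k} to select $i_2\in M_2$ with rank-$1$/rank-$2$ loads $\alpha_1,\alpha_2$, compares $\beta_1+\beta_2\geq 2\lambda_2>\alpha_1+\alpha_2+\lambda_3$ for each machine in $M_1$, and concludes that every $M_1$-machine has either $\beta_2>2\lambda_3$ or $\beta_1>\lambda_2+\lambda_3$. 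A counting argument over the \emph{optimal} schedule then shows some machine must combine such large jobs with a $\lambda_3$-job or with each other, forcing $t^*>\lambda_2+3\lambda_3$; together with $t_{LD}=2\lambda_2+\lambda_3$, $m\geq 3$, and the opposing bound $t^*>\mu_1+\lambda_2+\mu_3=2\lambda_2$, this traps $\lambda_2/\lambda_3$ simultaneously above $28/9$ and below $11/4$, the desired contradiction. Some step of this kind, extracting information about how large jobs must collide in any schedule, appears unavoidable; your two correct ingredients alone do not close the case.
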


\begin{proof}
Suppose the statement of the theorem is false.
Consider a minimal counterexample of Type I2 with $k=3$.
Consider the LD schedule for this minimal counterexample.
Let $i'$ denote the sole machine with a completion time after rank $k$ equal to the makespan.
In the LD schedule, let $M_1$ denote the set of $m_1$ machines with a completion time after rank $(k-1)$ that is greater than or equal to the completion time after rank $(k-1)$ on machine $i'$. Note that set $M_1$ includes machine $i'$
and that machine $i'$ is the only machine in $M_1$ with a processing time of $\lambda_3$ in the third rank. The remaining $(m_1 - 1)$ machines have no job assigned to them in the third rank.

From Lemma \ref{LEADING_AFTER_RANK_k-1}, it is evident that $m_1 \geq 2$.
Let $M_2$ denote the set of $m_2 := m - m_1$ machines with a completion time after rank $(k-1)$ that is less than the completion time after rank $(k-1)$ on machine $i'$. Every machine in $M_2$ has a processing time of $\lambda_3$ in the
third rank. The total number of machines with $\lambda_3$ in the third rank is $(m -  m_1 + 1)$.

From Lemma \ref{LEADING_AFTER_RANK_k}, it is evident that $m_2 \geq 1$.
Select a machine $i_2$ in the set $M_2$ with a completion time after rank $k$ that is less than the completion time after rank $(k-1)$ on machine $i'$.
Let $\alpha_1$ denote the processing time in rank $1$ of machine $i_2$ and let $\alpha_2$ denote the processing time in rank $2$ of machine $i_2$.

For any machine $i_1$ in the set $M_1$:
Let $\beta_1$ denote the processing time in rank $1$ of machine $i_1$ and let $\beta_2$ denote the processing time in rank $2$ of machine $i_1$.
Note that $\beta_1 + \beta_2 \geq 2 \lambda_2 > \alpha_1 + \alpha_2 + \lambda_3$.

Two possible cases need to be considered.

\underline{Case 1:} $\beta_1 \leq \alpha_1$

In this case, $\beta_2 \geq 2\lambda_2 - \beta_1 > \alpha_1 + \alpha_2 + \lambda_3 - \beta_1$.
Hence,\\
$\beta_2 > \alpha_2 + \lambda_3 \geq \mu_2 + \lambda_3$.
Therefore, $\beta_2 > 2\lambda_3$.

\underline{Case 2:} $\beta_1 > \alpha_1$

From the dynamics of the LD algorithm, $\beta_2 \leq \alpha_2$.
In this case,\\
$\beta_1 \geq 2\lambda_2 - \beta_2 > \alpha_1 + \alpha_2 + \lambda_3 - \beta_2$,
whence $\beta_1 > \alpha_1 + \lambda_3 \geq \mu_1 + \lambda_3$.
Therefore, $\beta_1 > \lambda_2 + \lambda_3$.

It is evident that, in any schedule (including the optimal schedule),
one of the following must be true:\\
Case A: There exists a machine with a processing time of $\lambda_3$ in rank $3$ and a processing time greater than $2\lambda_3$ in rank $2$.\\
Case B: There exists a machine with a processing time of $\lambda_3$ in rank $3$ and a processing time greater than $(\lambda_2 + \lambda_3)$ in rank $1$.\\
Case C: There exists a machine with a processing time greater than $2\lambda_3$ in rank $2$ and a processing time greater than $(\lambda_2 + \lambda_3)$ in rank $1$.\\
In Case A, the makespan must be greater than $\lambda_3 + 2\lambda_3 + \mu_1$.\\
In Case B, the makespan must be greater than $\left[\lambda_3 + \mu_2 + (\lambda_2 + \lambda_3)\right]$.\\
In Case C, the makespan must be greater than $\left[2\lambda_3 + (\lambda_2 + \lambda_3)\right]$.

Thus, in all three cases, the makespan must be greater than $(\lambda_2 + 3\lambda_3)$.
Therefore, $t^* > \lambda_2 +3 \lambda_3.$  By Lemma~\ref{MU_1_IN_RANK_1}, $t_{LD} = 2\lambda_2+\lambda_3$.
Thus, we have
\[
\frac{5m-2}{4m-1} < \frac{t_{LD}}{t^*} < \frac{2\lambda_2 + \lambda_3}{\lambda_2 + 2\lambda_3}.
\]

In a minimal counterexample of Type I2, $m \geq 3$.  We have
\[
\frac{13}{11} < \frac{2\lambda_2 + \lambda_3}{\lambda_2 + 2\lambda_3}.
\]
Therefore,
\bea
\label{A2} \frac{\lambda_2}{\lambda_3} & > & \frac{28}{9}.
\eea
Also, note that, since $t^* > \mu_1+\lambda_2 + \mu_3$, we have
\[
\frac{13}{11} < \frac{t_{LD}}{t^*} < \frac{2\lambda_2 + \lambda_3}{\mu_1 + \lambda_2 + \mu_3}.
\]
Therefore,
\bea
\label{B2} \frac{\lambda_2}{\lambda_3} & < & \frac{11}{4}.
\eea
From inequalities \eqref{A2} and \eqref{B2}, we obtain a contradiction. This completes the proof of the theorem.
\end{proof}

\begin{theorem} \label{PROOF_FOR_THREE_MACHINES} The Coffman-Sethi conjecture holds for $m=3$.\end{theorem}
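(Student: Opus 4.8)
The plan is to reduce to a bounded number of ranks and then finish with the finite linear-programming scheme of Section~4, using the structural lemmas of this section to keep the case tree small. By Corollary~\ref{cor:CS-restricted-rank}, for $m=3$ it suffices to rule out counterexamples with $k\in\{4,5,6\}$: the cases $k\le 2$ are handled by Proposition~\ref{prop:2} and $k=3$ by Theorem~\ref{PROOF_FOR_THREE_RANKS}. So I would suppose, for a contradiction, that the conjecture fails for $m=3$, fix $k\in\{4,5,6\}$, and invoke Lemma~\ref{I2_EXISTS} to pass to a minimal counterexample of Type~I2.

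Next I would extract the exact form of the LD makespan from the structural results. By Lemma~\ref{MU_1_IN_RANK_1} the unique critical machine $i'$ carries $\mu_1=\lambda_2$ in rank~$1$ and $\lambda_r$ in every rank $r\ge 2$, so
\[
t_{LD} = 2\lambda_2 + \sum_{r=3}^{k}\lambda_r .
\]
Because $m=3$, each rank $r$ contributes only three data, namely $\lambda_r$, a single middle value $\nu_r$, and $\mu_r=\lambda_{r+1}$ (with $\mu_k=0$), so the whole instance is described by the $2k$ ordered variables $\lambda_2,\nu_2,\lambda_3,\nu_3,\dots,\lambda_k,\nu_k$ subject to the rank restriction $\lambda_r\ge\nu_r\ge\lambda_{r+1}$. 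Lemmas~\ref{SMALLEST_COMPLETION_TIME}, \ref{LEADING_AFTER_RANK_k-1}, and \ref{LEADING_AFTER_RANK_k} then control the two non-critical machines exactly as in the $k=3$ argument: each such machine either trails $i'$ by at most $\lambda_k$ or is forced to carry $\lambda_r$ from some rank onward, and following which machine must carry each large job in an arbitrary feasible schedule yields a valid lower bound $L$ on $t^*$ (the analogue of the bound $t^*>\lambda_2+3\lambda_3$ obtained for $k=3$).

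The finish is the LP-based approach of Section~4. For each admissible ordering of the $2k$ variables and each combinatorial type of optimal schedule compatible with the structural lemmas, I normalize the derived lower bound to $L=1$ and maximize $t_{LD}=2\lambda_2+\sum_{r\ge 3}\lambda_r$ subject to the resulting linear inequalities. Each such linear program is solved, and its optimum is certified by exhibiting matching primal and dual feasible solutions, as emphasized in Section~4; the assertion to be verified in every branch is that the optimum does not exceed $\frac{13}{11}$, contradicting $t_{LD}/t^*>\frac{13}{11}$.

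The main obstacle is this final step for $k=5$ and $k=6$: the middle values $\nu_r$ multiply both the number of feasible orderings and the number of optimal-schedule types, so the case tree grows quickly, and a crude average-load bound on $t^*$ will leave a gap just above $\frac{13}{11}$. The structural lemmas are precisely what keep the analysis finite and small, since they isolate $i'$, fix its loading, and saturate the other two machines with $\lambda_r$'s from some rank on. I expect the delicate part to be producing a lower bound on $t^*$ sharp enough to close every branch, which I would obtain by mirroring the $k=3$ endgame and bounding ratios such as $\lambda_2/\lambda_3$ (and the analogous $\lambda_r/\lambda_{r+1}$ at deeper levels) from both sides until the feasible region collapses.
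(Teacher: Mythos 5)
There is a genuine gap: your proof never actually closes. After reducing to $k\in\{4,5,6\}$ and passing to a Type~I2 minimal counterexample, you defer the entire burden to a family of linear programs that you do not set up or solve, and you yourself concede the decisive point --- that ``a crude average-load bound on $t^*$ will leave a gap just above $\frac{13}{11}$'' and that the ``delicate part'' of producing a sharp enough lower bound $L$ on $t^*$ remains to be found. No analogue of the $k=3$ endgame (the case analysis that gave $t^*>\lambda_2+3\lambda_3$ and the two-sided bounds on $\lambda_2/\lambda_3$) is exhibited for $k\ge 4$, and there is no argument that the branch enumeration over orderings of the $2k$ variables and optimal-schedule types is even tractable, let alone that every branch certifies a ratio at most $\frac{13}{11}$. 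As written, this is a research plan whose feasibility is asserted, not a proof.

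The paper's own proof shows the LP machinery is unnecessary here and, notably, needs no bound on $k$ at all. With $m=3$, Lemmas~\ref{LEADING_AFTER_RANK_k-1} and \ref{LEADING_AFTER_RANK_k} pin down the three machines in the LD schedule: $i'$ finishes at $t_{LD}$, the machine $i''$ that leads $i'$ after rank $k-1$ finishes at no less than $t_{LD}-\lambda_k$, and the third machine is covered by Lemma~\ref{SMALLEST_COMPLETION_TIME}. Summing completion times gives
\[
3t^* \;\geq\; t_{LD} + \left(t_{LD}-\lambda_k\right) + \left(t_{LD} - \max_{r\in\{2,\ldots,k\}}\left\{\lambda_r-\mu_r\right\}\right),
\]
and since the trailing machine forces $\max_r\{\lambda_r-\mu_r\}>\lambda_k$, the maximum occurs for $r\leq k-1$, where (Property.2) telescopes it to at most $\lambda_2-\lambda_k$. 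Hence $t^*\geq t_{LD}-\lambda_2/3$, so a counterexample ($t_{LD}/t^*>\frac{13}{11}$) forces $t^*<\frac{11}{6}\lambda_2$, contradicting $t^*\geq 2\lambda_2$. You had all the needed lemmas in hand; the missing idea is this averaging over the three machines combined with the telescoping bound $\lambda_r-\mu_r\leq\lambda_2-\lambda_k$, which collapses every $k$ simultaneously and makes the $k\in\{4,5,6\}$ reduction and the LP enumeration superfluous for $m=3$.
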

\begin{proof}
Consider a minimal counterexample of Type I2 with $m$ equal to $3$.
From Lemmas \ref{LEADING_AFTER_RANK_k-1} and \ref{LEADING_AFTER_RANK_k}, it follows that, for this counterexample, the following statements hold
for an LD schedule.
(i) There is one machine $i'$ with a completion time after rank $k$ that is equal the makespan.
(ii)There is one machine $i''$ with a completion time after rank $k$ that is less than the makespan and a completion time after rank $(k-1)$ that is greater than or equal to the completion time after rank $(k-1)$ on machine $i'$.
(iii) There is one machine $i'''$ with a completion time after rank $k$ that is less than the completion time after rank $(k-1)$ on machine $i'$.

From the above given statements and from Lemma \ref{SMALLEST_COMPLETION_TIME}, it follows that
\bea
\label{A3}
3t^* \geq t_{LD} + (t_{LD} - \lambda_k) +
\left(t_{LD} - \max_{r \in \{2, 3, \ldots ,k\}} \left\{ \lambda_r - \mu_r \right\}\right).
\eea
Considering machine  $i'''$ , and Lemma \ref{SMALLEST_COMPLETION_TIME}, it is clear that
\bea
\max_{r \in \{2, 3, \ldots ,k\}} \left\{ \lambda_r - \mu_r \right\} > \lambda_k. \nonumber
\eea
Therefore,
\bea
\max_{r \in \{2, 3, \ldots ,k\}} \left\{ \lambda_r - \mu_r \right\} = \max_{r \in \{2, 3, \ldots ,k-1\}} \left\{ \lambda_r - \mu_r \right\}.\nonumber
\eea
It follows that
\bea
\label{B3}
\max_{r \in \{2, 3, \ldots ,k-1\}} \left\{ \lambda_r - \mu_r \right\} \leq \lambda_2 - \mu_{k-1} = \lambda_2-\lambda_k.
\eea
From \eqref{A3} and \eqref{B3}, it follows that
\bea
3t^* \geq t_{LD} + (t_{LD} - \lambda_k) + \left[t_{LD} - (\lambda_2 - \lambda_k)\right].
\eea
Therefore,
\bea
\label{C3}
t^* \geq t_{LD} - \frac{\lambda_2}{3}.
\eea
For a counterexample to the Coffman-Sethi conjecture with $m$ equal to $3$, we must have
\bea
\label{D3}
\frac{t_{LD}}{t^*} > \frac{13}{11}.
\eea
From inequalities \eqref{C3} and \eqref{D3}, it follows that
\bea
\label{E3}
t^* < \frac{11}{6}\lambda_2.
\eea

For two or more ranks, $t^*$ must be at least equal to $\mu_1 + \lambda_2$ or $2\lambda_2$. Therefore, inequality \eqref{E3} cannot hold and we have a contradiction. This completes the proof of the lemma.
\end{proof}

Theorems \ref{PROOF_FOR_THREE_RANKS} and \ref{PROOF_FOR_THREE_MACHINES} allow us to strengthen the previous corollary.

\begin{corollary}
\label{cor:CS-restricted-rank2}
The Coffman-Sethi conjecture is true, if it is true for $m \geq 4$ and for $k \in \{4,5\}$.
\end{corollary}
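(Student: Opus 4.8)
The plan is to derive this corollary purely as a bookkeeping synthesis of the results already established, requiring no new scheduling analysis. First I would invoke Corollary~\ref{cor:CS-restricted-rank}, which already reduces the conjecture to the finitely many cases $k \in \{3,4,5,6\}$ and, more precisely, asserts that for $m \geq 4$ it suffices to settle $k \in \{3,4,5\}$, while for $m = 3$ it suffices to settle $k \in \{4,5,6\}$. The task is then to eliminate every case in this list except the pair $m \geq 4$ with $k \in \{4,5\}$.

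The eliminations proceed along two axes. Along the machine axis, I would dispose of the small values of $m$: Proposition~\ref{lem:4.2} settles $m=2$ completely, and Theorem~\ref{PROOF_FOR_THREE_MACHINES} settles $m=3$ completely, for every number of ranks. In particular, the entire block $m=3$, $k \in \{4,5,6\}$ flagged by Corollary~\ref{cor:CS-restricted-rank} is thereby resolved, so no case with $m \leq 3$ survives. Along the rank axis, I would use Theorem~\ref{PROOF_FOR_THREE_RANKS}, which establishes the conjecture for $k=3$ and every $m$; this removes $k=3$ from the list $k \in \{3,4,5\}$ that Corollary~\ref{cor:CS-restricted-rank} attaches to the range $m \geq 4$.

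Combining these observations, the only cases not yet accounted for are precisely $m \geq 4$ together with $k \in \{4,5\}$, which is exactly the reduction claimed. I do not anticipate a genuine obstacle here, since the analytic content lives entirely in the cited theorems and propositions; the corollary merely collates them. The one point deserving a little care is to confirm that the case split used is exhaustive: Corollary~\ref{cor:CS-restricted-rank} is phrased in terms of $m=3$ and $m \geq 4$ and does not separately address $m=2$, so Proposition~\ref{lem:4.2} must be cited explicitly to close that gap rather than left implicit.
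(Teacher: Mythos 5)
Your proposal is correct and takes essentially the same route as the paper, which proves this corollary in one line by combining Corollary~\ref{cor:CS-restricted-rank} with Theorem~\ref{PROOF_FOR_THREE_RANKS} (the $k=3$ case) and Theorem~\ref{PROOF_FOR_THREE_MACHINES} (the $m=3$ case). Your explicit citation of Proposition~\ref{lem:4.2} to close the $m=2$ case is a minor tidying of a step the paper leaves implicit in Corollary~\ref{cor:CS-restricted-rank}, not a different argument.
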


\section{Conclusion and Future Research}

We studied the structure of potential minimal counterexamples to Coffman-Sethi (1976) conjecture
and proved that the conjecture holds for the cases (i) $m=2$, (ii) $m=3$, and (iii) $k=3$. We also proved
that to establish the correctness of the conjecture,
it suffices to prove the conjecture for $m \geq 4$ and $k \in \{4,5\}$.  Moreover, for each $m \geq 4$, these
remaining cases $k \in \{4,5\}$ can be verified by employing our approach from Section 4 and the Appendix,
via solving a finite number of LP problems.

As a by-product of our approach, we introduced various techniques to analyze worst case
performance ratios.  These techniques may be useful in analyzing the performance
of approximation algorithms for other scheduling problems, or in general, other combinatorial
optimization problems of similar nature.

{\bf Acknowledgment:}
The work was supported in part by Discovery Grants from NSERC (Natural Sciences and Engineering Research Council of Canada) as well as
an Undergraduate Research Assistantship from NSERC.

\section{Appendix}

\subsection{Three-Machine Case}

For $r \in \{1,2, 3\}$, let $\lambda_r$, $\alpha_r$ and $\mu_r$ denote the processing times in rank $r$, where $\lambda_r \geq \alpha_r \geq \mu_r$.
Clearly, $\mu_1 = \lambda_2$ and $\mu_2 = \lambda_3$ and $\mu_3 = 0$ and either $\alpha_3 = \lambda_3$ or $\alpha_3 = \mu_3=0$.

The first two ranks of the LD schedule will look like
\[
\left(%
\begin{array}{cc}
  \lambda_1 & \lambda_3 \\
  \alpha_1 & \alpha_2 \\
  \lambda_2 & \lambda_2 \\
\end{array}%
\right).
\]
The third rank will fit depending on the length of processing times on the
first two ranks.  We will use case analysis to cover all possible LD schedules.

For all cases we will have $\lambda_1 \ge \alpha_1 \ge \lambda_2 \ge
\alpha_2 \ge \lambda_3 > 0$.  This results in the following 5 constraints.
\[
\begin {array}{cccccccr}
    -\lambda_1 & & & + \alpha_1 & & \le & 0 & (1)\\
    & \lambda_2 & & - \alpha_1 & & \le & 0  & (2)\\
    & -\lambda_2 & & & + \alpha_2 & \le & 0  & (3)\\
    & & \lambda_3 & & - \alpha_2 & \le & 0  & (4)\\
    & & -\lambda_3 & & & \le & 0 & (5)\\
\end {array}
\]

We begin by looking at the cases when $\alpha_3 = \lambda_3$.

Consider the case $\lambda_1 + \lambda_3 \ge \alpha_1 + \alpha_2 \ge 2\lambda_2$,
then we have the constraints
\[
\begin {array}{cccccccr}
    -\lambda_1 & & - \lambda_3 & + \alpha_1 & \alpha_2 & \le & 0 & (6)\\
    & 2\lambda_2 & & - \alpha_1 & - \alpha_2 & \le & 0 & (7)\\
\end {array}
\]

Further, the LD schedule will look like
\[
\left(%
\begin{array}{ccc}
  \lambda_1 & \lambda_3 & 0\\
  \alpha_1 & \alpha_2 & \lambda_3\\
  \lambda_2 & \lambda_2 & \lambda_3\\
\end{array}%
\right)
\]

$t_{LD} = \max (\lambda_1+\lambda_3 , \alpha_1 + \alpha_2 + \lambda_3)$.

If $t_{S^*} = \lambda_1 + \lambda_3$, then $t_{S^*}$ is equal to a lower bound and $\frac {t_{LD}}{t_{S^*}} = 1$.  So, we may assume $t_{S^*} = \alpha_1 + \alpha_2 + \lambda_3$.

Consider an optimal configuration for this problem. For the third job, the machine with $\lambda_1$ in the first rank must have a job in the third rank with processing time $0$, otherwise that machine will have processing time
at least
$\lambda_1 + 2\lambda_3 \geq \alpha_1 + \alpha_2 + \lambda_3 = t_{LD}$.  Since the machine with $\lambda_1$ in the frst rank has a job with processing time $0$ in the last rank, the machine with $\alpha_1$ in the first
rank has a job with processing time $\lambda_3$ in the third rank. In order for $t_{S^*}$ to be less than $t_{LD}$, the machine with $\alpha_1$ in the first rank must have a job with processing time $\lambda_3$ in the second
rank as well.  Thus, the
optimal configuration must be either
\[
\left(%
\begin{array}{ccc}
  \lambda_1 & \alpha_2 & 0\\
  \alpha_1 & \lambda_3 & \lambda_3\\
  \lambda_2 & \lambda_2 & \lambda_3\\
\end{array}%
\right) \mbox { or }
\left(%
\begin{array}{ccc}
  \lambda_1 & \lambda_2 & 0\\
  \alpha_1 & \lambda_3 & \lambda_3\\
  \lambda_2 & \alpha_2 & \lambda_3\\
\end{array}%
\right).
\]

We consider the case where the former is the optimal configuration.
Suppose $t_{S^*} = \lambda_1 + \alpha_2$.  Then, we have $\lambda_1 + \alpha_2 \geq \alpha_1 + 2\lambda_3$ and $\lambda_1 + \alpha_2 \geq 2\lambda_2 + \lambda_3$.

This leads to two more constraints
\[
\begin {array}{cccccccr}
    -\lambda_1 & & + 2\lambda_3 & + \alpha_1 & - \alpha_2 & \le & 0, & (8)\\
    -\lambda_1 & + 2\lambda_2 & + \lambda_3  & & - \alpha_2 & \le & 0. & (9)\\
\end {array}
\]
Since the processing times of the jobs can be scaled by any positive number
(here, we are no longer requiring integer or even rational processing times
in the input data), we let
$t_{S^*} = 1$.  We add the constraint
\[
\begin {array}{cccccr}
    \lambda_1 & + & \alpha_2 & \le & 1. & (10)\\
\end {array}
\]
We now consider the linear program
\[
\begin {array} {ll}
    \max & t_{LD} = \alpha_1 + \alpha_2 + \lambda_3 \\
    \textup{subject to:} & (1),\; (2),\; \cdots, (10).
\end {array}
\]
The solution will give the worst case ratio of $\frac {t_{LD}}{t_{S^*}}$ for this particular case.

Solving the LP yields objective value $\frac {9}{8}$ when
[$\lambda_1, \lambda_2, \lambda_3, \alpha_1, \alpha_2$] = $\frac {1}{8}$
[5, 3, 2, 4, 3].

We now suppose $t_{S^*} = \alpha_1 + 2\lambda_3$.  Constraints (8)--(10) are
now replaced with
\[
\begin {array}{cccccccr}
    \lambda_1 & & - 2\lambda_3 & - \alpha_1 & + \alpha_2 & \le & 0 & (8)\\
    & 2\lambda_2 & - \lambda_3 & - \alpha_1 & & \le & 0 & (9)\\
    & & 2\lambda_3 & + \alpha_1 & & \le & 1. & (10)\\
\end {array}
\]
This new LP also yields objective value $\frac {9}{8}$ when
[$\lambda_1, \lambda_2, \lambda_3, \alpha_1, \alpha_2$] = $\frac {1}{8}$
[5, 3, 2, 4, 3].

We now suppose $t_{S^*} = 2\lambda_2 + \lambda_3$.  Constraints (8)--(10) are
replaced with
\[
\begin {array}{cccccccr}
    \lambda_1 & - 2\lambda_2 & - \lambda_3 & & + \alpha_2 & \le & 0 & (8)\\
    & -2\lambda_2 & + \lambda_3 & + \alpha_1 & & \le & 0 & (9)\\
    & 2\lambda_2 & + \lambda_3 & & & \le & 1. & (10)\\
\end {array}
\]
Again the new LP yields objective value $\frac {9}{8}$ when
[$\lambda_1, \lambda_2, \lambda_3, \alpha_1, \alpha_2$] = $\frac {1}{8}$
[5, 3, 2, 4, 3].

We now consider the other possible configuration as the optimal configuration.
We consider each possibility for $t_{S^*}$ by modifying the
constraints (8)--(10) to accommodate each $t_{S^*}$.  For example, when
$t_{S^*} = \lambda_1 + \lambda_2$,  constraints (8) - (10) are
\[
\begin {array}{cccccccr}
    -\lambda_1 & - \lambda_2 & + 2\lambda_3 & + \alpha_1 & & \le & 0 & (8)\\
    -\lambda_1 & & + \lambda_3 & & + \alpha_2 & \le & 0 & (9)\\
    \lambda_1 & + \lambda_2 & & & & \le & 1. & (10)\\
\end {array}
\]

For all three LPs for this configuration, the optimal objective value is
$\frac{9}{8}$ which is attained when
[$\lambda_1, \lambda_2, \lambda_3, \alpha_1, \alpha_2$] = $\frac {1}{8}$
[5, 3, 2, 4, 3].

We have exhausted this case and now move to the case where
$\lambda_1+\lambda_3 \ge 2\lambda_2 \ge \alpha_1+\alpha_2$.
We modify the constraints (6) and (7) to reflect the inequalities
related to this case.

We continue in this manner and set up and solve LP problems for
every possible case for the $3$-machine, $3$-rank problem. In every case, the makespan ratio is less than or equal to $13/11$. This shows that the Coffman-Sethi conjecture holds for
the $3$-machine, $3$-rank problem. Further, an LD makespan of $13/11$ is actually attained in some cases, thus showing, one more time, that the Coffman-Sethi conjecture provides a tight bound for the $3$-machine, $3$-rank problem.
If the size of a hypothesised minimal counterexample to the Coffman-Sethi conjecture can be shown the satisfy $m \leq 3$ and $k \leq 3$, then the above analysis would imply that the Coffman-Sethi conjecture holds for all problem
instances.

\end{document}